\tikzset{mynode/.style={draw,circle,scale=0.7,inner sep=2pt,outer sep=0pt}}
\newcommand{\subscript}[2]{$#1 _ #2$}
\newcommand{\io}[0]{\underset{\mathrm{IO}}{\rightarrow}}
\newcommand{\lrio}[0]{\underset{\mathrm{IO}}{\leftrightarrow}}
\newtheorem*{rep@theorem}{\rep@title}
\newcommand{\newreptheorem}[2]{%
	\newenvironment{rep#1}[1]{%
		\def\rep@title{#2 \ref{##1}}%
		\begin{rep@theorem}}%
		{\end{rep@theorem}}}
\newtheorem{proposition}{Proposition}
\newtheorem{lemma}[proposition]{Lemma}
\newtheorem{definition}[proposition]{Definition}
\newtheorem{corollary}[proposition]{Corollary}
\def\B{\mathcal{B}}
\def\I{\mathcal{I}}
\def\S{\mathcal{S}}
\def\H{\mathcal{H}}
\def\P{\mathcal{P}}
\def\O{\mathcal{O}}
\def\D{\mathcal{D}}
\def\U{\mathcal{U}}
\def\F{\mathcal{F}}
\def\C{\mathbb{C}}
\def\R{\mathbb{R}}
\def\Tr{\operatorname{Tr}}
\def\mcs{\mathrm{mcs}}
\def\eg{\textit{e.g.} }
\def\mudown{\mu^\downarrow}
\def\croof{\mathrm{cr}}
\def\op{\mathrm{top}}
\def\cv{\mathrm{cv}}
\def\Ud{\mathcal{U}^{\mathrm{psd}}}
\def\Uo{\mathcal{U}^{\mathrm{psc}}}
\def\vc{\nu}
\newcommand\bigzero{\makebox(0,0){\text{\huge0}}}
\newcommand{\rvline}{\hspace*{-\arraycolsep}\vline\hspace*{-\arraycolsep}}
\DeclareMathOperator*{\argmin}{arg\,min}
\begin{document}

\title{Generalized coherence vector applied to coherence transformations and quantifiers}

\author{G. M. Bosyk}
\email{Corresponding author: gbosyk@gmail.com }
\affiliation{Universit\`{a} degli Studi di Cagliari, Cagliari, Italy}
\affiliation{Instituto de F\'isica La Plata, UNLP, CONICET, La Plata, Argentina}
\affiliation{Grupo de Matemática Aplicada, Departamento de Matemática, Universidad CAECE, CABA, Argentina}
\author{M. Losada}
\affiliation{Universit\`{a} degli Studi di Cagliari, Cagliari, Italy}
\affiliation{Universidad Nacional de C\'{o}rdoba, FAMAF, CONICET,  C\'ordoba, Argentina}
\author{C. Massri}
\affiliation{Instituto de Investigaciones Matemáticas ``Luis A. Santal\'o'', UBA, CONICET, CABA, Argentina}
\affiliation{Grupo de Matemática Aplicada, Departamento de Matemática, Universidad CAECE, CABA, Argentina}
\author{H. Freytes}
\affiliation{Universit\`{a} degli Studi di Cagliari, Cagliari, Italy}
\author{G. Sergioli}
\affiliation{Universit\`{a} degli Studi di Cagliari, Cagliari, Italy}
%


\begin{abstract}
	One of the main problems in any quantum resource theory is the characterization of the 
	conversions between resources by means of the free operations of the theory.
	In this work, we advance on this characterization within the quantum coherence resource theory by introducing the generalized coherence vector of an arbitrary quantum  state.
	The generalized coherence vector is a probability vector that can be interpreted as a concave roof extension of the pure states coherence vector.
	We show that it completely characterizes the notions of being incoherent, as well as being maximally coherent.
	Moreover, using this notion and the majorization relation, we obtain a necessary condition for the conversion of general quantum states by means of incoherent operations.
	These results generalize the necessary conditions of conversions for pure states given in the literature, and show that the tools of the majorization lattice are useful also in the general case.
	Finally, we introduce a family of coherence quantifiers by considering concave and symmetric functions applied to the generalized coherence vector. We compare this proposal with the convex roof measure of coherence and others quantifiers given in the literature.
\end{abstract}


\date{\today}

\maketitle

\section{Introduction}

Quantum coherence is one of the fundamental aspects of the quantum theory. It has practical relevance in numerous fields of quantum physics, particularly in quantum information processing.
Moreover, quantum coherence is considered as a quantum resource that can be converted, manipulated and quantified \cite{Aberg2006,Baumgratz2014}. 
It admits a resource-theoretic formulation in terms of incoherent states (free states), coherent states (resources) and incoherent operations (free operations).

Since coherence is a basis dependent concept, the three components of the resource-theoretic formulation have to be defined for a given incoherent basis.
In the standard approach, the incoherent basis is an orthonormal basis (see \eg \cite{Baumgratz2014,Streltsov2017}). 
There are also alternative resource-theoretic formulations based on non-orthonormal basis or positive-operator-valued measures (see \eg \cite{Theurer2017,Rastegin2018,Bischof2019}).

In this work, we follow the standard formulation. 
The incoherent states are diagonal in the incoherent basis, whereas coherent states have off-diagonal elements in this basis. 
Regarding the free operations, there is not a unique definition.
Several definitions, often motivated by their operational interpretations, have been introduced (see, e.g., \cite{Streltsov2017} for a review). 
In what follows, we restrict our attention to the definition of incoherent operation introduced in \cite{Baumgratz2014}.
Within this definition, quantum coherence cannot be created from any incoherent input state by means of incoherent operations, not even in a probabilistic way.

One of the main problems in any resource theory is characterizing the conversion between states by means of free operations \cite{Chitambar2019}.
In the quantum coherence case, this problem has been completely solved for 
incoherent transformations from pure to pure states (see Refs.~\cite{Du2015,Du2015b,Du2017,Chitambar2016,Zhu2017} or Lemma~\ref{lemma:pure_states_io}),
as well as for transformations from pure to mixed states (see  Refs.~\cite{Du2019,Zhu2017} or Prop.~\ref{lemma:theorem4du2019}). 
This characterization is given in terms of the majorization relation \cite{MarshallBook} between the coherence vectors of the pure states. 
Motivated by this fact, we propose a generalization of the coherence vector applicable to arbitrary quantum states, and we advance on the characterization of the state conversion by means of incoherent operations by appealing to the majorization lattice theory \cite{Bapat1991,Cicalese2002,Bosyk2019,Massri2020}.
More precisely, given a pure state decomposition of a quantum state, we define the coherence vector of the decomposition in terms of the coherence vectors of the pure states. Then, we define the coherence vector
of a general quantum state as the supremum (in terms of the majorization order relation) of the coherence vectors of all pure-state decompositions.
In this way, our proposal can be interpreted as a concave roof  extension of the pure state case.
Alternatively, the generalized coherence vector of an arbitrary state $\rho$ can be also defined as the supremum of all coherence vectors of the pure states that can be converted into  $\rho$ by means of an incoherent operation. 
Hence, our definition also acquires an operational meaning. 

We prove that the generalized coherence vector characterizes the notions of being incoherent, as well as being maximally coherent. 
In addition, we extend the necessary condition of Prop.~\ref{lemma:theorem4du2019} (see Refs.~\cite{Du2019,Zhu2017}) 
to the case of initial mixed states, which is also given in terms of the majorization relation of the corresponding coherence vectors. 
This result is a step forward on the characterization of conversions between general quantum states under incoherent operations, whose complete solution is only known for the single qubit system \cite{Streltsov2017b,Shi2017}. Indeed, in higher dimensions ($d \geq 4$), it was recently shown that a finite number of conditions in terms of coherence measures are not sufficient to fully characterize coherence transformations between general quantum states \cite{Du2019}. 
Thus, the complete characterization of the general case remains open.

Another main problem in any resource theory is to quantify the resource amount of any state \cite{Chitambar2016}. 
There are several coherence quantifiers and each of them captures different operational aspects of coherence, for instance, the distillable coherence, the coherence cost \cite{Yuan2015,Winter2016}, the relative entropy of coherence and the $\ell_1$-norm of coherence \cite{Baumgratz2014}, among others (see \eg \cite{Streltsov2017}). 
Providing new quantifiers of coherence is an ongoing topic in the resource theory of coherence.
A common strategy for obtaining a coherence quantifier is to define a suitable function on the pure states and then extend it to the entire set of quantum states.
The extension can be done in different ways. The most frequently used is the convex roof construction\cite{Du2015b,Zhu2017}, which was originally applied in the entanglement theory \cite{Bennett1996,Vidal2000}. 
A recent proposal was given in \cite{Yu2020}, based on 
the state conversion process from pure to arbitrary quantum states by means of incoherent operations.
In this work, we also present a different approach to obtain a family of coherence quantifiers, based on the generalized coherence vector.

This paper is organized as follows. 
In Sec. \ref{sec:review_coherence}, we recall the basics elements of the resource theory of quantum coherence. 
In particular, we review the notions of incoherent and coherent states, and incoherent operations. 
In addition, we present some important results about conversions of coherent sates, as well as its axiomatic quantification, focusing on coherence measures based on the convex roof construction and on coherence monotones recently introduced. 
In Sec. \ref{sec:coherence_vector}, we introduce the notion of generalized coherence vector, valid for arbitrary quantum states. We show that it is a good definition, since it allows to characterize the notions of being incoherent and maximally coherent. 
In Sec. \ref{sec:necessary conditions}, we apply the generalized coherence vector to provide a necessary condition, 
in terms of a majorization relation, for the conversion of general quantum states. 
In Sec. \ref{sec:familiy_monotones}, we introduce a family of monotones based on the coherence vector, and we compare it with the convex roof construction and other monotones introduced in the literature. 
In Sec. \ref{sec:examples}, we applied this family of monotones to quantify the coherence of a qubit system and a maximally coherent qutrit going through a depolarizing channel. 
Finally, some concluding remarks are given in Sec \ref{sec:conclusions}. 
For the sake of readability, auxiliaries lemmas and proofs are presented separately in the appendices \ref{sec: App1} and \ref{sec: App2}, respectively.

\section{Preliminaries: Resource theory of quantum coherence}
\label{sec:review_coherence}

In this section, we review the resource theory of quantum coherence introduced in \cite{Baumgratz2014}. 
In what follows, we consider a quantum system represented by a $d$-dimensional Hilbert space $\H$. Moreover, we denote as $\S(\H)$ the set of  density operators and as $\P(\H)$ the set of pure states.
Since the coherence of a quantum state is a basis dependent notion, it is necessary to choose a reference basis in order to formulate its resource theory, which is usually called \textit{incoherent basis}.
In the rest of this work, we will choose the computational basis $\B = \{\ket{i} \}^{d-1}_{i =0}$ as the incoherent basis.

\subsection{Free states, resources and free operations}

Any resource theory is built from the basic notions of free states, resources and free operations.  
In the case of the resource theory of coherence, the free states are  quantum states with diagonal density matrix in the incoherent basis, i.e., a state $\rho$ is incoherent if and only if $\rho = \sum_{i=0}^{d-1} p_i \ket{i}\bra{i}$, with  $\sum_{i=0}^{d-1} p_i =1$ and $p_i\geq 0$ for all $i \in \{0, \ldots, d-1 \}$. We call them \textit{incoherent states}, and we denote the set of incoherent states as $\I$. 
The resources of a theory are the states which are not free. In the coherence case, the resources are quantum states represented by non-diagonal density matrices in the incoherent basis. We call them 
\textit{coherent states}.
Regarding the free operations, several definitions have been introduced \cite{Streltsov2017}. 
For each definition, we obtain different resource theories for coherence. 
In what follows, we focus on the incoherent operations introduced in \cite{Baumgratz2014}. 
  
In order to define the free operations, we consider completely positive and  trace-preserving maps (CPTP) defined on the set of density operators 
$\S(\H)$. 
If $\Lambda: \S(\H) \mapsto \S(\H)$ is a CPTP map, it has an operator-sum representation in terms of Kraus operators $\{ K_n \}_{n= 1}^N$  of the form $\Lambda(\rho) =  \sum_{n= 1}^{N} K_n \rho K^\dag_n$, where Kraus operators are such that $\sum_{n= 1}^{N}K_n^{\dag}K_n = I$ (with $I$ the identity of the Hilbert space).
The free operations for any resource theory of coherence have to be CPTP maps satisfying, at least, the additional condition of not creating coherence from an incoherent state.
More precisely, $\Lambda(\rho) \in \I$ for any $\rho \in \I$. 
All operations of this type form the set of maximally incoherent operations (MIO).

In this work, we are interested in a subset of the maximally incoherent operations, the so-called \textit{incoherent operations} (IO), which were introduced in \cite{Baumgratz2014}. IO can be defined in terms of Kraus operatiors as follows \cite{Yadin2016,Winter2016,Yao2017}:	
	
\begin{definition}[\textbf{Incoherent operation}]
	\label{def:IO}
	A CPTP map $\Lambda$ is an incoherent operation if it admits a Kraus representation $\{ K_n \}_{n= 1}^N$, such that the Kraus operators  are incoherent, that is, 
	$K_n \ket{i} \propto  \ket{f_n(i)}$, for all $n \in \{1, \ldots, N \}$, with $f_n$ a relabeling function of the set $\{0,\ldots, d-1\}$.
\end{definition}

\subsection{Necessary and sufficient conditions for coherent transformations}

In this subsection, we recall some important results about state transformations under incoherent operations. 
We denote as $\rho \io \sigma$ whenever a state $\rho$ can be transformed into an state $\sigma$ by means of an incoherent operation, i.e., when there is an incoherent operation $\Lambda$ such that $\sigma= \Lambda(\rho)$.

We note that any incoherent state can be reached by any other state by means of a suitable incoherent operation, that is, for any state $\sigma \in \I$ there exists a state $\rho$ such that $\rho \io \sigma$. 
Moreover, there are some states that can be converted into any other state (not necessarily incoherent) by means of incoherent operations. More precisely, there exist states $\rho$ called \textit{maximally coherent state} (MCS), such that $\rho \io \sigma$ for any $\sigma \in \S(\H)$.
The canonical MCS state is a pure state of the form $\rho^{\mcs} = \ket{\Psi^{\mcs}}\bra{\Psi^{\mcs}}$ with $\ket{\Psi^{\mcs}} = \sum_{i=0}^{d-1} \frac{1}{\sqrt{d}} \ket{i}$.
The set of all MCSs can be obtained from the orbit of $\rho^{\mcs}$ under the set of unitary incoherent operations, which are given by operators of the form $U_{\mathrm{IO}} = \sum_{i= 0}^{d-1} e^{\imath \theta_{i}} \ket{\pi(i)}\bra{i}$, 
where $\pi$ is a permutation acting on the set $\{0,\ldots,d-1\}$ and $\theta_{i} \in \R$\cite{Peng2016}.

In order to address the general problem of state transformation, we need the following notions. Let $\Delta_{d}$ be the set of $d$-dimensional probability vectors, i.e., 
\begin{equation}
\Delta_{d}= \big\{(u_0, \ldots, u_{d-1}) \in \mathbb{R}^d: u_i \geq 0,  \sum_{i= 0}^{d-1} u_i =1 \big\},
\end{equation}
and let $\Delta^\downarrow_{d} \subseteq \Delta_{d}$ be the set of $d$-dimensional probability vectors with their entries decreasingly ordered.
The \textit{coherence vector of a pure state} of a $d$-dimensional Hilbert space is a probability vector in $\Delta_{d}$ defined as follows:
\begin{definition}[\textbf{Coherence vector}]
	\label{def:coherence_vector_pure}
	Let $\B = \{\ket{i} \}^{d-1}_{i =0}$ be the incoherent basis.
	The coherence vector of a pure state $\ket{\psi}\bra{\psi}$ is defined as
	\begin{equation}
	\label{eq:cohrence_vector_pure}
	\mu(\ket{\psi}\bra{\psi}) =  \left(|\braket{0|\psi}|^2, \ldots, |\braket{d-1|\psi}|^2\right).
	\end{equation}
\end{definition}
We also define the ordered coherence vector $\mudown(\ket{\psi}\bra{\psi}) \in \Delta^\downarrow_{d}$, which is given by the entries of the vector $\mu(\ket{\psi}\bra{\psi})$, but in a non-increasing order.

The state transformations between quantum  states is related with the \textit{majorization relation of probability vectors}. 
The majorization relation is defined as follows (see \eg \cite{MarshallBook}).

\begin{definition}[\textbf{Majorization relation}]
	 Given $u,v \in \Delta_{d}$, it is said that $u$ is majorized by $v$ (denoted as $u \preceq v$) if, and only if, $\sum_{i=0}^{k} u_{\pi_u(i)} \leq \sum_{i=0}^{k} v_{\pi_v(i)}$, for all $k \in \{0, \ldots, d-1\}$, where $\pi_u$ and $\pi_v$  are permutations acting on the set $\{0, \ldots, d-1\}$ that sort the entries of $u$ and $v$, respectively, in a non-increasing order.
\end{definition}
The majorization relation is a preorder on the set $\Delta_{d}$ and a partial order on the set $\Delta^\downarrow_{d}$. Moreover, the set $\Delta^\downarrow_{d}$ endowed with the majorization relation $\preceq$ is a complete lattice\footnote{A preorder relation is a reflexive and transitive binary relation, and a partial order relation is a preorder that it is also antisymmetric.	A set $P$ endowed with a partial order relation is a complete lattice if the supremum and infimum of any subset of $P$ exist (see \eg \cite{DaveyBook}).}  \cite{Bapat1991,Bosyk2019}, and it is called the \textit{majorization lattice}. 

The algorithms to obtain the supremum and infimum of any subset of the majorization lattice can be found in \cite{Cicalese2002,Bosyk2019,Massri2020}.
In particular, the supremum of a set $\U\subseteq\Delta_d^\downarrow$, denoted as $\bigvee \U$, can be computed as follows. 
First, we obtain the \textit{Lorenz curve}\footnote{The Lorenz curve of a probability vector $u \in \Delta_d$ is an increasing and concave function $L_u: [0,d] \to [0,1]$ formed by the linear interpolation of the points $\{(j,s_{j}(u^\downarrow))\}_{j= 0}^d$. It can be shown that $u \preceq v \iff L_u \leq L_v$ (see \eg \cite{MarshallBook}).} of $\bigvee \U$, denoted as $L_{\bigvee \U}$. In \cite{Bosyk2019} it has been shown that $L_{\bigvee \U}$ is equal to the the upper envelope\footnote{We recall that the upper envelope of a continuous function $f : \mathbb{R} \to \mathbb{R}$ is defined as $\inf\{g: f \leq g \ \text{and} \ g \ \text{is continuos and concave} \}$ (see \eg \cite[Def.4.1.5]{BratelliBook}).} of the polygonal curve given by the linear interpolation of the set of points $\{(j,S_{j})\}_{j= 0}^d$, where $S_j = \sup\{s_j(u) : u \in \U \}$ and $s_j(u) =\sum^{j-1}_{i=0} u_i$, with the convention $S_{0} = 0$.
Finally, we have $\bigvee \U = (L_{\bigvee \U}(1),L_{\bigvee \U}(2)-L_{\bigvee \U}(1),\ldots,L_{\bigvee \U}(d)-L_{\bigvee \U}(d-1))$.

We remark that $\bigvee \U$ may or may not belong to $\U$. 
When $\bigvee \U \in \U$, $\bigvee \U$ is a maximum.
In this case, we have $S_k = L_{\bigvee \U}(k) = s_k(\bigvee \U)$ for all $k \in \{1, \ldots, d-1\}$. In other words, the Lorenz curve of $\bigvee \U$ is just the linear interpolation of $\{(j,S_{j})\}_{j= 0}^d$.

The majorization relation is intimately related with \textit{Schur-concave functions} (see \eg \cite[I.3]{MarshallBook}), which are functions that anti-preserves the preorder relation. More precisely, a function $f: \Delta_d \to \mathbb{R}$ is Schur-concave if, for all $u, v \in \Delta_d$ such that $u \preceq v$, $f(u) \geq f(v)$.  
Moreover, if the function $f$ also satisfies that $f(u) > f(v)$ whenever $u$ is strictly majorized by $v$ (i.e., when $u \preceq v$ and $u \neq \Pi v$, with $\Pi$ a permutation matrix), we say that it is strictly Schur-concave.
In particular, the generalized entropies, including Shannon, R\'eny and Tsallis entropies, are examples of strictly Schur-convave functions (see \eg \cite{Bosyk2016}). 

Taking into account these definitions, we  present the following results about necessary and sufficient conditions for coherent transformations.   
The first result completely characterizes the incoherent transformations between pure states in terms of the majorization relation between their corresponding coherence vectors (see \cite{Du2015,Du2015b,Du2017,Zhu2017,Chitambar2016}).
\begin{proposition}
	\label{lemma:pure_states_io}	
	Let $\ket{\psi}\bra{\psi}$ and $\ket{\phi}\bra{\phi}$ be two arbitrary pure states, and let $\Lambda$ be an incoherent operation.
	Then,
	\begin{equation}
	\label{eq:pure_states_io}
	\ket{\psi}\bra{\psi} \io \ket{\phi}\bra{\phi} \iff \mu(\ket{\psi}\bra{\psi}) \preceq \mu(\ket{\phi}\bra{\phi}).
	\end{equation}
\end{proposition}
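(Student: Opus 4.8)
The plan is to prove the two implications separately, recognizing \eqref{eq:pure_states_io} as the coherence-theoretic analogue of Nielsen's majorization theorem for LOCC transformations of bipartite pure states, with the coherence vector $\mu$ playing the role of the Schmidt vector.

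For necessity ($\Rightarrow$), I would start from the hypothesis $\Lambda(\ket\psi\bra\psi)=\ket\phi\bra\phi$ with an incoherent Kraus representation $\{K_n\}$. Since $\Lambda(\ket\psi\bra\psi)=\sum_n K_n\ket\psi\bra\psi K_n^\dagger$ is a sum of positive rank-one operators equal to the rank-one operator $\ket\phi\bra\phi$, each summand must be proportional to it, so $K_n\ket\psi=c_n\ket\phi$ with $\sum_n|c_n|^2=\bra\psi(\sum_n K_n^\dagger K_n)\ket\psi=1$; absorbing phases into the $K_n$ we may take $c_n\ge 0$. Writing the incoherent Kraus operators as $K_n\ket i=k_{n,i}\ket{f_n(i)}$ and noting that $K_n^\dagger K_n\ket\psi=0$ whenever $c_n=0$, the completeness relation yields the reverse expansion $\ket\psi=\sum_n c_n K_n^\dagger\ket\phi$, i.e. componentwise $\psi_i=\sum_n c_n\overline{k_{n,i}}\,\phi_{f_n(i)}$. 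The goal is then to convert this into the statement $\mu(\psi)=D\,\mu(\phi)$ for a doubly stochastic matrix $D$, whence $\mu(\psi)\preceq\mu(\phi)$ by the Hardy--Littlewood--P\'olya theorem.

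The hard part is precisely this last step. A direct Cauchy--Schwarz estimate only gives $|\psi_i|^2\le\sum_{j}D_{ij}|\phi_j|^2$ with $D_{ij}=\sum_{n:\,f_n(i)=j}|k_{n,i}|^2$, and $D$ so defined is merely row-stochastic, its column sums being uncontrolled, which is far too weak to yield majorization. The slack is caused by the relabeling functions $f_n$ being non-injective: several input basis states are then coherently merged into a single output component, and the interference between these contributions is exactly what the naive bound discards. To close the argument I would instead exploit the off-diagonal part of the completeness relation, $\sum_{n:\,f_n(i)=f_n(i')}\overline{k_{n,i}}k_{n,i'}=0$ for $i\neq i'$, together with the phase information carried by $\phi$, to show that the $|\psi_i|^2$ are in fact an honest doubly stochastic average of the $|\phi_j|^2$; equivalently, I would verify directly the family of partial-sum inequalities $\sum_{i\in S}\mu(\psi)_i\le\max_{|T|=|S|}\sum_{j\in T}\mu(\phi)_j$ for every subset $S$, which is the combinatorial heart of the proof.

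For sufficiency ($\Leftarrow$), since incoherent unitaries $U_{\IO}$ are reversible incoherent operations, I may first apply permutations and phases to assume that $\psi$ and $\phi$ have non-negative amplitudes with $\mu(\psi),\mu(\phi)$ sorted. By the T-transform characterization of majorization, $\mu(\psi)\preceq\mu(\phi)$ means $\mu(\psi)$ is obtained from $\mu(\phi)$ through a finite sequence of two-coordinate T-transforms; reversing the order, $\ket\phi$ is obtained from $\ket\psi$ by the corresponding elementary coherence-concentration steps. Because a composition of incoherent operations is again incoherent, it suffices to realize a single such step, which acts nontrivially only on a two-dimensional incoherent subspace and can be implemented by an explicit two-outcome incoherent operation, namely a qubit-level coherence concentration whose two Kraus operators both return the more peaked target. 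Composing these steps produces an incoherent $\Lambda$ with $\Lambda(\ket\psi\bra\psi)=\ket\phi\bra\phi$, completing the proof.
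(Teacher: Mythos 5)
First, note that the paper does not prove Proposition~\ref{lemma:pure_states_io} at all: it is imported from Refs.~\cite{Du2015,Du2015b,Du2017,Chitambar2016,Zhu2017}, so your attempt has to be measured against those proofs. Your sufficiency direction is fine: reducing to nonnegative sorted amplitudes by incoherent unitaries, writing the majorization as a chain of T-transforms, and undoing each T-transform by a two-outcome incoherent operation whose two Kraus operators (one diagonal, one a weighted transposition on the relevant two-dimensional incoherent subspace, both acting as multiples of the identity elsewhere) each output the target state, is exactly the standard Nielsen-style construction and it closes.

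The necessity direction, however, contains a genuine gap, and it sits precisely where you locate it. From $K_n\ket{\psi}=c_n\ket{\phi}$ and completeness you correctly derive $\psi_i=\sum_n c_n\overline{k_{n,i}}\,\phi_{f_n(i)}$ and correctly observe that Cauchy--Schwarz only yields $|\psi_i|^2\le\sum_j D_{ij}|\phi_j|^2$ with $D$ row-stochastic, whose column sums $\sum_n\sum_{i\in f_n^{-1}(j)}|k_{n,i}|^2$ are the diagonal of $\sum_n K_nK_n^\dagger$ rather than of $\sum_n K_n^\dagger K_n$ and are therefore uncontrolled. But at that point you only \emph{announce} that the off-diagonal completeness relations $\sum_{n:f_n(i)=f_n(i')}\overline{k_{n,i}}k_{n,i'}=0$ ``together with the phase information carried by $\phi$'' will upgrade $D$ to a doubly stochastic matrix, or equivalently that you ``would verify directly'' the partial-sum inequalities. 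No argument is given for either claim, and this is not a step one may leave as a promissory note: it is exactly the point at which the originally published proof of this proposition was erroneous (hence the Erratum \cite{Du2017}), the error being precisely the unjustified double stochasticity of a matrix built from Kraus operators with non-injective relabelings. Moreover, the known repairs do not proceed by patching the doubly-stochastic construction: Zhu \emph{et al.}~\cite{Zhu2017} map $\ket{\psi}\mapsto\sum_i\psi_i\ket{i}\ket{i}$, show that an incoherent Kraus measurement lifts to an LOCC protocol on the bipartite side (Bob must actively coarse-grain the fibers $f_n^{-1}(j)$), and invoke Nielsen's theorem; Chitambar and Gour \cite{Chitambar2016} give a different direct argument. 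Until you either carry out one of these routes or actually exhibit the doubly stochastic matrix from the off-diagonal relations, the forward implication is unproved.
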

Notice that if both transformations are possible, we have $\ket{\psi}\bra{\psi} \lrio \ket{\phi}\bra{\phi} \iff \mu(\ket{\psi}\bra{\psi}) = \Pi \left(\mu(\ket{\phi}\bra{\phi})\right)$, with $\Pi$ a permutation matrix. 
As a consequence, the coherence vector $\mu(\ket{\psi}\bra{\psi})$ of the pure state $\ket{\psi}\bra{\psi}$ and its ordered probability vector $\mudown(\ket{\psi}\bra{\psi})$ are equivalent for the coherence resource theory.

The next result, given in \cite[Th. 4]{Du2019}, is a generalization of the previous proposition. It provides necessary and sufficient conditions for transformations from pure states to arbitrary states by means of incoherent operations.

\begin{proposition}
	\label{lemma:theorem4du2019}	
	Let $\ket{\psi}\bra{\psi}$ be an arbitrary pure state and $\sigma$ be an arbitrary quantum state.
	Then,
	\begin{equation}
	\label{eq:theorem4du2019}
	\begin{split}
	\ket{\psi}\bra{\psi} \io \sigma &\iff \exists \{  p_n, \ket{\phi_{n}}\}_{n= 1}^N \ \ \text{such that} \\
	 &\begin{array}{l}
		(1) \ \sigma= \sum_{n = 1}^{N} p_n \ket{\phi_{n}}\bra{\phi_{n}} \ \text{and} \\
		(2) \ \mu(\ket{\psi}\bra{\psi})  \preceq \sum_{n = 1}^{N} p_{n} \mu^\downarrow(\ket{\phi_{n}}\bra{\phi_{n}}). 
	\end{array}
	\end{split}
	\end{equation}
\end{proposition}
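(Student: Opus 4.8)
The plan is to prove both implications by passing between incoherent Kraus representations and pure-state ensembles. For the direction $(\Rightarrow)$, suppose $\ket{\psi}\bra{\psi}\io\sigma$ is realized by incoherent Kraus operators $\{K_n\}_{n=1}^N$, i.e. $\sigma=\sum_n K_n\ket{\psi}\bra{\psi}K_n^\dagger$. I would set $p_n=\Tr\!\big(K_n\ket{\psi}\bra{\psi}K_n^\dagger\big)=\braket{\psi|K_n^\dagger K_n|\psi}$ and $\ket{\phi_n}=K_n\ket{\psi}/\sqrt{p_n}$, so that $\sum_n p_n=\braket{\psi|\psi}=1$ and $\sigma=\sum_n p_n\ket{\phi_n}\bra{\phi_n}$; this is condition (1), and it remains only to verify the majorization (2). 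Conversely, for $(\Leftarrow)$ it suffices to build incoherent Kraus operators $\{K_n\}$ with $K_n\ket{\psi}=\sqrt{p_n}\ket{\phi_n}$, since the induced operation $\Lambda(\cdot)=\sum_n K_n(\cdot)K_n^\dagger$ is then incoherent and sends $\ket{\psi}\bra{\psi}$ to $\sigma$.

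To prove (2) I would argue at the level of Lorenz-curve partial sums. Since each $\mudown(\ket{\phi_n}\bra{\phi_n})$ is decreasingly ordered, so is the weighted sum $\sum_n p_n\,\mudown(\ket{\phi_n}\bra{\phi_n})$, whence the sum of its $k$ largest entries equals $\sum_n p_n\big(\text{sum of the $k$ largest entries of }\mu(\ket{\phi_n}\bra{\phi_n})\big)=\sum_n \max_{|S|=k}\|P_S K_n\ket{\psi}\|^2$, where $P_S$ is the projector onto $\mathrm{span}\{\ket{j}:j\in S\}$ and we used $p_n\mu(\ket{\phi_n}\bra{\phi_n})_j=|\braket{j|K_n|\psi}|^2$. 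Now fix the index set $R$, $|R|=k$, achieving the $k$ largest entries of $\mu(\ket{\psi}\bra{\psi})$, and for each $n$ choose the particular set $S_n=f_n(R)$. Because the $K_n$ are incoherent (one nonzero entry per column), $\|P_{S_n}K_n\ket{\psi}\|^2=\sum_{i\in R}\braket{i|K_n^\dagger K_n|i}\,|\braket{i|\psi}|^2$, and summing over $n$ with $\sum_n K_n^\dagger K_n=I$ collapses this to $\sum_{i\in R}|\braket{i|\psi}|^2$, exactly the sum of the $k$ largest entries of $\mu(\ket{\psi}\bra{\psi})$. As $\max_{|S|=k}(\cdots)\ge$ the value at $S_n$, the partial-sum inequality holds for every $k$, with equality at $k=d$, which is precisely $\mu(\ket{\psi}\bra{\psi})\preceq\sum_n p_n\,\mudown(\ket{\phi_n}\bra{\phi_n})$.

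For $(\Leftarrow)$ I would first reduce to canonical targets: let $\ket{\chi_n}=\sum_i\sqrt{\mudown(\ket{\phi_n}\bra{\phi_n})_i}\,\ket{i}$, so $\mu(\ket{\chi_n}\bra{\chi_n})=\mudown(\ket{\phi_n}\bra{\phi_n})$ coincides with $\mu(\ket{\phi_n}\bra{\phi_n})$ up to a permutation; by the remark following Proposition~\ref{lemma:pure_states_io} there exist unitary incoherent operators $U_n$ with $U_n\ket{\chi_n}=\ket{\phi_n}$. Writing $w=\sum_n p_n\,\mu(\ket{\chi_n}\bra{\chi_n})$, hypothesis (2) says $\mu(\ket{\psi}\bra{\psi})\preceq w$, hence $\mu(\ket{\psi}\bra{\psi})=Bw$ for a doubly stochastic $B$; a Birkhoff--von Neumann decomposition $B=\sum_m\lambda_m\Pi_m$ into permutation matrices (with permutations $\rho_m$) gives $|\braket{j|\psi}|^2=\sum_{m,n}\lambda_m p_n\,\mu(\ket{\chi_n}\bra{\chi_n})_{\rho_m(j)}$ for all $j$. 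I then define, for each pair $(m,n)$, the incoherent operator $K_{mn}=U_n\sum_j \frac{\sqrt{\lambda_m p_n\,\mu(\ket{\chi_n}\bra{\chi_n})_{\rho_m(j)}}}{\braket{j|\psi}}\ket{\rho_m(j)}\bra{j}$, which satisfies $K_{mn}\ket{\psi}=\sqrt{\lambda_m p_n}\,\ket{\phi_n}$; the identity just displayed is exactly the completeness relation $\sum_{m,n}K_{mn}^\dagger K_{mn}=I$ on the support of $\ket{\psi}$, which I complete on the kernel by the incoherent projector onto $\{\ket{j}:\braket{j|\psi}=0\}$. The operation $\Lambda(\cdot)=\sum_{m,n}K_{mn}(\cdot)K_{mn}^\dagger$ is incoherent and obeys $\Lambda(\ket{\psi}\bra{\psi})=\sum_{m,n}\lambda_m p_n\ket{\phi_n}\bra{\phi_n}=\sigma$.

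I expect the main obstacle to be the reverse construction: turning the partial-sum majorization into an explicit, completeness-preserving family of incoherent Kraus operators. The delicate points are organizing the Birkhoff decomposition so that the per-branch routing permutations $\rho_m$ are \emph{simultaneously} consistent with $\sum K^\dagger K=I$, and handling vanishing amplitudes $\braket{j|\psi}=0$ (where the defining identity forces the corresponding numerators to vanish, so the kernel directions must be absorbed into an extra incoherent projector). Both directions ultimately hinge on the ordering $\mudown$: a single branch need \emph{not} satisfy $\mu(\ket{\psi}\bra{\psi})\preceq\mudown(\ket{\phi_n}\bra{\phi_n})$, and only after sorting each branch and then averaging does the majorization emerge, which is exactly why the statement is phrased with $\mudown$ rather than $\mu$.
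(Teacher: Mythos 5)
First, a point of reference: the paper does not prove this proposition. It is recalled in the Preliminaries as Theorem~4 of Ref.~\cite{Du2019} (its necessary half being Eq.~\eqref{eq:lemma4zhu}, quoted from Ref.~\cite{Zhu2017}), so there is no in-paper proof to compare yours against; I can only judge your argument on its own merits.

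Your $(\Leftarrow)$ direction is essentially the standard construction and is sound: the incoherent unitaries $U_n$, the Hardy--Littlewood--P\'olya/Birkhoff--von Neumann decomposition of the doubly stochastic matrix witnessing $\mu(\ket{\psi}\bra{\psi})\preceq w$, and the operators $K_{mn}$ do satisfy $K_{mn}^\dagger K_{mn}$ diagonal with $\sum_{m,n}K_{mn}^\dagger K_{mn}$ equal to the projector onto the support of $\ket{\psi}$ in the incoherent basis, precisely because $|\braket{j|\psi}|^2=\sum_{m,n}\lambda_m p_n\,\mu(\ket{\chi_n}\bra{\chi_n})_{\rho_m(j)}$; the kernel directions are correctly absorbed by an extra incoherent projector.

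The $(\Rightarrow)$ direction, however, has a genuine gap at the step $\|P_{S_n}K_n\ket{\psi}\|^2=\sum_{i\in R}\braket{i|K_n^\dagger K_n|i}\,|\braket{i|\psi}|^2$. ``One nonzero entry per column'' does not prevent two columns of $K_n$ from occupying the \emph{same} row: the relabeling $f_n$ in Def.~\ref{def:IO} need not be injective, and when $f_n(i_1)=f_n(i_2)$ the corresponding amplitudes interfere, so the claimed equality fails. Concretely, take $d=2$, $\ket{\psi}=\sqrt{0.9}\,\ket{0}+\sqrt{0.1}\,\ket{1}$, and the valid incoherent instrument $K_1=\tfrac{1}{\sqrt{2}}\left(\ket{0}\bra{0}-\ket{0}\bra{1}\right)$, $K_2=\tfrac{1}{\sqrt{2}}\left(\ket{0}\bra{0}+\ket{0}\bra{1}\right)$ (one checks $K_1^\dagger K_1+K_2^\dagger K_2=I$). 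With $R=\{0\}$ one finds $\|P_{\{0\}}K_1\ket{\psi}\|^2=\tfrac{1}{2}(\sqrt{0.9}-\sqrt{0.1})^2=0.2$, whereas $\braket{0|K_1^\dagger K_1|0}\,|\braket{0|\psi}|^2=0.45$. Your chain therefore only establishes the majorization for Kraus operators whose relabelings are injective (in which case $P_{f_n(R)}K_n=K_nP_R$ and the cross terms are absent); the general case is exactly the subtlety that forced the erratum \cite{Du2017} to the original pure-state transformation theorem, and the published proofs circumvent it either through the coherence--entanglement correspondence or through a more careful accounting of the interfering branches. (A minor additional point: $|f_n(R)|$ may be smaller than $k$, which you can fix by padding $S_n$, but that does not repair the interference issue.) The statement itself is of course true, but this step of your argument does not prove it.
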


A related result, given in \cite[Lemma 4]{Zhu2017}, provides a particular decomposition of the final state $\sigma$ given in  Prop.~\ref{lemma:theorem4du2019},
\begin{equation}
\label{eq:lemma4zhu}
\ket{\psi}\bra{\psi} \io \sigma \implies \mu(\ket{\psi}\bra{\psi}) \preceq  \sum_{n = 1}^{N} p_{n} \mu^\downarrow(\ket{\phi_{n}}\bra{\phi_{n}}), 
\end{equation}
where $p_{n} = \Tr(K_n \ket{\psi}\bra{\psi} K_n^\dag)$,  $\ket{\phi_{n}}\bra{\phi_{n}} = K_n \ket{\psi}\bra{\psi}K^\dag_n/p_{n}$,  and 
$\{K_n\}_{n= 1}^N$ are the incoherent Kraus operators of the incoherent operation $\Lambda$, which satisfies $\sigma= \Lambda(\ket{\psi}\bra{\psi})$.

The result given in Prop. \ref{lemma:pure_states_io} is a particular case of Prop. \ref{lemma:theorem4du2019}, but in the former the incoherent transformations are fully characterize by the majorization relation between the corresponding coherence vectors of the pure states.

\subsection{Coherence measures}

In this subsection, we introduce the notion of \textit{coherence measures}, mainly based on the axiomatic formulation given in \cite{Baumgratz2014}.

\begin{definition}[\textbf{Coherence measure}]
\label{def:coherence_measures}
A coherence measure is a function $C : \S(\H) \to \R$ satisfying the following conditions:
\begin{enumerate}[label=(\subscript{\rm{C}}{{\arabic*}})]
	\item Vanishing on incoherent states: 
	$C\left(\rho \right)  = 0$ for any $\rho$ incoherent. \label{c1:nonneg}
	\item Monotonicity under incoherent operations: $C(\rho) \geq C(\Lambda(\rho))$ for any incoherent operation $\Lambda$ and any state $\rho$.  \label{c2:monotonicity}
	\item Monotonicity under selective incoherent operation: $C(\rho) \geq \sum_{n = 1}^{N}  p_n C(\sigma_n)$, for any state $\rho$ and any incoherent operation $\Lambda$, with incoherent Kraus operators $\{K_n\}_{n= 1}^N$, where $p_n = \Tr{K_n \rho K^\dag_n}$ and $\sigma_n = K_n \rho K^\dag_n / p_n$.  \label{c3:strong_monotonicity}
	\item Maximal coherence: $\arg\max_{\rho \in S(\H)} C(\rho)$ is reached at maximally coherent states.  \label{c5:normalization}
	\item Convexity: $C\left( \sum_{k = 1}^{M}  q_k \rho_k\right)  \leq \sum_{k = 1}^{M} q_k C(\rho_k)$ .  \label{c4:convexity}
\end{enumerate}
\end{definition}
Condition \ref{c1:nonneg} guarantees that the measure is well defined for the incoherent states.
Condition \ref{c2:monotonicity} ensures that it is consistent with incoherent operations. Both are the minimal requirements for any quantifier of the coherence resource.
Condition \ref{c3:strong_monotonicity} guarantees that coherence does not increase under incoherent measurements, 
even if one has access to the individual measurement
outcomes.
When a quantifier satisfies the conditions  \ref{c1:nonneg}--\ref{c3:strong_monotonicity}, it is called \textit{coherence monotone}.
We have included the condition \ref{c5:normalization} because maximally coherent states are the golden unit of the coherence resource theory with the incoherent operations given in Def. \ref{def:IO} (the golden unit does not necessary exist for other set of free operations, see \eg \cite{Streltsov2017}). The relevance of this condition is discussed in \cite{Peng2016}. 
Finally, condition \ref{c4:convexity} is often related with the fact that mixing states does not increase the amount of coherence.  
Although the convexity condition \ref{c4:convexity} is a desirable property for coherence quantifiers, it is not considered essential.
Indeed, there are important quantifiers of coherence that do not satisfy \ref{c4:convexity}, such as the maximum relative entropy of coherence  \cite{Bu2017}. 
Finally, we note that when conditions \ref{c2:monotonicity} and \ref{c4:convexity} are satisfied, condition \ref{c3:strong_monotonicity} is automatically satisfied.

There are several quantifiers of coherence that satisfy some or all of the conditions given in Def.~\ref{def:coherence_measures}.
In this work, we are interested in families of coherence measures constructed from quantifiers of pure states. Before
introducing an important result for coherence 
measures restricted to pure states (see \eg \cite{Du2015b,Zhu2017}), we need to define the following set of functions:
\begin{multline}
\label{eq:setofF}
\F= \big\{f : \mathbb{R}^d \to [0,1]: f \ \text{is symmetric and concave,}  f(1, 0, \ldots, 0) =0 \ \text{and} \ \arg\max_{u \in \mathbb{R}^d} f(u)= (1/d,  \ldots, 1/d) \big\}.
\end{multline}
Since a symmetric and concave function is Schur-concave \cite{MarshallBook}, then all functions in $\F$ are Schur-concave.

The following result guarantees that the restriction of any  coherence monotone to pure states can be written in terms of a function belonging to $\F$ evaluated on the coherence vectors of the pure states (see \eg \cite{Du2015b,Zhu2017}). 
We will call as $f_C$ to the associated function of the coherence monotone $C$.
\begin{proposition}
	\label{lemma:cohrence_measures_pure}	
	Given a coherence monotone $C : \S(\H)  \to \R$  satisfying conditions \ref{c1:nonneg}--\ref{c5:normalization}, there exists a function $f_{C} \in \F$, such that the restriction of $C$ to the pure states, denoted as $C|_{\P(\H)}$, can be written as
	\begin{equation}
	\label{eq:f_C}
	C|_{\P(\H)}(\ket{\psi}\bra{\psi}) = f_{C}(\mu(\ket{\psi}\bra{\psi}).
	\end{equation}
\end{proposition}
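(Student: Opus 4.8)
The plan is to define $f_C$ directly on probability vectors by evaluating $C$ on a pure state with the prescribed coherence vector, and then to verify the four defining properties of $\F$ one at a time, with concavity being the substantial step. First I would set, for every $u\in\Delta_{d}$, the pure state $\ket{\psi_u}=\sum_{i=0}^{d-1}\sqrt{u_i}\,\ket{i}$, so that $\mu(\ket{\psi_u}\bra{\psi_u})=u$, and define $f_C(u):=C(\ket{\psi_u}\bra{\psi_u})$. To see this is well defined and symmetric, note that if two pure states have coherence vectors related by a permutation, $\mu(\ket{\phi}\bra{\phi})=\Pi\,\mu(\ket{\psi}\bra{\psi})$, then as noted right after Prop.~\ref{lemma:pure_states_io} they are interconvertible, $\ket{\psi}\bra{\psi}\lrio\ket{\phi}\bra{\phi}$ (a permutation being realized by an incoherent unitary), and applying monotonicity \ref{c2:monotonicity} in both directions forces $C(\ket{\psi}\bra{\psi})=C(\ket{\phi}\bra{\phi})$. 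In particular the value of $C$ on a pure state depends only on its unordered coherence vector, so $f_C$ is well defined, and it is invariant under permutations of its argument, i.e.\ symmetric.

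Next I would dispatch the boundary conditions. Since $(1,0,\ldots,0)$ is the coherence vector of the incoherent pure state $\ket{0}\bra{0}$, condition \ref{c1:nonneg} gives $f_C(1,0,\ldots,0)=0$. Nonnegativity $f_C\geq 0$ follows because the full dephasing channel, with incoherent Kraus operators $\{\ket{i}\bra{i}\}$, sends any state to an incoherent one, so \ref{c2:monotonicity} together with \ref{c1:nonneg} yields $C\geq 0$; after the standard normalization $C(\rho^{\mcs})=1$ the range becomes $[0,1]$. Finally, condition \ref{c5:normalization} states that the maximizers of $C$ are exactly the maximally coherent states; restricted to pure states these are the orbit of $\rho^{\mcs}$, all of which have coherence vector $(1/d,\ldots,1/d)$, so $\argmax f_C=(1/d,\ldots,1/d)$.

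The main step, and the one I expect to be the real obstacle, is concavity, where I would exploit the strong monotonicity \ref{c3:strong_monotonicity} through an explicit two-outcome incoherent measurement. Given $u,v\in\Delta_{d}$ and $\lambda\in[0,1]$, put $w=\lambda u+(1-\lambda)v$ and consider the diagonal, hence incoherent, Kraus operators $K_{1}=\sum_{i}\sqrt{\lambda u_i/w_i}\,\ket{i}\bra{i}$ and $K_{2}=\sum_{i}\sqrt{(1-\lambda)v_i/w_i}\,\ket{i}\bra{i}$ on the support of $w$ (completed arbitrarily on indices with $w_i=0$, where necessarily $u_i=v_i=0$). The identity $\lambda u_i/w_i+(1-\lambda)v_i/w_i=1$ shows $K_1^{\dag}K_1+K_2^{\dag}K_2=I$, so these define an incoherent operation, and one checks $K_1\ket{\psi_w}=\sqrt{\lambda}\,\ket{\psi_u}$ and $K_2\ket{\psi_w}=\sqrt{1-\lambda}\,\ket{\psi_v}$. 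Thus applying this operation to $\ket{\psi_w}\bra{\psi_w}$ yields the outcome $\ket{\psi_u}\bra{\psi_u}$ with probability $\lambda$ and $\ket{\psi_v}\bra{\psi_v}$ with probability $1-\lambda$, and \ref{c3:strong_monotonicity} gives $f_C(w)=C(\ket{\psi_w}\bra{\psi_w})\geq \lambda\,C(\ket{\psi_u}\bra{\psi_u})+(1-\lambda)\,C(\ket{\psi_v}\bra{\psi_v})=\lambda f_C(u)+(1-\lambda)f_C(v)$, which is exactly concavity. Note that this construction fixes the post-measurement coherence vectors to be $u$ and $v$ themselves (not their ordered rearrangements), so the inequality holds for all $u,v$ without any appeal to majorization. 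Collecting the four properties shows $f_C\in\F$, and by construction $C|_{\P(\H)}(\ket{\psi}\bra{\psi})=f_C(\mu(\ket{\psi}\bra{\psi}))$.

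I would flag two technical points that the write-up must address but that do not affect the core argument: the normalization needed to land in $[0,1]$ (equivalently, passing to $C/C(\rho^{\mcs})$), and the fact that $\F$ asks for a function on all of $\R^d$ whereas the construction produces $f_C$ only on $\Delta_{d}$; since $\Delta_{d}$ is convex and $f_C$ is concave and symmetric there, any symmetric concave extension keeping the maximum at the uniform vector suffices.
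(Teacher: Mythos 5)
Your proof is correct, but there is nothing in the paper to compare it against: Prop.~\ref{lemma:cohrence_measures_pure} is imported from the literature (Du--Bai--Qi and Zhu \emph{et al.}) and no proof of it appears in the appendices. What you have written is essentially the standard argument from those references, and every step checks out. The definition $f_C(u)=C(\ket{\psi_u}\bra{\psi_u})$ with $\ket{\psi_u}=\sum_i\sqrt{u_i}\ket{i}$ is well posed because any two pure states with permutation-equivalent coherence vectors are interconvertible by incoherent unitaries (Prop.~\ref{lemma:pure_states_io} with equality of ordered vectors), so \ref{c2:monotonicity} applied both ways gives equality of $C$; this simultaneously yields symmetry and the identity $C|_{\P(\H)}=f_C\circ\mu$ for states with arbitrary phases. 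The concavity step is the substantive one and your two-outcome diagonal instrument is exactly right: $K_1,K_2$ are incoherent (each maps $\ket{i}$ to a multiple of $\ket{i}$), the completeness relation follows from $\lambda u_i+(1-\lambda)v_i=w_i$, the outcome probabilities are $\lambda$ and $1-\lambda$, and \ref{c3:strong_monotonicity} delivers $f_C(w)\ge\lambda f_C(u)+(1-\lambda)f_C(v)$. Your two flagged caveats are genuine but do not undermine the argument; indeed the second one (that $\F$ demands a concave function on all of $\R^d$ with values in $[0,1]$, whereas a nonconstant concave function bounded on all of $\R^d$ cannot exist) is really a defect in the paper's Eq.~\eqref{eq:setofF} rather than in your construction --- the intended domain is clearly $\Delta_d$, where your $f_C$ has all the required properties after normalizing $C$ so that $C(\rho^{\mcs})=1$.
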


Conversely, given a function $f \in \F$, it is possible to define a coherence monotone. 
In the literature there are at least two proposals to do this. One was introduced in \cite{Du2015b,Zhu2017}, whereas the other was recently developed in \cite{Yu2020}.

The first proposal appeals to the convex roof construction (see \eg \cite{Uhlmann2010}). 
Before introducing the \textit{convex roof measure of coherence}, we define the set of all pure state decompositions of a given state $\rho$, 
\begin{equation}
\label{eq:set_pure_ensambles}
\D(\rho)= \left\{ {\left\lbrace  q_k, \ket{ \psi_k} \right\rbrace }_{k = 1}^M : \ \rho= \sum_{k = 1}^{M}  q_k \ket{\psi_k}\bra{\psi_k}\ \right\},	
\end{equation}
where $\left(q_1, \dots, q_M \right) \in \Delta_M$  and $\ket{\psi_k} \in \H$ are unit-normed vectors (but not necessarily orthogonal to each other).
	
A complete characterization of this set is given by the \textit{Schr\"odinger mixture theorem} (also known as \textit{classification theorem for ensembles}, see \eg \cite{Hughston1993,Nielsen2000}). 
More precisely, $\{p_k, \ket{\psi_k}\}_{k = 1}^M \in \D(\rho)$ if, and only if, there exists a unitary matrix $U$ of $M \times M$ ($M \geq d$) such that
\begin{equation}
	\label{eq:SchTheo}
	\ket{\psi_k} = \frac{1}{\sqrt{q_k}} \sum_{j=1}^{d}  \sqrt{\lambda_j} U_{k,j} \ket{e_j},
\end{equation}
where $\lambda_j$ and $\ket{e_j}$ are the eigenvalues and eigenstates of $\rho$, respectively.

Now, we introduce the convex roof measure of coherence (see \cite{Du2015b,Zhu2017}).
\begin{definition}[\textbf{Convex roof measure}]
	\label{lemma:C_f_convex_roof}	
	For any function $f \in \F$, the convex roof measure of coherence  $C^{\croof}_{f}: \S(\H)  \to \R$ is defined as
	\begin{equation}
	\label{eq:convex_roof_mixed}
	C^{\croof}_{f}(\rho) = \inf_{\left\lbrace  q_k, \ket{ \psi_k} \right\rbrace_{k = 1}^M  \in \D(\rho) } \sum_{k=1}^M q_k 	 f(\mu(\ket{\psi_k}\bra{\psi_k})).
	\end{equation}
\end{definition}
The convex roof measure $C^{\croof}_{f}$ is a good quantifier of coherence since it satisfies conditions \ref{c1:nonneg}-\ref{c4:convexity}. Moreover, the infimum in \eqref{eq:convex_roof_mixed} can be replaced by a minimum, since there is always an optimal pure state decomposition of $\rho$ that reaches the infimum (see \eg \cite{Uhlmann1998}).

The name of the measure $C^{\croof}_{f}$ is based on the fact that  it is the 
convex roof extension  of any coherence monotone with associated function equal to $f$.  An important property of this measure is the following.

\begin{proposition}
	\label{lemma:convex_roof+convexidad}
	Let $C : \S(\H)  \to \R$ be a coherence measure. Then,
	\begin{equation}
	C \leq C^{\croof}_{f_C},  
	\end{equation}	
	where $f_{C}$ is the 
	associated function of $C$.
\end{proposition}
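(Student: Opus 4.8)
The plan is to exploit the two features of a coherence measure that matter here: its convexity and the fact that on pure states it is completely determined by its associated function $f_C$. First I would invoke Prop.~\ref{lemma:cohrence_measures_pure}, which applies because a coherence measure satisfies conditions \ref{c1:nonneg}--\ref{c4:convexity} by Def.~\ref{def:coherence_measures}, to write the restriction of $C$ to pure states as $C(\ket{\psi}\bra{\psi}) = f_C(\mu(\ket{\psi}\bra{\psi}))$ for every unit vector $\ket{\psi} \in \H$.

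Next I would fix an arbitrary state $\rho$ and an arbitrary pure-state decomposition $\{q_k, \ket{\psi_k}\}_{k=1}^M \in \D(\rho)$, so that $\rho = \sum_{k=1}^M q_k \ket{\psi_k}\bra{\psi_k}$. Applying convexity (condition \ref{c4:convexity}) to this decomposition and then the pure-state formula term by term yields
\begin{equation}
C(\rho) = C\!\left(\sum_{k=1}^M q_k \ket{\psi_k}\bra{\psi_k}\right) \leq \sum_{k=1}^M q_k\, C(\ket{\psi_k}\bra{\psi_k}) = \sum_{k=1}^M q_k\, f_C(\mu(\ket{\psi_k}\bra{\psi_k})).
\end{equation}
Since the left-hand side $C(\rho)$ is independent of the chosen decomposition, this bound holds simultaneously for every element of $\D(\rho)$. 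Taking the infimum over all decompositions on the right-hand side and recalling the definition \eqref{eq:convex_roof_mixed} of the convex roof measure gives $C(\rho) \leq \inf_{\{q_k,\ket{\psi_k}\} \in \D(\rho)} \sum_k q_k\, f_C(\mu(\ket{\psi_k}\bra{\psi_k})) = C^{\croof}_{f_C}(\rho)$. As $\rho$ was arbitrary, this establishes $C \leq C^{\croof}_{f_C}$.

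I do not anticipate a genuine obstacle: the argument is a direct combination of convexity with the pure-state characterization. The only point requiring a moment of care is verifying that $C$ meets all the hypotheses of Prop.~\ref{lemma:cohrence_measures_pure}, which it does since a coherence measure satisfies conditions \ref{c1:nonneg}--\ref{c4:convexity}. If one prefers, the final step can be phrased without reference to the infimum as a limit: because $\sum_k q_k\, f_C(\mu(\ket{\psi_k}\bra{\psi_k}))$ is an upper bound for $C(\rho)$ over the entire feasible set, and the infimum is attained at an optimal decomposition by the remark following Def.~\ref{lemma:C_f_convex_roof}, the minimizing value still dominates $C(\rho)$.
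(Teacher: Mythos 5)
Your proof is correct and follows essentially the same route as the paper's: invoke Prop.~\ref{lemma:cohrence_measures_pure} to express $C$ on pure states via $f_C$, apply convexity to an arbitrary pure-state decomposition, and pass to the infimum over $\D(\rho)$. The closing remark about the infimum being attained is unnecessary (an infimum of a set of upper bounds is automatically an upper bound), but it does no harm.
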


The convex roof construction is widely used, especially in the context of entanglement measures  \cite{Bennett1996,Vidal2000}. 
However, as we mentioned before, it is not the only way to define a coherence measure from a  function $f \in \F$. 
Recently, an alternative construction was proposed \cite{Yu2020}. 
Before introducing this measure of coherence, we need to define the set of all pure states that can be converted into a state $\rho$ by means of incoherent operations,
\begin{equation}
\label{eq:set_pure_to_rho}
\O(\rho)= \left\{ \ket{\psi}\bra{\psi} : \ket{\psi}\bra{\psi} \io \rho \right\}.
\end{equation}

Now, we introduce the coherence measure given in \cite{Yu2020}. 
In this work we will call it \textit{top monotone of coherence}.

\begin{definition}[\textbf{Top monotone}]
	\label{lemma:C_f_operational}	
	For any function $f \in \F$, the top monotone of coherence $C^{\op}_{f}: \S(\H)  \to \R$ is defined as
	\begin{equation}
	\label{eq:C_f_operational_mixed}
	C^{\op}_{f}(\rho) = \inf_{\ket{\psi}\bra{\psi} \in \O(\rho) } f(\mu(\ket{\psi}\bra{\psi})).
	\end{equation}
\end{definition}

The top monotone $C^{\op}_{f}$ satisfies conditions \ref{c1:nonneg}-- \ref{c5:normalization}, whereas condition \ref{c4:convexity} holds if and only if $C^{\op}_{f} = C^{\croof}_{f}$ \cite[Th.4]{Yu2020}.
The chosen name for this measure is based on the following property given in  \cite{Yu2020}.
\begin{proposition}
\label{prop:relation_top}.
Let $C : \S(\H)  \to \R$ be a coherence monotone. Then,
	\begin{equation}
	\label{eq:relation_top}
	C \leq C^{\op}_{f_C},  
	\end{equation}	
	where $f_{C}$ is the 
	associated function of $C$.
\end{proposition}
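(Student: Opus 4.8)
The plan is to show that the single number $C(\rho)$ is a lower bound for the whole family of values $\{f_{C}(\mu(\ket{\psi}\bra{\psi})) : \ket{\psi}\bra{\psi}\in\O(\rho)\}$, and then to conclude simply by passing to the infimum. The entire argument rests on combining the monotonicity axiom \ref{c2:monotonicity} with the pure-state representation of $C$ furnished by Prop.~\ref{lemma:cohrence_measures_pure}. Structurally this mirrors the proof of Prop.~\ref{lemma:convex_roof+convexidad}, except that the relevant auxiliary set here is the operational set $\O(\rho)$ of pure states convertible into $\rho$ by incoherent operations, rather than the set $\D(\rho)$ of pure-state decompositions.

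First I would fix an arbitrary quantum state $\rho$ and pick any pure state $\ket{\psi}\bra{\psi}\in\O(\rho)$. By definition of $\O(\rho)$ this means $\ket{\psi}\bra{\psi}\io\rho$, so there exists an incoherent operation $\Lambda$ with $\rho=\Lambda(\ket{\psi}\bra{\psi})$. Applying the monotonicity condition \ref{c2:monotonicity} to $\Lambda$ gives
\begin{equation}
C(\rho)=C(\Lambda(\ket{\psi}\bra{\psi}))\leq C(\ket{\psi}\bra{\psi}).
\end{equation}
Since $\ket{\psi}\bra{\psi}$ is pure, Prop.~\ref{lemma:cohrence_measures_pure} lets me rewrite the right-hand side through the associated function, namely $C(\ket{\psi}\bra{\psi})=f_{C}(\mu(\ket{\psi}\bra{\psi}))$, and therefore
\begin{equation}
C(\rho)\leq f_{C}(\mu(\ket{\psi}\bra{\psi})).
\end{equation}

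Because this inequality holds for every $\ket{\psi}\bra{\psi}\in\O(\rho)$, the number $C(\rho)$ is a lower bound of the set $\{f_{C}(\mu(\ket{\psi}\bra{\psi})):\ket{\psi}\bra{\psi}\in\O(\rho)\}$, and hence it is at most the greatest lower bound of that set; that is, $C(\rho)\leq\inf_{\ket{\psi}\bra{\psi}\in\O(\rho)}f_{C}(\mu(\ket{\psi}\bra{\psi}))=C^{\op}_{f_{C}}(\rho)$, which is the claim. There is no substantive obstacle in this argument: the only points meriting a remark are that $\O(\rho)$ is nonempty, which holds because the canonical maximally coherent state satisfies $\rho^{\mcs}\io\rho$ for every $\rho$ and so $\rho^{\mcs}\in\O(\rho)$, and that the direction of the infimum is the favourable one, since we are taking the infimum of a family of quantities each of which bounds $C(\rho)$ from above. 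The only genuinely delicate prerequisite—not a step of the proof itself—is that the associated function $f_{C}$ exists and lies in $\F$; this is exactly the content of Prop.~\ref{lemma:cohrence_measures_pure}, which I would invoke as given.
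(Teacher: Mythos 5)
Your proof is correct, and it is exactly the standard argument: the paper itself does not reprove Prop.~\ref{prop:relation_top} (it is quoted from Ref.~[Yu2020]), but the intended derivation is precisely the one you give --- apply monotonicity \ref{c2:monotonicity} to each $\ket{\psi}\bra{\psi}\in\O(\rho)$, rewrite $C$ on pure states via Prop.~\ref{lemma:cohrence_measures_pure}, and pass to the infimum. Your side remarks (nonemptiness of $\O(\rho)$ via the maximally coherent state, and the caveat that $f_C\in\F$ requires the hypotheses of Prop.~\ref{lemma:cohrence_measures_pure}) are both apt.
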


As in the case of coherence measures based on the convex roof construction, the infimum in~\eqref{eq:C_f_operational_mixed} can be replaced by a minimum, since there always exists a pure state that reaches the infimum. 
This is a consequence of the continuity of $f$ on $\Delta_d$ (concave functions on $\mathbb{R}^d$ are continuous on any subset of $\mathbb{R}^d$ \cite[Th.10.1]{Rockafellar}) and the compactness of the set $\O(\rho)$, a fact that we will show in Lemma~\ref{lemma:compact}. In some proofs given in \cite{Yu2020} it is assumed the existence of the minimum in~\eqref{eq:C_f_operational_mixed}, but its existence is not prove in general (see for instance the proofs of monotonicity and strong monotonicity of $C^{\op}_{f}$, or the converse part of the proof of  Th. 3 regarding the convexity of $C^{\op}_{f}$, or the proof of Th. 7 regarding the continuity of $C^{\op}_{f}$).  
Therefore, our Lemma~\ref{lemma:compact} fills these gaps.

\section{Generalized Coherent vector: definition and properties}
\label{sec:coherence_vector}

In this section, we introduce the \textit{generalized coherence vector} for arbitrary quantum states. This definition generalizes the one given in \eqref{eq:cohrence_vector_pure}, and it connects the notion of coherence with the majorization lattice theory. Moreover, it allows to introduce a new family of coherence quantifiers, alternative to $C_f^\croof$ and $C_f^\op$.

Inspired by the definitions of $C^{\croof}_{f}$ and $C^{\op}_{f}$, we define two sets of probability vectors associated with a given quantum state $\rho$. 
The first one is obtained from the pure state decompositions of $\rho$. We denote it as $\Ud(\rho)$, where the acronym ``$\mathrm{psd}$'' refers  to \textit{pure state decompositions of $\rho$}. 
\begin{definition}[\textbf{Pure state decompositions set}]
	For any quantum state $\rho$, the pure state decompositions set of $\rho$ is defined as
	\begin{equation}
	\Ud(\rho)= \left\{ \sum_{k = 1}^{M}  q_k \mu^\downarrow(\ket{\psi_k}\bra{\psi_k}) : \{q_k, \ket{\psi_k}\}_{k= 1}^M  \in \D(\rho) \right\}.
	\end{equation}
\end{definition}

The second set of probability vectors associated with a quantum state $\rho$ is obtained from the set of all pure states that can be converted into $\rho$. We denote it as $\Uo(\rho)$, where the acronym ``$\mathrm{psc}$'' refers to \textit{pure states connected to $\rho$}.   

\begin{definition}[\textbf{Connected pure states set}]
	For any quantum state $\rho$, the connected pure states set
	of $\rho$ is defined as
	\begin{equation}
	\Uo(\rho)= \left\{ \mu^\downarrow(\ket{\psi}\bra{\psi}) : \ket{\psi}\bra{\psi} \in \O(\rho)  \right\}.	
	\end{equation}
	
\end{definition}

An interesting property of these sets is that both are convex sets.
\begin{proposition}
	\label{prop:convexity of the sets}		
	The sets $\Ud(\rho)$ and $\Uo(\rho)$ are convex.
\end{proposition}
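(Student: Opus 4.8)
The plan is to treat the two sets separately, using in both cases the elementary observation that merging two pure-state ensembles of $\rho$, with their weights rescaled by $\lambda$ and $1-\lambda$, yields a third valid ensemble of $\rho$.

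For $\Ud(\rho)$ this already suffices. I would take $x=\sum_k q_k\mu^\downarrow(\ket{\psi_k}\bra{\psi_k})$ and $y=\sum_l r_l\mu^\downarrow(\ket{\phi_l}\bra{\phi_l})$ with $\{q_k,\ket{\psi_k}\},\{r_l,\ket{\phi_l}\}\in\D(\rho)$, fix $\lambda\in[0,1]$, and consider the merged ensemble $\{\lambda q_k,\ket{\psi_k}\}\cup\{(1-\lambda)r_l,\ket{\phi_l}\}$. Its weights are nonnegative, sum to one, and reproduce $\lambda\rho+(1-\lambda)\rho=\rho$, so it belongs to $\D(\rho)$, and its associated vector is exactly $\lambda x+(1-\lambda)y$. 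Hence $\lambda x+(1-\lambda)y\in\Ud(\rho)$, which proves convexity of $\Ud(\rho)$.

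For $\Uo(\rho)$ the same merging idea is used, but now combined with Prop.~\ref{lemma:theorem4du2019}. I would start from $u=\mu^\downarrow(\ket{\psi}\bra{\psi})$ and $v=\mu^\downarrow(\ket{\phi}\bra{\phi})$ with $\ket{\psi}\bra{\psi},\ket{\phi}\bra{\phi}\in\O(\rho)$, and set $w=\lambda u+(1-\lambda)v$. Since a convex combination of decreasingly ordered vectors is again decreasingly ordered, $w\in\Delta^\downarrow_d$, so it is realized by the pure state $\ket{\chi}=\sum_i\sqrt{w_i}\ket{i}$, for which $\mu^\downarrow(\ket{\chi}\bra{\chi})=w$. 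It then remains to prove $\ket{\chi}\bra{\chi}\io\rho$. Applying Prop.~\ref{lemma:theorem4du2019} to $\ket{\psi}\bra{\psi}\io\rho$ and $\ket{\phi}\bra{\phi}\io\rho$, there are decompositions $\{p_n,\ket{\xi_n}\},\{r_m,\ket{\eta_m}\}\in\D(\rho)$ with $u\preceq P:=\sum_n p_n\mu^\downarrow(\ket{\xi_n}\bra{\xi_n})$ and $v\preceq Q:=\sum_m r_m\mu^\downarrow(\ket{\eta_m}\bra{\eta_m})$. I would then merge these into $\{\lambda p_n,\ket{\xi_n}\}\cup\{(1-\lambda)r_m,\ket{\eta_m}\}\in\D(\rho)$, whose associated vector is $\lambda P+(1-\lambda)Q$, and verify $w\preceq\lambda P+(1-\lambda)Q$.

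The crux, and the only step I expect to require care, is this last majorization. Writing $s_j(\cdot)$ for the sum of the $j$ largest entries, the naive estimate $s_j(w)\le\lambda s_j(u)+(1-\lambda)s_j(v)\le\lambda s_j(P)+(1-\lambda)s_j(Q)$ does not immediately conclude, because subadditivity only gives $s_j(\lambda P+(1-\lambda)Q)\le\lambda s_j(P)+(1-\lambda)s_j(Q)$, the wrong direction for chaining. The resolution is that $u,v,P,Q$ are all already decreasingly ordered, and on such vectors $s_j$ is additive: $s_j(\lambda P+(1-\lambda)Q)=\lambda s_j(P)+(1-\lambda)s_j(Q)$ and likewise $s_j(w)=\lambda s_j(u)+(1-\lambda)s_j(v)$. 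With these equalities the chain closes, $s_j(w)=\lambda s_j(u)+(1-\lambda)s_j(v)\le\lambda s_j(P)+(1-\lambda)s_j(Q)=s_j(\lambda P+(1-\lambda)Q)$ for every $j$, so $w\preceq\lambda P+(1-\lambda)Q$. By Prop.~\ref{lemma:theorem4du2019} this gives $\ket{\chi}\bra{\chi}\io\rho$, hence $w=\mu^\downarrow(\ket{\chi}\bra{\chi})\in\Uo(\rho)$, establishing convexity of $\Uo(\rho)$.
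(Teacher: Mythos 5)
Your proof is correct and follows essentially the same route as the paper: the merged-ensemble argument for $\Ud(\rho)$ is identical, and for $\Uo(\rho)$ you combine Prop.~\ref{lemma:theorem4du2019} in both directions with the fact that convex combinations of decreasingly ordered vectors preserve majorization, which is exactly the paper's strategy (it merely packages the two ingredients you derive inline as Lemma~\ref{lemma:cota} and Lemma~\ref{lemma:resultado_auxiliar} in its appendix).
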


Another observation that will be useful for characterizing quantum coherence is the following. 
For a given $\rho$, $\Ud(\rho), \Uo(\rho) \subseteq \Delta^\downarrow_{d}$, and, since the majorization lattice is complete (see \eg \cite{Bapat1991,Bosyk2019}), the supremum and infimum (with respect to majorization partial order) of both sets always exist. Moreover, the supremum of both sets coincide.
This result is stated in the following proposition.

\begin{proposition}
	\label{prop:supremum_equality}
	$\bigvee\Ud(\rho) = \bigvee\Uo(\rho)$.
\end{proposition}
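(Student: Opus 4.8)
The plan is to establish the equality by proving two majorization inequalities, both relying on the characterization of pure-to-mixed conversions in Prop.~\ref{lemma:theorem4du2019}, together with the least-upper-bound property of the majorization lattice. The conceptual picture I would keep in mind is that $\Uo(\rho)$ is nothing but the downward closure of $\Ud(\rho)$ under $\preceq$ inside $\Delta^\downarrow_{d}$, and a set and its downward closure have exactly the same upper bounds, hence the same supremum. Making this precise requires using Prop.~\ref{lemma:theorem4du2019} in both directions.

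First I would show $\Ud(\rho) \subseteq \Uo(\rho)$. Take $v = \sum_k q_k \mudown(\ket{\psi_k}\bra{\psi_k}) \in \Ud(\rho)$ coming from a decomposition $\{q_k,\ket{\psi_k}\}_{k=1}^M \in \D(\rho)$. Since a convex combination of vectors in $\Delta^\downarrow_{d}$ is again in $\Delta^\downarrow_{d}$, we have $v \in \Delta^\downarrow_{d}$, so there is a pure state $\ket{\psi}$ with $\mu(\ket{\psi}\bra{\psi}) = v$ (e.g. $\ket{\psi} = \sum_i \sqrt{v_i}\ket{i}$). Applying Prop.~\ref{lemma:theorem4du2019} with this same decomposition and the reflexive relation $v \preceq \sum_k q_k \mudown(\ket{\psi_k}\bra{\psi_k}) = v$, I conclude $\ket{\psi}\bra{\psi} \io \rho$, hence $v = \mudown(\ket{\psi}\bra{\psi}) \in \Uo(\rho)$. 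From $\Ud(\rho) \subseteq \Uo(\rho)$ it follows that every upper bound of $\Uo(\rho)$ bounds $\Ud(\rho)$; in particular $\bigvee\Uo(\rho)$ does, giving $\bigvee\Ud(\rho) \preceq \bigvee\Uo(\rho)$.

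For the reverse inequality I would show that each element of $\Uo(\rho)$ is majorized by some element of $\Ud(\rho)$. Let $u = \mudown(\ket{\psi}\bra{\psi}) \in \Uo(\rho)$, so $\ket{\psi}\bra{\psi} \io \rho$. By Prop.~\ref{lemma:theorem4du2019} there is a decomposition $\{p_n,\ket{\phi_n}\}_{n=1}^N \in \D(\rho)$ with $\mu(\ket{\psi}\bra{\psi}) \preceq \sum_n p_n \mudown(\ket{\phi_n}\bra{\phi_n})$. Since majorization is invariant under permutation of the entries, $u = \mudown(\ket{\psi}\bra{\psi}) \preceq \sum_n p_n \mudown(\ket{\phi_n}\bra{\phi_n}) =: w \in \Ud(\rho)$. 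As $w \preceq \bigvee\Ud(\rho)$, transitivity yields $u \preceq \bigvee\Ud(\rho)$; thus $\bigvee\Ud(\rho)$ is an upper bound of $\Uo(\rho)$, and $\bigvee\Uo(\rho) \preceq \bigvee\Ud(\rho)$.

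Combining the two inequalities with the antisymmetry of $\preceq$ on $\Delta^\downarrow_{d}$ gives $\bigvee\Ud(\rho) = \bigvee\Uo(\rho)$. I expect the only delicate point to be the correct two-directional use of Prop.~\ref{lemma:theorem4du2019}: in one direction I must produce a pure state realizing a prescribed ordered vector and invoke reflexivity, and in the other I must convert the $\mu$-majorization supplied by the proposition into a $\mudown$-majorization via permutation invariance. The auxiliary facts — that $\Delta^\downarrow_{d}$ is closed under convex combinations and that both suprema exist because the lattice is complete — are routine and already noted in the excerpt.
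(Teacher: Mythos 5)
Your proof is correct and follows essentially the same route as the paper: the paper isolates your two steps as Lemma~\ref{lemma:inclusion} ($\Ud(\rho)\subseteq\Uo(\rho)$, via reflexivity and Prop.~\ref{lemma:theorem4du2019}) and Lemma~\ref{lemma:cota} (every element of $\Uo(\rho)$ is majorized by one of $\Ud(\rho)$, again via Prop.~\ref{lemma:theorem4du2019}), then combines them with antisymmetry exactly as you do. Your remark that $\Uo(\rho)$ is the downward closure of $\Ud(\rho)$ is a nice way to package the same content.
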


This result allows to define the coherence vector of a general quantum state, generalizing the definition given in \eqref{eq:cohrence_vector_pure}. 

\begin{definition}[\textbf{Generalized coherence vector}]
	\label{def:coherence_vector_mixed}
	For any quantum state $\rho$, its generalized coherence vector $\vc(\rho)$ is defined as 
	\begin{equation}
	\label{eq:coherence_vector_mixed}
	\vc(\rho) = \bigvee \Ud(\rho),
	\end{equation}
	or, equivalently as $\vc(\rho) = \bigvee \Uo(\rho)$. 

\end{definition}

Notice that for a pure state, the generalized coherence vector is equal to the ordered coherence vector, i.e., $\vc (\ket{\psi}\bra{\psi}) = \mudown (\ket{\psi}\bra{\psi})$, which means that the Def.~\ref{def:coherence_vector_mixed} is a suitable extension of Def.~\ref{def:coherence_vector_pure}. 
	
We observe that whenever $\bigvee \Ud(\rho) \in \Ud(\rho)$, $\bigvee \Ud(\rho)$ is a maximum.
We call \textit{optimal pure state decomposition} to the ensemble that reaches this maximum.
\begin{definition}[\textbf{Optimal pure sate decomposition}]
	\label{def:optimal pure sate decomposition}
	An ensemble $\{\tilde{q}_k, \ket{\tilde{\psi}_k}\}_{k = 1}^M$ is an optimal pure sate decomposition of $\rho$ if  $\{\tilde{q}_k, \ket{\tilde{\psi}_k}\}_{k = 1}^M \in \D(\rho)$ and $\sum_{k = 1}^{M}   \tilde{q}_k \mu^\downarrow(\ket{\tilde{\psi}_k}\bra{\tilde{\psi}_k}) = \nu(\rho)$. 
	\
\end{definition}
In Ref.~\cite{Yu2020}, it is stated that for a general quantum state it is not easy to prove 
whether an optimal pure state decomposition always exists 
In Sec.~\ref{sec:examples}, we will provide a method to check if the supremum is a maximum. 
In particular, we will show that there are qutrit states for which the optimal ensemble does not exist. This implies that in general the optimal pure state decomposition of a quantum state does not exist.

Whenever $\bigvee \Uo(\rho) \in \Uo(\rho)$, $\bigvee \Uo(\rho)$ is also a maximum.
We call \textit{optimal pure state} to the state that reaches the maximum.
\begin{definition}[\textbf{Optimal pure state}]
	\label{def:optimal pure sate}
	A pure state $\ket{\tilde{\psi}}$ is optimal if $\ket{\tilde{\psi}}\bra{\tilde{\psi}} \in \O(\rho)$ and $ \mu^\downarrow(\ket{\tilde{\psi}}\bra{\tilde{\psi}}) = \nu(\rho)$.
\end{definition}

We have that when there exists an optimal pure state decomposition, there also exists an optimal pure state, and vice versa.
\begin{proposition}
	\label{lemma:maximum_sets}
	$\vc(\rho) \in \Ud(\rho) \iff \vc(\rho) \in \Uo(\rho)$.
\end{proposition}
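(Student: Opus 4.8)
The plan is to prove both implications using Prop.~\ref{lemma:theorem4du2019} as the bridge between $\Ud(\rho)$ and $\Uo(\rho)$, together with the antisymmetry of $\preceq$ on $\Delta_d^\downarrow$ and the fact (Prop.~\ref{prop:supremum_equality}) that $\vc(\rho)$ is the common supremum of both sets. A preliminary observation I would record first is that both sets are genuinely contained in $\Delta_d^\downarrow$: a convex combination of decreasingly ordered vectors is itself decreasingly ordered, since the termwise inequality $(\mu^\downarrow)_i\ge(\mu^\downarrow)_{i+1}$ is preserved under averaging. This is exactly what makes $\preceq$ antisymmetric on the vectors involved, which I will exploit in a sandwiching argument.

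For the implication $\vc(\rho)\in\Uo(\rho)\Rightarrow\vc(\rho)\in\Ud(\rho)$, suppose there is an optimal pure state $\ket{\tilde\psi}$ with $\ket{\tilde\psi}\bra{\tilde\psi}\io\rho$ and $\mu^\downarrow(\ket{\tilde\psi}\bra{\tilde\psi})=\vc(\rho)$. Applying the forward direction of Prop.~\ref{lemma:theorem4du2019} to $\ket{\tilde\psi}\bra{\tilde\psi}\io\rho$ produces a decomposition $\{p_n,\ket{\phi_n}\}\in\D(\rho)$ with $\mu(\ket{\tilde\psi}\bra{\tilde\psi})\preceq w$, where $w=\sum_n p_n\mu^\downarrow(\ket{\phi_n}\bra{\phi_n})\in\Ud(\rho)$. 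Since majorization depends only on sorted entries, $\mu(\ket{\tilde\psi}\bra{\tilde\psi})\preceq w$ is the same as $\vc(\rho)\preceq w$. On the other hand, $w\in\Ud(\rho)$ forces $w\preceq\bigvee\Ud(\rho)=\vc(\rho)$. Antisymmetry on $\Delta_d^\downarrow$ then yields $w=\vc(\rho)$, so $\vc(\rho)\in\Ud(\rho)$.

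For the converse, I would start from an optimal pure state decomposition $\{\tilde q_k,\ket{\tilde\psi_k}\}\in\D(\rho)$ realizing $\sum_k\tilde q_k\mu^\downarrow(\ket{\tilde\psi_k}\bra{\tilde\psi_k})=\vc(\rho)$, and construct the pure state $\ket{\tilde\psi}=\sum_i\sqrt{(\vc(\rho))_i}\,\ket{i}$, whose coherence vector is exactly $\mu(\ket{\tilde\psi}\bra{\tilde\psi})=\vc(\rho)$ (already ordered, so $\mu^\downarrow(\ket{\tilde\psi}\bra{\tilde\psi})=\vc(\rho)$). Condition (1) of Prop.~\ref{lemma:theorem4du2019} holds because $\{\tilde q_k,\ket{\tilde\psi_k}\}$ decomposes $\rho$, while condition (2), namely $\mu(\ket{\tilde\psi}\bra{\tilde\psi})\preceq\sum_k\tilde q_k\mu^\downarrow(\ket{\tilde\psi_k}\bra{\tilde\psi_k})$, reduces to $\vc(\rho)\preceq\vc(\rho)$ and thus holds by reflexivity. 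The converse direction of Prop.~\ref{lemma:theorem4du2019} then gives $\ket{\tilde\psi}\bra{\tilde\psi}\io\rho$, so $\ket{\tilde\psi}\bra{\tilde\psi}\in\O(\rho)$ and hence $\vc(\rho)=\mu^\downarrow(\ket{\tilde\psi}\bra{\tilde\psi})\in\Uo(\rho)$.

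The step I expect to require the most care is the sandwiching in the first implication: one must confirm that the decomposition vector $w$ delivered by Prop.~\ref{lemma:theorem4du2019} really lies in $\Delta_d^\downarrow$, so that $\vc(\rho)\preceq w\preceq\vc(\rho)$ forces genuine equality rather than mere equivalence up to a permutation, and one must invoke the two expressions for $\vc(\rho)$ from Prop.~\ref{prop:supremum_equality} consistently across the two directions. The remainder is bookkeeping around the permutation-invariance of majorization and the existence of a pure state with a prescribed coherence vector.
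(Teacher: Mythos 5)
Your proof is correct and follows essentially the same route as the paper: the authors factor the two directions through Lemma~\ref{lemma:inclusion} and Lemma~\ref{lemma:cota}, both of which are themselves proved by exactly your applications of Prop.~\ref{lemma:theorem4du2019} (reflexivity for one direction, the existence of a dominating decomposition vector for the other), followed by the same sandwich-and-antisymmetry step via Prop.~\ref{prop:supremum_equality}. You have merely inlined those two lemmas rather than citing them.
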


In what follows, we will show that the generalized coherence vector satisfies several properties that capture the main features of quantum.
The first observation is that the generalized coherence vector completely characterizes the notion of incoherent state.
\begin{proposition}	
\label{prop:incoheren_state}
$\rho$ is incoherent~$\!\iff\!\vc(\rho)\!=\!~(1,0,\ldots,0)$.
\end{proposition}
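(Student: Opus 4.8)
The plan is to exploit two elementary facts: that $(1,0,\ldots,0)$ is the top element of the majorization lattice $\Delta_d^\downarrow$ (it majorizes every probability vector, since $s_j((1,0,\ldots,0))=1\ge s_j(u)$ for all $j\ge1$), and that it is precisely the ordered coherence vector $\mudown(\ket{i}\bra{i})$ of any incoherent basis state. I would prove the two implications separately, using $\Ud(\rho)$ for the forward direction and $\Uo(\rho)$ for the backward one, which is legitimate because $\vc(\rho)=\bigvee\Ud(\rho)=\bigvee\Uo(\rho)$.

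For the forward implication, assume $\rho$ is incoherent, so $\rho=\sum_{i=0}^{d-1}p_i\ket{i}\bra{i}$. This is a legitimate pure state decomposition, i.e. $\{p_i,\ket{i}\}\in\D(\rho)$, and since $\mudown(\ket{i}\bra{i})=(1,0,\ldots,0)$ for every $i$, the associated probability vector is $\sum_i p_i\,\mudown(\ket{i}\bra{i})=(1,0,\ldots,0)\in\Ud(\rho)$. Being the top element of the lattice, it is automatically a maximum of $\Ud(\rho)$, whence $\vc(\rho)=\bigvee\Ud(\rho)=(1,0,\ldots,0)$. For the backward implication I would reduce everything to showing that the supremum is actually \emph{attained}, i.e. $(1,0,\ldots,0)\in\Uo(\rho)$. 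Once this holds there is a pure state $\ket{\psi}$ with $\ket{\psi}\bra{\psi}\in\O(\rho)$ and $\mudown(\ket{\psi}\bra{\psi})=(1,0,\ldots,0)$; the latter forces $\ket{\psi}$ to be a basis state $\ket{i}$, so $\ket{i}\bra{i}\io\rho$. Since $\ket{i}\bra{i}$ is incoherent and every incoherent operation maps $\I$ into $\I$ (IO $\subseteq$ MIO), this yields $\rho\in\I$.

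The main obstacle is exactly this attainment issue, since a supremum in the majorization lattice need not lie in the set. A priori $\bigvee\Uo(\rho)=(1,0,\ldots,0)$ only says that the least concave majorant of the Lorenz curves reaches height $1$ at $x=1$; using the explicit supremum construction recalled in Sec.~\ref{sec:review_coherence}, the first component of $\vc(\rho)$ equals the value of this envelope at $x=1$, and one checks that this value cannot exceed $\max(S_1,1/2)$, where $S_m=\sup\{s_m(u):u\in\Uo(\rho)\}\le1$ and $s_m(u)=\sum_{i=0}^{m-1}u_i$. Hence the envelope reaches $1$ only if $S_1=\sup\{u_0:u\in\Uo(\rho)\}=1$. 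To upgrade this supremum to a maximum I would invoke the compactness of $\O(\rho)$ (Lemma~\ref{lemma:compact}): the map $\ket{\psi}\bra{\psi}\mapsto\mudown(\ket{\psi}\bra{\psi})$ is continuous, so $\Uo(\rho)$ is compact, the continuous functional $u\mapsto u_0$ attains its supremum $S_1=1$ at some $u^{\ast}\in\Uo(\rho)$, and $u^{\ast}_0=1$ together with $u^{\ast}\in\Delta_d^\downarrow$ forces $u^{\ast}=(1,0,\ldots,0)$. This places $(1,0,\ldots,0)$ in $\Uo(\rho)$ and closes the argument.
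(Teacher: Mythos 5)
Your proof is correct and follows essentially the same route as the paper: the forward direction uses the diagonal decomposition to place the top element $(1,0,\ldots,0)$ in $\Ud(\rho)$, and the backward direction uses the compactness of $\O(\rho)$ (Lemma~\ref{lemma:compact}) to show the supremum of the first component is attained, producing an incoherent pure state in $\O(\rho)$ and hence forcing $\rho\in\I$. The only cosmetic differences are that you maximize $u\mapsto u_0$ over $\Uo(\rho)$ instead of $|\braket{i|\psi}|^2$ over $\O(\rho)$, and you argue directly rather than by \emph{reductio ad absurdum}.
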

\noindent This result justifies the Def.~\ref{def:coherence_vector_mixed} for the generalized coherence vector. 

We also have that the generalized  coherence vector fully characterizes maximally coherent states. 
\begin{proposition}
	\label{prop:coherence_vector_maximum_coherence}
	$\rho$ is maximally coherent $\iff \vc(\rho) = \left(\frac{1}{d}, \ldots, \frac{1}{d}\right)$.
\end{proposition}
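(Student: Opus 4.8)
The plan is to exploit that the uniform vector $(1/d,\ldots,1/d)$ is the \emph{bottom} element of the majorization lattice $\Delta^\downarrow_{d}$: every $u\in\Delta_{d}$ satisfies $(1/d,\ldots,1/d)\preceq u$. Since $\vc(\rho)=\bigvee\Ud(\rho)$ is by definition an upper bound of $\Ud(\rho)$, the hypothesis $\vc(\rho)=(1/d,\ldots,1/d)$ sandwiches every $w\in\Ud(\rho)$ between $(1/d,\ldots,1/d)$ from below (as the bottom element) and $(1/d,\ldots,1/d)$ from above (as that upper bound). By antisymmetry of $\preceq$ on $\Delta^\downarrow_{d}$ this forces $w=(1/d,\ldots,1/d)$ for \emph{every} $w\in\Ud(\rho)$. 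This single observation is the engine of both implications.

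For the forward implication I would recall that every maximally coherent state is pure and lies in the orbit of $\rho^{\mcs}$ under the unitary incoherent operations $U_{\IO}=\sum_i e^{\imath\theta_i}\ket{\pi(i)}\bra{i}$. Such a unitary only permutes labels and inserts phases, hence preserves each modulus $|\braket{i|\Psi^{\mcs}}|^2=1/d$; therefore $\mu(\rho)$ is a permutation of $(1/d,\ldots,1/d)$. Since $\rho$ is pure, $\vc(\rho)=\mudown(\rho)=(1/d,\ldots,1/d)$.

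For the converse I would assume $\vc(\rho)=(1/d,\ldots,1/d)$ and invoke the opening observation. The key step is to upgrade ``every decomposition vector is uniform'' to ``every pure state in the range of $\rho$ is maximally coherent.'' I would use the standard fact that any unit vector $\ket{\phi}$ in the range of $\rho$ can be completed to a decomposition $\{q_k,\ket{\psi_k}\}\in\D(\rho)$ with $\ket{\psi_1}=\ket{\phi}$ and $q_1>0$ (take $\epsilon>0$ small enough that $\rho-\epsilon\ket{\phi}\bra{\phi}\ge 0$). The associated element $\sum_k q_k\mudown(\ket{\psi_k}\bra{\psi_k})$ equals $(1/d,\ldots,1/d)$; comparing first (largest) entries and using that the top entry of any ordered coherence vector is at least $1/d$, with equality only for the uniform vector, gives $\tfrac1d=\sum_k q_k[\mudown(\ket{\psi_k}\bra{\psi_k})]_1\ge\tfrac1d$, so each top entry is $1/d$, and in particular $\mudown(\ket{\phi}\bra{\phi})=(1/d,\ldots,1/d)$. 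Thus $\ket{\phi}$ is maximally coherent.

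It then remains to show that a subspace all of whose unit vectors are maximally coherent is one-dimensional. Let $P$ be the projector onto the range of $\rho$. The condition reads $\bra{\phi}\big(\ket{i}\bra{i}\big)\ket{\phi}=1/d$ for every unit vector $\ket{\phi}$ in the range and every $i$; by polarization this constant quadratic form forces $P\ket{i}\bra{i}P=\tfrac1d P$ for each $i$. The left-hand side has rank at most one while the right-hand side has rank $\dim(\mathrm{range}\,\rho)$, so $\rho$ has rank one. Hence $\rho$ is pure with uniform coherence vector, i.e.\ maximally coherent, completing the converse. I expect this final rank argument (equivalently, ruling out $\dim\ge 2$ by also testing the vectors $(\ket{e_1}+e^{\imath\alpha}\ket{e_2})/\sqrt2$ and letting $\alpha$ vary) to be the main obstacle, since it is the only place where the geometry of the incoherent basis, rather than pure majorization bookkeeping, enters.
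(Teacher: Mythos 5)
Your proposal is correct, and it reaches the same two intermediate milestones as the paper: the forward direction via the orbit of $\ket{\Psi^{\mcs}}$ under incoherent unitaries, and the fact that every element of every pure-state decomposition of $\rho$ must have uniform coherence vector (the paper derives this from $(1/d,\ldots,1/d)$ being an extreme point of $\Delta^\downarrow_{d}$, you from comparing largest entries and using $\max_i|\braket{i|\psi_k}|^2\ge 1/d$ — same content, equally valid). Where you genuinely diverge is in the final step of ruling out mixed $\rho$. The paper works with the spectral decomposition, notes that two eigenvectors $\ket{e_1},\ket{e_2}$ with nonzero eigenvalues would both have to be maximally coherent, and then uses the Schr\"odinger mixture theorem with an explicitly constructed $2\times 2$ unitary block to produce a decomposition whose first element satisfies $\braket{0|\phi_1}=0$, contradicting maximal coherence. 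You instead upgrade to ``every unit vector in the range of $\rho$ is maximally coherent'' (via the standard completion $\rho-\epsilon\ket{\phi}\bra{\phi}\ge 0$, which is legitimate) and then polarize the constant quadratic form to get $P\ket{i}\bra{i}P=\tfrac{1}{d}P$, whose rank comparison forces $\operatorname{rank}\rho=1$. Your route avoids the explicit unitary construction and is more conceptual, locating the obstruction in the geometry of the range subspace; the paper's is more hands-on, needing only the eigenvectors and exhibiting a concrete offending decomposition. Both arguments are complete and correct.
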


We observe that, by definition, for any pure state decomposition of $\rho$, the following majorization relation is satisfied.
\begin{proposition}
	\label{prop:convexity_mu}
	Let $\rho = \sum_{k = 1}^{M} p_k \ket{\psi_k}\bra{\psi_k}$. Then,
	\begin{equation}
	\label{eq:convexity_mu}
	\sum_{k = 1}^{M}  p_k \vc(\ket{\psi_k}\bra{\psi_k}) \preceq \vc(\rho).
	\end{equation}
\end{proposition}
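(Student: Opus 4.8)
The plan is to derive the statement directly from the upper-bound property of the supremum in the majorization lattice, combined with the fact that $\vc$ reduces to the ordered coherence vector on pure states. First I would invoke the remark following Definition~\ref{def:coherence_vector_mixed}, namely $\vc(\ket{\psi_k}\bra{\psi_k}) = \mudown(\ket{\psi_k}\bra{\psi_k})$, to rewrite the left-hand side of \eqref{eq:convexity_mu} as $\sum_{k=1}^M p_k \mudown(\ket{\psi_k}\bra{\psi_k})$.

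Next, since by hypothesis $\rho = \sum_{k=1}^M p_k \ket{\psi_k}\bra{\psi_k}$ with $\{p_k, \ket{\psi_k}\}_{k=1}^M$ a legitimate ensemble, this ensemble belongs to $\D(\rho)$, and so by the very definition of $\Ud(\rho)$ the vector $\sum_{k=1}^M p_k \mudown(\ket{\psi_k}\bra{\psi_k})$ is an element of $\Ud(\rho)$. Here one should check that this vector genuinely lies in $\Delta^\downarrow_{d}$: each $\mudown(\ket{\psi_k}\bra{\psi_k})$ is non-increasingly ordered, and a convex combination of non-increasingly ordered vectors is again non-increasingly ordered, so the containment $\Ud(\rho) \subseteq \Delta^\downarrow_{d}$ is respected and the comparison via $\preceq$ is well-posed.

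Finally, I would use that $\vc(\rho) = \bigvee \Ud(\rho)$ is, by Definition~\ref{def:coherence_vector_mixed}, the supremum of $\Ud(\rho)$ in the complete lattice $(\Delta^\downarrow_{d}, \preceq)$, hence in particular an upper bound of that set. Therefore every element of $\Ud(\rho)$ is majorized by $\vc(\rho)$, and applying this to the specific element identified above yields $\sum_{k=1}^M p_k \mudown(\ket{\psi_k}\bra{\psi_k}) \preceq \vc(\rho)$, which is precisely \eqref{eq:convexity_mu}. I do not expect any substantial obstacle: the content of the proposition is essentially the least-upper-bound property of the supremum, and the only point demanding a moment's care is the verification that the relevant convex combination stays inside $\Delta^\downarrow_{d}$, so that the supremum is indeed taken over a set containing it.
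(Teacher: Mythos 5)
Your proposal is correct and follows essentially the same route as the paper's own proof: the convex combination $\sum_k p_k \mu^\downarrow(\ket{\psi_k}\bra{\psi_k})$ belongs to $\Ud(\rho)$ by definition, and $\vc(\rho)=\bigvee\Ud(\rho)$ is an upper bound of that set. Your extra check that the convex combination remains in $\Delta^\downarrow_d$ is a welcome (if minor) addition that the paper relegates to Lemma~\ref{lemma:resultado_auxiliar}.
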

This result allows us to interpret the definition  $\nu(\rho) = \bigvee \Ud(\rho) $  as a kind of concave roof extension of the pure state coherence vector.\footnote{Indeed, our proposal can be stated in a more abstract framework, generalizing the notion of concave roof extension of a function. 
More precisely, let $\Omega$ be a compact convex set and $\Omega^{\mathrm{pure}}$ be the set formed by its extremal points.
The concave roof $f^{\croof}:\Omega \to \R$ of the function $f:\Omega^{\mathrm{pure}}\to\R$ is defined as  $f^{\croof}(\omega)=\sup \sum_k q_k f(\omega_k)$, where the supremum is taken over all extremal convex decompositions of $\omega= \sum_k q_k \omega_k, \ \omega_k \in \Omega^{\mathrm{pure}}$ (see \eg \cite{Uhlmann2010}).
This construction can be extended to functions from a compact convex set to the majorization lattice.
The concave roof $\vec{f}^{\croof}:\Omega \to \Delta^\downarrow_{d}$ of the function $\vec{f}:\Omega^{\mathrm{pure}}\to\Delta^\downarrow_{d}$ is defined as $\vec{f}^{\croof}(\omega) = \bigvee \sum_k q_k \vec{f}(\omega_k)$,
where, in this case, the supremum is the one of the majorization lattice.
}
However, the question whether the above statement is valid for any mixed state decomposition of $\rho$ remains open.

Alternatively, the equivalence 
$\bigvee \Ud(\rho)= \bigvee \Uo(\rho)$, gives the generalized coherent vector an operational interpretation in terms of pure state transformations.
In this sense, our definition is 
physically and mathematically well motivated, and it is a 
suitable extension of the
pure state coherence vector given in Def.~\ref{def:coherence_vector_pure}.

\section{Necessary conditions for incoherent transformations}
\label{sec:necessary conditions}

In this section, we apply the notion of generalized coherence vector, given in  \ref{def:coherence_vector_mixed}, for characterizing state transformations between arbitrary quantum states.

\begin{proposition}
	\label{prop:monotonocity_vectorcoherence}
	Let $\rho$ and $\sigma$ be two arbitrary quantum states. Then,
	\begin{multline}
	\label{eq:monotonocity_vectorcoherence}
	\rho \io \sigma \! \implies \!  \forall \{q_k,\ket{\psi_k}\}_{k = 1}^M \!\in\! \D(\rho),   \exists \{r_l,\ket{\phi_l}\}_{l \in L} \!\in\! \D(\sigma): 	\sum_{k = 1}^{M}  q_k \mu^\downarrow(\ket{\psi_k}\bra{\psi_k}) \preceq  \sum_{l \in L}  r_{l} \mu^\downarrow(\ket{\phi_{l}}\bra{\phi_{l}}). 
	\end{multline}
\end{proposition}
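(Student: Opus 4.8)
The plan is to unfold $\rho\io\sigma$ into an explicit Kraus representation and then push an arbitrary pure-state decomposition of $\rho$ through it. By hypothesis there is an incoherent operation $\Lambda$ with incoherent Kraus operators $\{K_n\}_{n=1}^N$ such that $\sigma=\Lambda(\rho)$. Fix any $\{q_k,\ket{\psi_k}\}_{k=1}^M\in\D(\rho)$. For each $k,n$ with $p_{k,n}:=\Tr(K_n\ket{\psi_k}\bra{\psi_k}K_n^\dag)>0$, set $\ket{\phi_{k,n}}\bra{\phi_{k,n}}:=K_n\ket{\psi_k}\bra{\psi_k}K_n^\dag/p_{k,n}$. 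Using $\sum_n K_n^\dag K_n=I$ and cyclicity of the trace one checks $\sum_n p_{k,n}=1$ and $\sum_{k,n}q_k p_{k,n}=1$, and moreover $\sum_{k,n}q_k p_{k,n}\ket{\phi_{k,n}}\bra{\phi_{k,n}}=\sum_k q_k\Lambda(\ket{\psi_k}\bra{\psi_k})=\Lambda(\rho)=\sigma$. Hence, indexing $L=\{(k,n):p_{k,n}>0\}$ and writing $r_{(k,n)}=q_k p_{k,n}$, $\ket{\phi_{(k,n)}}=\ket{\phi_{k,n}}$, the family $\{r_l,\ket{\phi_l}\}_{l\in L}$ is the candidate decomposition in $\D(\sigma)$ whose existence we must exhibit.

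Second, I would apply the pure-state estimate of Eq.~\eqref{eq:lemma4zhu} separately to each $\ket{\psi_k}$. Viewing $\Lambda$ as an incoherent operation sending $\ket{\psi_k}\bra{\psi_k}$ to $\Lambda(\ket{\psi_k}\bra{\psi_k})$, Eq.~\eqref{eq:lemma4zhu} gives $\mu(\ket{\psi_k}\bra{\psi_k})\preceq\sum_n p_{k,n}\mu^\downarrow(\ket{\phi_{k,n}}\bra{\phi_{k,n}})$. Since majorization on $\Delta_d$ depends only on the sorted entries, $\mu$ may be replaced by $\mu^\downarrow$ on the left, so that for every $k$ we obtain $\mu^\downarrow(\ket{\psi_k}\bra{\psi_k})\preceq\sum_n p_{k,n}\mu^\downarrow(\ket{\phi_{k,n}}\bra{\phi_{k,n}})$.

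The final, and delicate, step is to combine these $M$ relations into the single majorization of the statement by taking the $q_k$-weighted sum. Adding majorization relations is not valid in general, so the argument must exploit a special feature of the present setting: every vector appearing here already lies in $\Delta^\downarrow_d$, because both sides of each per-$k$ relation are nonnegative combinations of decreasingly ordered vectors $\mu^\downarrow(\cdot)$, hence are themselves decreasingly ordered, and this property is preserved under the further $q_k$-weighted sum. For vectors that are already sorted, the partial-sum functionals $s_j(\cdot)=\sum_{i=0}^{j-1}(\cdot)_i$ are linear (no reordering is needed), so from $s_j\!\left(\mu^\downarrow(\ket{\psi_k}\bra{\psi_k})\right)\le s_j\!\left(\sum_n p_{k,n}\mu^\downarrow(\ket{\phi_{k,n}}\bra{\phi_{k,n}})\right)$ for all $j$ and $k$, multiplying by $q_k\ge 0$ and summing over $k$ preserves the inequality for every $j$. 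This yields $\sum_k q_k\mu^\downarrow(\ket{\psi_k}\bra{\psi_k})\preceq\sum_{k,n}q_k p_{k,n}\mu^\downarrow(\ket{\phi_{k,n}}\bra{\phi_{k,n}})=\sum_{l\in L}r_l\,\mu^\downarrow(\ket{\phi_l}\bra{\phi_l})$, which is exactly the claim. I expect this last step to be the crux: the whole argument hinges on the fact that the definition uses the \emph{ordered} vectors $\mu^\downarrow$, which is precisely what makes the weighted sum of majorization relations go through; the remaining content is bookkeeping with the Kraus data.
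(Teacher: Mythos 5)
Your proof is correct and follows essentially the same route as the paper: push the arbitrary decomposition of $\rho$ through the incoherent Kraus operators, apply the pure-state relation~\eqref{eq:lemma4zhu} to each $\ket{\psi_k}$, and combine the resulting majorization relations via a $q_k$-weighted sum. The only cosmetic difference is that you prove the combination step inline (closedness of $\Delta_d^\downarrow$ under convex combinations plus linearity of the partial sums on sorted vectors), whereas the paper isolates exactly this fact as Lemma~\ref{lemma:resultado_auxiliar}.
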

Notice that this result generalizes the necessary condition of Prop.~\ref{lemma:theorem4du2019}.
In addition, we have the following consequences.

\begin{corollary}
	\label{cor:monotonocity_vectorcoherence 2}
	Let $\rho$ and $\sigma$ be two  quantum states. Then, 	
	\begin{equation}
	\label{eq:strong_monotonocity_vectorcoherence_supremum}
	\rho \io \sigma  \implies \vc(\rho) \preceq \sum_{n = 1}^{N}  p_n \vc(\sigma_n),
	\end{equation}
	with $p_n = \Tr\left(K_n \rho K^\dag_n\right)$ and $\sigma_n = K_n \rho K^\dag_n/p_n$, where $\{K_n\}_{n = 1}^N$ are incoherent Kraus operators such that $\sigma= \sum_{n = 1}^{N}  K_n \rho K^\dag_n$.
\end{corollary}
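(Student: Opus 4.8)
The plan is to show that the probability vector $\sum_{n=1}^N p_n \vc(\sigma_n)$ is an upper bound, in the majorization order, for every element of $\Ud(\rho)$; since $\vc(\rho)=\bigvee\Ud(\rho)$ by Def.~\ref{def:coherence_vector_mixed}, this immediately yields $\vc(\rho)\preceq\sum_n p_n\vc(\sigma_n)$. First I would note that each $\vc(\sigma_n)\in\Delta^\downarrow_{d}$ and that a convex combination of decreasingly ordered vectors is again decreasingly ordered, so $\sum_n p_n\vc(\sigma_n)\in\Delta^\downarrow_{d}$ and the statement is well posed.

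Next, I would fix an arbitrary decomposition $\{q_k,\ket{\psi_k}\}_{k=1}^M\in\D(\rho)$ and act with the incoherent Kraus operators $\{K_n\}$ on each pure state. Writing $p_{n,k}=\Tr(K_n\ket{\psi_k}\bra{\psi_k}K_n^\dag)$ and $\ket{\phi_{n,k}}\bra{\phi_{n,k}}=K_n\ket{\psi_k}\bra{\psi_k}K_n^\dag/p_{n,k}$, one checks that $\sum_n p_{n,k}=1$, that $p_n=\sum_k q_k p_{n,k}$, and that $\{q_k p_{n,k}/p_n,\ket{\phi_{n,k}}\}_k\in\D(\sigma_n)$ for each $n$. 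Applying the pure-state relation \eqref{eq:lemma4zhu} (equivalently Prop.~\ref{lemma:theorem4du2019}) to each $\ket{\psi_k}$ gives $\mudown(\ket{\psi_k}\bra{\psi_k})\preceq\sum_n p_{n,k}\,\mudown(\ket{\phi_{n,k}}\bra{\phi_{n,k}})$, where both sides already lie in $\Delta^\downarrow_{d}$.

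The combination step is where the ordering is essential. Because all the vectors involved are decreasingly ordered, majorization reduces to a comparison of their partial sums without re-sorting, and this is preserved under nonnegative linear combinations: from $a^{(k)}\preceq b^{(k)}$ with $a^{(k)},b^{(k)}\in\Delta^\downarrow_{d}$ one gets $\sum_k q_k a^{(k)}\preceq\sum_k q_k b^{(k)}$. Applying this yields $\sum_k q_k\mudown(\ket{\psi_k}\bra{\psi_k})\preceq\sum_{n,k} q_k p_{n,k}\,\mudown(\ket{\phi_{n,k}}\bra{\phi_{n,k}})$. Regrouping the right-hand side by the Kraus index as $\sum_n p_n\big(\sum_k (q_k p_{n,k}/p_n)\,\mudown(\ket{\phi_{n,k}}\bra{\phi_{n,k}})\big)$, the inner block is an element of $\Ud(\sigma_n)$ and hence is majorized by $\vc(\sigma_n)=\bigvee\Ud(\sigma_n)$. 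A second application of the same conic-combination property then gives $\sum_k q_k\mudown(\ket{\psi_k}\bra{\psi_k})\preceq\sum_n p_n\vc(\sigma_n)$.

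Since the decomposition of $\rho$ was arbitrary, $\sum_n p_n\vc(\sigma_n)$ majorizes every element of $\Ud(\rho)$, and taking the supremum in the complete majorization lattice closes the argument. The main obstacle is precisely this combination step: majorization is not preserved under addition for generically ordered vectors, so the proof genuinely relies on keeping every vector in $\Delta^\downarrow_{d}$ through the use of $\mudown$; I would isolate the preservation of majorization under nonnegative combinations of commonly ordered vectors as an auxiliary lemma. A minor point to handle carefully is discarding the indices with $p_{n,k}=0$ (and $p_n=0$) so that the normalized states $\ket{\phi_{n,k}}$ and $\sigma_n$ are well defined.
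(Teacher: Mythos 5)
Your proposal is correct and follows essentially the same route as the paper's proof: it applies the pure-state relation~\eqref{eq:lemma4zhu} to each member of an arbitrary decomposition of $\rho$, combines via the ordered-convex-combination property (the paper's Lemma~\ref{lemma:resultado_auxiliar}) to recover Eq.~\eqref{eq:lemma4zhu_mixed}, identifies the regrouped inner sums as elements of $\Ud(\sigma_n)$ majorized by $\vc(\sigma_n)$, and concludes by taking the supremum over $\Ud(\rho)$. Your explicit attention to the degenerate indices with $p_{n,k}=0$ is a minor point the paper leaves implicit.
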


We observe that the majorization relation~\eqref{eq:strong_monotonocity_vectorcoherence_supremum} generalizes the necessary condition for incoherent transformations from pure to arbitrary states, given in~\eqref{eq:lemma4zhu}, to the general case, i.e., from arbitrary states to arbitrary states.

Another consequence of Prop.~\ref{prop:monotonocity_vectorcoherence} is that the generalized  coherence vectors of two states $\rho$ and $\sigma$ satisfy a majorization whenever $\rho$ can be transformed into $\sigma$.
\begin{corollary}
\label{cor:monotonocity_vectorcoherence_supremum}
Let $\rho$ and $\sigma$ be two quantum states. Then,
\begin{equation}
\label{eq:monotonocity_vectorcoherence_supremum}
\rho \io \sigma  \implies \vc(\rho) \preceq \vc(\sigma).
\end{equation}
\end{corollary}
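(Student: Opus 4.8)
The plan is to derive this directly from Proposition~\ref{prop:monotonocity_vectorcoherence} together with the least-upper-bound property of the supremum in the majorization lattice. The key observation is that $\vc(\sigma) = \bigvee \Ud(\sigma)$ is an upper bound (with respect to $\preceq$) of the \emph{entire} set $\Ud(\rho)$; once this is established, the definition $\vc(\rho) = \bigvee \Ud(\rho)$ as the least upper bound yields $\vc(\rho) \preceq \vc(\sigma)$ immediately.

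To show that $\vc(\sigma)$ upper-bounds $\Ud(\rho)$, I would take an arbitrary element $u \in \Ud(\rho)$, which by definition arises from some pure state decomposition $\{q_k, \ket{\psi_k}\}_{k=1}^M \in \D(\rho)$ as $u = \sum_{k=1}^M q_k \mudown(\ket{\psi_k}\bra{\psi_k})$. Since $\rho \io \sigma$, Proposition~\ref{prop:monotonocity_vectorcoherence} guarantees the existence of a decomposition $\{r_l, \ket{\phi_l}\}_{l \in L} \in \D(\sigma)$ such that $u \preceq v$, where $v = \sum_{l \in L} r_l \mudown(\ket{\phi_l}\bra{\phi_l}) \in \Ud(\sigma)$.

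Because $v \in \Ud(\sigma)$ and $\vc(\sigma) = \bigvee \Ud(\sigma)$ is an upper bound of $\Ud(\sigma)$, we have $v \preceq \vc(\sigma)$. Using the transitivity of the majorization relation on $\Delta^\downarrow_{d}$ (recall that all the vectors involved are ordered, since sums of decreasingly ordered vectors remain decreasingly ordered), we conclude $u \preceq v \preceq \vc(\sigma)$. As $u \in \Ud(\rho)$ was arbitrary, $\vc(\sigma)$ is an upper bound of $\Ud(\rho)$, and therefore $\vc(\rho) = \bigvee \Ud(\rho) \preceq \vc(\sigma)$ by definition of the supremum.

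I do not anticipate a serious obstacle, since the heavy lifting is already carried out in Proposition~\ref{prop:monotonocity_vectorcoherence}; the only point requiring care is the correct use of the lattice-theoretic supremum, namely that being the \emph{least} upper bound is precisely what allows us to pass from ``$\vc(\sigma)$ bounds every element of $\Ud(\rho)$'' to ``$\vc(\rho) \preceq \vc(\sigma)$''. One should also note that the argument is symmetric in the two equivalent definitions of $\vc$: had we worked with $\Uo(\rho)$ instead of $\Ud(\rho)$, the same reasoning applies verbatim, owing to Proposition~\ref{prop:supremum_equality}.
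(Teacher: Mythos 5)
Your proof is correct and follows essentially the same route as the paper: both apply Proposition~\ref{prop:monotonocity_vectorcoherence} to each element of $\Ud(\rho)$, bound the resulting element of $\Ud(\sigma)$ by $\vc(\sigma)$, and conclude via the least-upper-bound property of the supremum in the majorization lattice. No gaps.
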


Notice that the r.h.s
condition is not sufficient even for  qubit systems. 
In fact, a qubit state $\rho$ with Bloch vector $(r_x,r_y,r_z)$ can be converted into another state $\sigma$ with Bloch vector $(s_x,s_y,s_x)$ by means of incoherent operations if and only if two conditions are satisfied: (i) $s_x^2 +s_y^2 \leq r_x^2 +r_y^2 $, and (ii) $s_z^2 \leq 1- (1-r_z^2)/(r_x^2 +r_y^2) (s_x^2 +s_y^2) $ (see \cite{Streltsov2017b,Shi2017}). 
By using the result given in Eq.~\eqref{eq:maximum qubit} (or in \cite{Yu2020}), it can be shown that only condition (i) is equivalent to the r.h.s of~\eqref{eq:monotonocity_vectorcoherence_supremum}. 
Moreover, in higher dimensions ($d \ge 4$),  a finite number of conditions in terms of coherence measures are not enough to completely characterizes the coherence transformations \cite{Du2019}.


\section{A family of coherence monotones}
\label{sec:familiy_monotones}

In this section, we introduce a new family of coherence monotones, alternative to $C_f^\croof$ and $C_f^\op$. 
We adopt a different approach to the ones given in Def. \ref{lemma:C_f_convex_roof} and \ref{lemma:C_f_operational}.
Our proposal is based on the generalized coherence vector introduced in Def.~\ref{def:coherence_vector_mixed}. 
The fact that this definition satisfies the properties given in Prop.~\ref{prop:incoheren_state}--\ref{prop:monotonocity_vectorcoherence} allows us to introduce the following family of coherence quantifiers, which we call \textit{coherence vector monotone}. 

\begin{definition}[\textbf{Coherence vector monotone}]
	\label{def:C_f_sup}	
	For any function $f \in \F$, the coherence vector monotone $C^{\cv}_{f}: \S(\H)  \to \R$ is defined as
	\begin{equation}
	\label{eq:C_f_sup}
	C^{\cv}_{f}(\rho) =  f\left( \vc(\rho)\right), 
	\end{equation}
where $\vc(\rho)$ is the generalized coherence vector of $\rho$.
	
\end{definition}

We observe that this family of quantifiers is well defined. 
The following result states that it satisfies the first four conditions of a coherence measure.

\begin{proposition}
	\label{prop:cohrence_measures_pure_conv}	
		For any function $f \in \F$, the coherence vector monotone $C^{\cv}_{f}$ satisfies conditions \ref{c1:nonneg}--\ref{c5:normalization}. 
\end{proposition}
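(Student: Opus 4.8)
The plan is to derive all four conditions directly from the properties of $f\in\F$ together with the majorization results already proved for $\vc$. The key fact is that every $f\in\F$ is Schur-concave (as noted after \eqref{eq:setofF}), so $u\preceq v$ gives $f(u)\ge f(v)$; in addition $f$ is concave, satisfies $f(1,0,\ldots,0)=0$, and has unique maximizer $(1/d,\ldots,1/d)$. Condition \ref{c1:nonneg} then follows at once: by Prop.~\ref{prop:incoheren_state} a state $\rho$ is incoherent iff $\vc(\rho)=(1,0,\ldots,0)$, whence $C^{\cv}_f(\rho)=f(1,0,\ldots,0)=0$. For \ref{c2:monotonicity} I would invoke Cor.~\ref{cor:monotonocity_vectorcoherence_supremum}, which states $\rho\io\Lambda(\rho)\Rightarrow\vc(\rho)\preceq\vc(\Lambda(\rho))$; Schur-concavity then immediately yields $C^{\cv}_f(\rho)=f(\vc(\rho))\ge f(\vc(\Lambda(\rho)))=C^{\cv}_f(\Lambda(\rho))$.

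The step I expect to require the most care is \ref{c3:strong_monotonicity}, because it uses two distinct properties of $f$ in sequence. With $p_n=\Tr(K_n\rho K_n^\dag)$ and $\sigma_n=K_n\rho K_n^\dag/p_n$, Cor.~\ref{cor:monotonocity_vectorcoherence 2} provides $\vc(\rho)\preceq\sum_{n}p_n\vc(\sigma_n)$. Applying Schur-concavity first and then Jensen's inequality, using concavity of $f$ and the normalization $\sum_n p_n=1$ (which holds because $\sum_n K_n^\dag K_n=I$ and $\Tr\rho=1$), gives
\[
f(\vc(\rho))\ \ge\ f\!\Big(\sum_{n}p_n\vc(\sigma_n)\Big)\ \ge\ \sum_{n}p_n\, f(\vc(\sigma_n)),
\]
which is exactly $C^{\cv}_f(\rho)\ge\sum_n p_n C^{\cv}_f(\sigma_n)$. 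The delicate points here are getting the order of the two inequalities right and verifying the normalization of the $p_n$.

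Finally, for \ref{c5:normalization} I would use Prop.~\ref{prop:coherence_vector_maximum_coherence}, which characterizes maximally coherent states by $\vc(\rho)=(1/d,\ldots,1/d)$. Since $\vc(\rho)\in\Delta^\downarrow_d\subseteq\R^d$ and $(1/d,\ldots,1/d)$ is the unique maximizer of $f$ over $\R^d$, we have $C^{\cv}_f(\rho)=f(\vc(\rho))\le f(1/d,\ldots,1/d)$, with equality exactly when $\rho$ is maximally coherent. Hence $\argmax_{\rho}C^{\cv}_f(\rho)$ is attained precisely at the maximally coherent states, establishing \ref{c5:normalization}. Apart from the bookkeeping in \ref{c3:strong_monotonicity}, every condition reduces to one of the earlier characterizations, so no genuinely new difficulty arises.
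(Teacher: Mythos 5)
Your proof is correct and follows essentially the same route as the paper: each condition is reduced to Prop.~\ref{prop:incoheren_state}, Cor.~\ref{cor:monotonocity_vectorcoherence_supremum}, Cor.~\ref{cor:monotonocity_vectorcoherence 2}, and Prop.~\ref{prop:coherence_vector_maximum_coherence}, with Schur-concavity and concavity of $f$ applied in the same order for condition \ref{c3:strong_monotonicity}. No gaps.
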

\noindent We observe that	a coherence vector monotone $C_f^{\cv}$ can only be convex if $C_f^{\cv} \leq C_f^{\croof}$, as in the case of Eq.~\eqref{eq:geometric}. 

We stress that any function $f \in \F$ gives a coherence monotone. 
In others words, the function $f$ can be arbitrarily chosen from the set $\F$ as in the cases of the convex roof measures and the top monotones.

In what follows, we are going to characterize the order relation among the coherence quantifiers $C^{\croof}_f$ , $C^{\op}_f$ and $C^{\cv}_f$.
First, we note that, due to Prop.~\ref{prop:relation_top}, 
$C^{\op}_f \geq C^{\croof}_f$  and $C^{\op}_f \geq C^{\cv}_f$.
Moreover, for some $\rho \in \S(\H)$, we have $C^{\op}_f(\rho) = C^{\cv}_f(\rho)$. 
The following result characterizes this situation. 

\begin{proposition}
	\label{prop:relation_coherence_measures_op_sup}
The following statements are equivalent: 
\begin{enumerate}
\item  There exists an optimal pure state decomposition of $\rho$, i.e., $\vc(\rho) \in \Ud(\rho)$. \label{st1}
\item $C^{\cv}_f(\rho) =  C^{\op}_f(\rho)$ for all $f \in \F$. \label{st2} 
\item $C^{\cv}_f(\rho) =  C^{\op}_f(\rho)$ for some $f \in \F$, with $f$ strictly Schur-concave. \label{st3}
\end{enumerate}
\end{proposition}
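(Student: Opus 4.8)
The plan is to establish the cyclic chain $(1)\Rightarrow(2)\Rightarrow(3)\Rightarrow(1)$. Throughout I will lean on three facts already available: that $\vc(\rho)=\bigvee\Uo(\rho)$, so that $v\preceq\vc(\rho)$ for every $v\in\Uo(\rho)$; the equivalence $\vc(\rho)\in\Ud(\rho)\iff\vc(\rho)\in\Uo(\rho)$ of Prop.~\ref{lemma:maximum_sets}; and the attainment of the infimum in the definition of $C^{\op}_f$, which by the compactness of $\Uo(\rho)$ (Lemma~\ref{lemma:compact}) together with the symmetry of $f$ lets me write $C^{\op}_f(\rho)=\min_{v\in\Uo(\rho)} f(v)$.

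For $(1)\Rightarrow(2)$ I would first pass from $\vc(\rho)\in\Ud(\rho)$ to $\vc(\rho)\in\Uo(\rho)$ via Prop.~\ref{lemma:maximum_sets}, so that $\vc(\rho)$ is the \emph{maximum} of $\Uo(\rho)$. Then, for an arbitrary $f\in\F$, Schur-concavity gives $f(v)\geq f(\vc(\rho))$ for all $v\in\Uo(\rho)$, so the minimum over $\Uo(\rho)$ is attained at $\vc(\rho)$ and $C^{\op}_f(\rho)=f(\vc(\rho))=C^{\cv}_f(\rho)$ for every $f$. The step $(2)\Rightarrow(3)$ is immediate once I recall that $\F$ contains strictly Schur-concave functions: the generalized entropies noted after the definition of $\F$ (e.g.\ a normalized Shannon entropy) are symmetric, concave, strictly Schur-concave, vanish at $(1,0,\ldots,0)$ and are maximal at $(1/d,\ldots,1/d)$, hence lie in $\F$; specializing $(2)$ to any such $f$ yields $(3)$.

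The real content is $(3)\Rightarrow(1)$. Assuming $C^{\cv}_f(\rho)=C^{\op}_f(\rho)$ for some strictly Schur-concave $f\in\F$, I would pick a minimizer $v^\ast\in\Uo(\rho)$ with $C^{\op}_f(\rho)=f(v^\ast)$, so the hypothesis becomes $f(v^\ast)=f(\vc(\rho))$. Since $v^\ast\preceq\vc(\rho)$, strict Schur-concavity forbids the majorization from being strict: if $v^\ast\neq\Pi\,\vc(\rho)$ for all permutations $\Pi$, then $f(v^\ast)>f(\vc(\rho))$, a contradiction. Hence $v^\ast=\Pi\,\vc(\rho)$ for some permutation, and as both vectors already lie in $\Delta^\downarrow_{d}$ this forces $v^\ast=\vc(\rho)$. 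Thus $\vc(\rho)=v^\ast\in\Uo(\rho)$, and Prop.~\ref{lemma:maximum_sets} returns $\vc(\rho)\in\Ud(\rho)$, i.e.\ statement $(1)$.

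I expect the main obstacle to be ensuring the existence of the minimizer $v^\ast$ inside $\Uo(\rho)$ in the last implication: without genuine attainment of the infimum one could only deduce $f(\vc(\rho))=\inf_{v} f(v)$, which does not by itself place a vector equal to $\vc(\rho)$ in the set. This is exactly where Lemma~\ref{lemma:compact} does the essential work; the accompanying majorization argument — strict Schur-concavity forcing equality of the ordered vectors — is then short and routine.
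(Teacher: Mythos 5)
Your proof is correct and follows essentially the same route as the paper's: the same cyclic chain, the same use of compactness to realize $C^{\op}_f$ as an attained minimum, and the same strict-Schur-concavity dichotomy (comparability of $v^\ast$ with the supremum $\vc(\rho)$ ruling out everything but equality) in the key implication $(3)\Rightarrow(1)$. Your $(1)\Rightarrow(2)$ replaces the paper's appeal to Prop.~\ref{prop:relation_top} with a direct Schur-concavity argument, and your $(2)\Rightarrow(3)$ is slightly more careful than the paper's bare ``Trivial'' in exhibiting a strictly Schur-concave member of $\F$, but these are cosmetic differences.
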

This result gives us a method to address the question about the existence of an optimal pure state decomposition 
of a general quantum state.
In Sec.~\ref{sec:examples} we will use it to show that for some qutrit states there exist an optimal pure state decomposition.

In general, there is not a defined order relation between $C^{\cv}_f$ and $C^{\croof}_f$.
However, when there exists an optimal pure state decomposition of a state, we have the following result.
\begin{proposition}
	\label{prop:relation_coherence_measures_sup_croof}
If there exists an optimal pure state decomposition of $\rho$, then $C^{\cv}_f(\rho) \geq C^{\croof}_f(\rho)$.
\end{proposition}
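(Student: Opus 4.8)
The plan is to exploit the hypothesis in order to turn the lattice supremum $\vc(\rho)$ into an honest finite convex combination, and then invoke Jensen's inequality. By assumption there exists an optimal pure state decomposition of $\rho$, i.e.\ an ensemble $\{\tilde q_k, \ket{\tilde\psi_k}\}_{k=1}^M \in \D(\rho)$ such that $\sum_{k=1}^{M} \tilde q_k \mudown(\ket{\tilde\psi_k}\bra{\tilde\psi_k}) = \vc(\rho)$ (Def.~\ref{def:optimal pure sate decomposition}). This is the crucial point: in general $\vc(\rho)=\bigvee\Ud(\rho)$ is merely a supremum in the majorization lattice and need not coincide with any convex combination of the vectors $\mudown(\ket{\psi_k}\bra{\psi_k})$, so the argument below would not go through; the existence of the optimal decomposition is precisely what supplies such a combination.

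Next I would use that $f\in\F$ is concave. Applying Jensen's inequality to the convex combination above gives
\begin{equation}
C^{\cv}_f(\rho) = f\!\left(\vc(\rho)\right) = f\!\left(\sum_{k=1}^{M} \tilde q_k \mudown(\ket{\tilde\psi_k}\bra{\tilde\psi_k})\right) \geq \sum_{k=1}^{M} \tilde q_k\, f\!\left(\mudown(\ket{\tilde\psi_k}\bra{\tilde\psi_k})\right).
\end{equation}
Since $f$ is symmetric and $\mudown(\ket{\tilde\psi_k}\bra{\tilde\psi_k})$ is obtained from $\mu(\ket{\tilde\psi_k}\bra{\tilde\psi_k})$ by permuting its entries, each term satisfies $f(\mudown(\ket{\tilde\psi_k}\bra{\tilde\psi_k})) = f(\mu(\ket{\tilde\psi_k}\bra{\tilde\psi_k}))$, so the right-hand side equals $\sum_{k=1}^{M} \tilde q_k f(\mu(\ket{\tilde\psi_k}\bra{\tilde\psi_k}))$.

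Finally I would observe that this last quantity is exactly the objective defining $C^{\croof}_f(\rho)$, evaluated at the particular decomposition $\{\tilde q_k, \ket{\tilde\psi_k}\}_{k=1}^M \in \D(\rho)$. Because $C^{\croof}_f(\rho)$ is the infimum of that objective over all of $\D(\rho)$, we have $\sum_{k=1}^{M} \tilde q_k f(\mu(\ket{\tilde\psi_k}\bra{\tilde\psi_k})) \geq C^{\croof}_f(\rho)$, and chaining the two inequalities yields $C^{\cv}_f(\rho) \geq C^{\croof}_f(\rho)$, as claimed. I do not expect any genuine obstacle once the hypothesis is in hand: the entire content of the proposition lies in the fact that, under the existence of an optimal pure state decomposition, the inequality is an immediate consequence of the concavity and symmetry of $f$ together with the definition of the convex roof measure as an infimum over pure state decompositions.
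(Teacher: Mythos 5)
Your proof is correct and follows essentially the same route as the paper's: both use the optimal decomposition to write $\vc(\rho)$ as an explicit convex combination, apply concavity and symmetry of $f$ to bound $f(\vc(\rho))$ from below by $\sum_k \tilde q_k f(\mu(\ket{\tilde\psi_k}\bra{\tilde\psi_k}))$, and then invoke the infimum defining $C^{\croof}_f$. No gaps.
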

On the contrary, for affine functions, we have the opposite relation between $C^{\cv}_f$ and $ C^{\croof}_f$.

	\begin{proposition}
		\label{prop:linear f}
	Let $f\in\F$ be such that $f|_{\Delta_d^\downarrow}=c+\ell$, where $c\in\mathbb{R}$ and $\ell : \Delta_d^\downarrow \to \mathbb{R}$
	is a linear function. 
	Then, $C^{\cv}_f \leq C^{\croof}_f=C^{\op}_f$.
		\end{proposition}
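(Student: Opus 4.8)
The plan is to reduce everything to the behaviour of the linear part $\ell$ on the two convex sets $\Ud(\rho)$ and $\Uo(\rho)$, and then to exploit the supremum property of the majorization lattice. First I would use that $f$ is symmetric, so that for every pure state $f(\mu(\ket{\psi}\bra{\psi})) = f(\mudown(\ket{\psi}\bra{\psi})) = c + \ell(\mudown(\ket{\psi}\bra{\psi}))$, the last equality because $\mudown(\ket{\psi}\bra{\psi}) \in \Delta_d^\downarrow$. Since a convex combination of decreasingly ordered vectors is again in $\Delta_d^\downarrow$ and $\ell$ is linear there, the ensemble weights can be pushed inside $\ell$, which rewrites the two reference monotones as infima of the linear functional $\ell$ over the associated vector sets:
\[
C^{\croof}_f(\rho) = c + \inf_{w \in \Ud(\rho)} \ell(w), \qquad C^{\op}_f(\rho) = c + \inf_{v \in \Uo(\rho)} \ell(v),
\]
while by Def.~\ref{def:coherence_vector_mixed} one has $C^{\cv}_f(\rho) = c + \ell(\vc(\rho))$ with $\vc(\rho) = \bigvee \Ud(\rho)$.

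For the inequality $C^{\cv}_f \le C^{\croof}_f$, I would note that $\ell = f - c$ inherits Schur-concavity from $f$ on $\Delta_d^\downarrow$ (symmetric and concave functions are Schur-concave). By definition of the supremum in the majorization lattice, $w \preceq \bigvee \Ud(\rho)$ for every $w \in \Ud(\rho)$, hence $\ell(\bigvee \Ud(\rho)) \le \ell(w)$ for all such $w$; taking the infimum over $w$ gives $\ell(\vc(\rho)) \le \inf_{w \in \Ud(\rho)} \ell(w)$, that is, $C^{\cv}_f(\rho) \le C^{\croof}_f(\rho)$.

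For the equality $C^{\croof}_f = C^{\op}_f$ it suffices to prove $\inf_{\Ud(\rho)} \ell = \inf_{\Uo(\rho)} \ell$. The key observation is the inclusion $\Ud(\rho) \subseteq \Uo(\rho)$: given $w = \sum_k q_k \mudown(\ket{\psi_k}\bra{\psi_k}) \in \Ud(\rho)$, this $w$ lies in $\Delta_d^\downarrow$, so the pure state $\ket{\chi} = \sum_i \sqrt{w_i}\ket{i}$ satisfies $\mu(\ket{\chi}\bra{\chi}) = w$; applying Prop.~\ref{lemma:theorem4du2019} to the ensemble $\{q_k,\ket{\psi_k}\} \in \D(\rho)$ together with the trivial relation $w \preceq w$ shows $\ket{\chi}\bra{\chi} \io \rho$, whence $w = \mudown(\ket{\chi}\bra{\chi}) \in \Uo(\rho)$. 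This yields $\inf_{\Uo(\rho)} \ell \le \inf_{\Ud(\rho)} \ell$. For the reverse bound, every $v = \mudown(\ket{\psi}\bra{\psi}) \in \Uo(\rho)$ satisfies, again by Prop.~\ref{lemma:theorem4du2019}, $v \preceq w$ for some $w \in \Ud(\rho)$, and Schur-concavity of $\ell$ gives $\ell(w) \le \ell(v)$, so $\inf_{\Ud(\rho)} \ell \le \inf_{\Uo(\rho)} \ell$. The two bounds give the claimed equality, completing the chain $C^{\cv}_f \le C^{\croof}_f = C^{\op}_f$.

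The algebra of pushing $\ell$ through the ensembles is routine; the only genuinely delicate step is making the set-level comparison between $\Ud(\rho)$ and $\Uo(\rho)$ airtight, in particular verifying that the explicit pure state $\ket{\chi}$ really lies in $\O(\rho)$, which is exactly where Prop.~\ref{lemma:theorem4du2019} carries the argument, and confirming that no closures intervene so that both infima are taken over the same pair of sets. I expect this inclusion, rather than either inequality, to be the main obstacle.
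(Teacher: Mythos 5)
Your proof is correct and follows essentially the same route as the paper's: linearity of $\ell$ collapses each ensemble average into a single evaluation of $\ell$ on $\Ud(\rho)$, Schur-concavity together with the definition of the supremum gives $C^{\cv}_f \le C^{\croof}_f$, and the inclusion $\Ud(\rho)\subseteq\Uo(\rho)$ gives $C^{\op}_f\le C^{\croof}_f$. The only notable differences are that you establish the remaining inequality $C^{\croof}_f\le C^{\op}_f$ directly from Lemma~\ref{lemma:cota} and Schur-concavity of $\ell$ rather than citing Prop.~\ref{prop:relation_top}, and that by working with infima throughout you avoid the paper's (harmless, but separately justified) use of an attained optimal decomposition for the convex roof.
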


Examples of this class of functions are $f(u)=1-u_1^\downarrow$, $f(u)=u_d^\downarrow$ and $f(u)=1-u_1^\downarrow+u_d^\downarrow$, where $u^\downarrow_i = (u^\downarrow)_i$. 
In particular, for the former function, we have that all quantifiers coincide and are equal to the geometric measure of coherence \cite{Streltsov2015}, i.e.,  for the function  $f(u)=1-u_1^\downarrow$ we have
\begin{equation}
\label{eq:geometric}
C^{\cv}_f (\rho)= C^{\croof}_f (\rho)=C^{\op}_f(\rho)   = \min_{\left\lbrace  q_k, \ket{ \psi_k} \right\rbrace_{k= 1}^ M  \in \D(\rho)}  \sum^M_{k=1} q_k \left( 1-\max_{0 \leq i \leq d-1} |\braket{i|\psi_k}|^2\right) .
\end{equation}

Moreover, whenever there exists the optimal pure state decomposition of $\rho$, $C^{\cv}_f(\rho)$ and $C^{\croof}_f(\rho)$ are equal for a subclass of functions of  $\F$.
\begin{proposition}
	\label{prop:exsitence_optimalpsd_roof}
If there exists an optimal pure state decomposition of $\rho$, i.e., $\vc(\rho) \in \Ud(\rho)$, then
	\begin{equation}
		\label{eq:exsitence_optimalpsd_roof}
		C^{\cv}_{f_k}(\rho) =  C^{\croof}_{f_k}(\rho),  
	\end{equation}
with $f_k(u) = 1-\sum^{k-1}_{i=0} u^\downarrow_i \in \F$, for all $k \in \{1, \ldots, d-1\}$.	
\end{proposition}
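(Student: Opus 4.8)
The plan is to sandwich $C^{\cv}_{f_k}(\rho)$ between $C^{\croof}_{f_k}(\rho)$ from both sides, invoking two results already at our disposal. The preparatory step is to record two structural facts about the functions $f_k$. First, $f_k\in\F$, as asserted in the statement: symmetry and concavity hold because $u\mapsto\sum_{i=0}^{k-1}u_i^\downarrow$ (the sum of the $k$ largest entries) is a symmetric convex function, while the normalization $f_k(1,0,\ldots,0)=0$ and the argmax at the uniform vector are immediate. Second, and crucially, the restriction of $f_k$ to the ordered simplex is affine: for $u\in\Delta_d^\downarrow$ the internal sorting is the identity, so $f_k|_{\Delta_d^\downarrow}(u)=1-\sum_{i=0}^{k-1}u_i=1-s_k(u)$, which is of the form $c+\ell$ with $c=1$ and $\ell=-s_k$ linear on $\Delta_d^\downarrow$.

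With these two facts, the conclusion is essentially a corollary. Since $f_k|_{\Delta_d^\downarrow}$ is affine, Prop.~\ref{prop:linear f} gives $C^{\cv}_{f_k}(\rho)\le C^{\croof}_{f_k}(\rho)$ for every $\rho$. Conversely, because an optimal pure state decomposition of $\rho$ exists, i.e.\ $\vc(\rho)\in\Ud(\rho)$, and $f_k\in\F$, Prop.~\ref{prop:relation_coherence_measures_sup_croof} yields $C^{\cv}_{f_k}(\rho)\ge C^{\croof}_{f_k}(\rho)$. Combining the two inequalities gives \eqref{eq:exsitence_optimalpsd_roof}.

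To make the argument self-contained and to identify the common value, I would alternatively carry out a direct computation. The key observation is a linearity identity: for any $\{q_j,\ket{\psi_j}\}\in\D(\rho)$ the vector $w=\sum_j q_j\,\mu^\downarrow(\ket{\psi_j}\bra{\psi_j})$ is itself decreasingly ordered (a convex combination of non-increasing sequences is non-increasing), so $w\in\Ud(\rho)$, and by linearity of $s_k$ one gets $\sum_j q_j f_k(\mu(\ket{\psi_j}\bra{\psi_j}))=1-s_k(w)=f_k(w)$. Since every element of $\Ud(\rho)$ arises as such a $w$, this collapses the convex roof to $C^{\croof}_{f_k}(\rho)=\inf_{w\in\Ud(\rho)}f_k(w)=1-\sup_{w\in\Ud(\rho)}s_k(w)=1-S_k$. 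On the other side, $C^{\cv}_{f_k}(\rho)=f_k(\vc(\rho))=1-s_k(\vc(\rho))$, and the hypothesis $\vc(\rho)\in\Ud(\rho)$ makes the supremum a maximum, so by the remark recalled just after the supremum algorithm one has $s_k(\vc(\rho))=L_{\vc(\rho)}(k)=S_k$. Both expressions therefore equal $1-S_k$.

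The single delicate point is precisely this last equality $s_k(\vc(\rho))=S_k$, and it is exactly where the existence of an optimal pure state decomposition is consumed. In general the Lorenz curve of $\bigvee\Ud(\rho)$ is the upper concave envelope of the points $\{(j,S_j)\}$, so one only has $s_k(\vc(\rho))=L_{\vc(\rho)}(k)\ge S_k$, with possibly strict inequality when $(k,S_k)$ lies strictly below that envelope (this is consistent with the generic inequality $C^{\cv}_{f_k}\le C^{\croof}_{f_k}$ coming from Prop.~\ref{prop:linear f}). The hypothesis $\vc(\rho)\in\Ud(\rho)$ forces the envelope to pass through every point $(k,S_k)$, collapsing the inequality to equality, so no separate obstacle beyond correctly citing the maximum characterization of the supremum remains.
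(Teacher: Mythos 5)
Your proof is correct. You actually give two arguments, and they have different relationships to the paper's proof. Your second, ``direct computation'' argument is essentially the paper's own proof: the paper notes that $f_k(u)=1-s_k(u^\downarrow)$, that $C^{\croof}_{f_k}(\rho)=1-S_k(\rho)$ with $S_k(\rho)=\sup_{u\in\Ud(\rho)}s_k(u)$, that $C^{\cv}_{f_k}(\rho)=1-s_k(\vc(\rho))$, and that the hypothesis $\vc(\rho)\in\Ud(\rho)$ forces $s_k(\vc(\rho))=S_k(\rho)$ (the remark after the supremum algorithm). You spell out the collapse of the convex roof to an infimum over $\Ud(\rho)$ more explicitly than the paper does, and you correctly isolate where the hypothesis is consumed (in general one only has $s_k(\vc(\rho))=L_{\vc(\rho)}(k)\ge S_k$). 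Your first argument, however, is a genuinely different and shorter route: since $f_k|_{\Delta_d^\downarrow}$ is affine, Prop.~\ref{prop:linear f} gives $C^{\cv}_{f_k}\le C^{\croof}_{f_k}$ unconditionally, while the hypothesis $\vc(\rho)\in\Ud(\rho)$ together with Prop.~\ref{prop:relation_coherence_measures_sup_croof} gives the reverse inequality; the sandwich yields equality with no computation. This route exhibits the proposition as a formal corollary of two results already in the paper, at the cost of not displaying the common value $1-S_k(\rho)$, which the direct computation (and the paper's proof) provides. Both arguments are sound and there is no circularity, since Props.~\ref{prop:relation_coherence_measures_sup_croof} and \ref{prop:linear f} are proved independently of this proposition.
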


The scheme of Fig.~\ref{fig:scheme} summarizes the relationships among the three families of coherence quantifiers.

\begin{figure}
\includegraphics[width=0.35\textwidth]{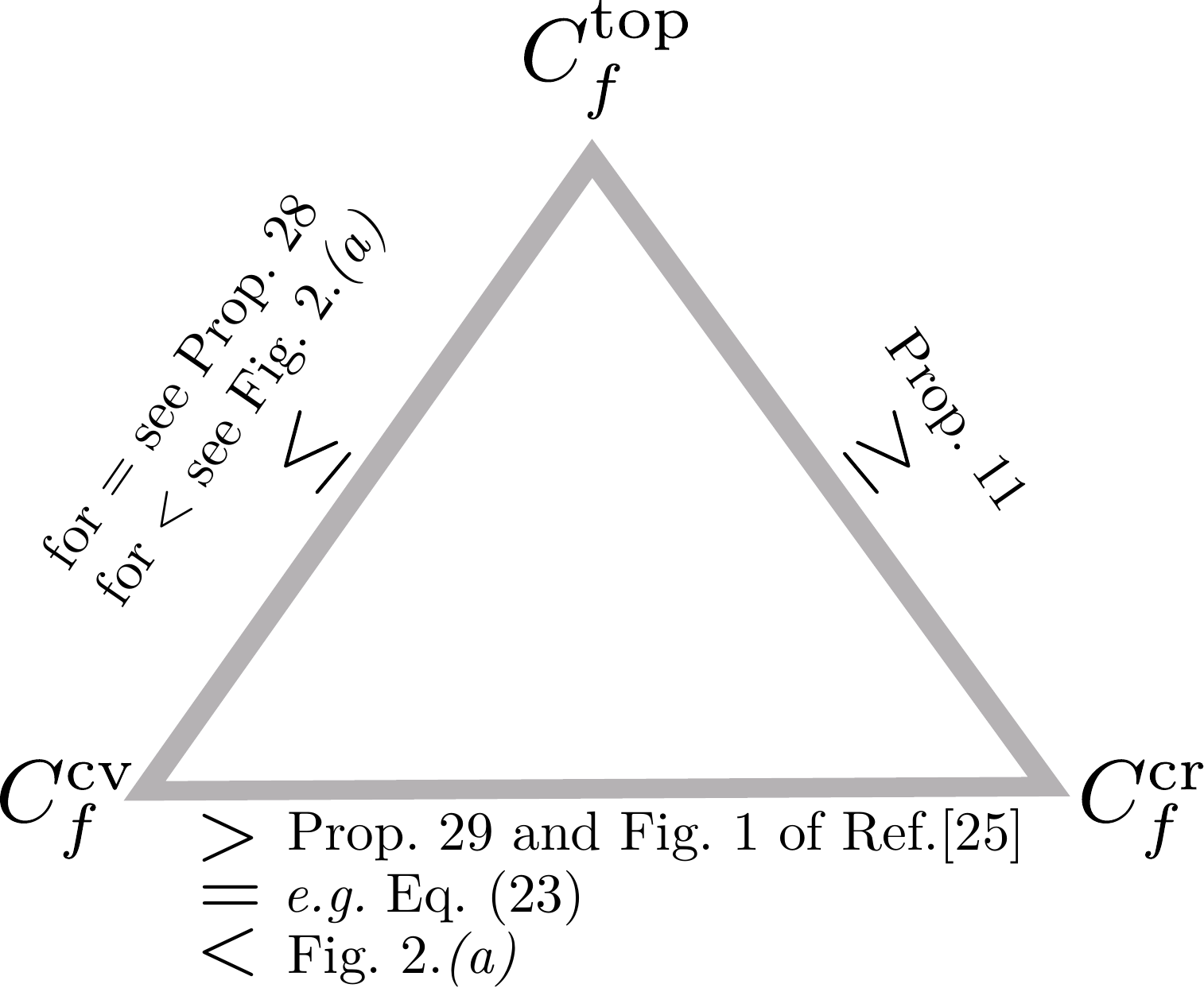}
\caption{Scheme of the relationships among $C^{\croof}_f$, $C^{\op}_f$ and $C^{\cv}_f$.}
\label{fig:scheme}
\end{figure}

\section{Examples}
\label{sec:examples}

In this section, we calculate the coherence vector for two simple models. First, we consider a qubit state, and we obtain its generalized coherence vector. 
Secondly, we consider a maximally coherent qutrit going through a depolarizing channel, and we compute the value of $C_f^{\croof}(\rho)$, $C_f^\op(\rho)$ and $C_f^{\cv}(\rho)$ for a given state $\rho$ and $f \in \F$.

\subsection{Qubit case}

Let us consider a qubit system in a state $\rho = \frac{1+ \vec{r} \cdot\vec{\sigma} }{2}$, with  $\vec r=(r_x,r_y,r_z)$ the Bloch vector ($\|\vec r \| \leq 1$),  and $\vec{\sigma} =(\sigma_x,\sigma_y,\sigma_z)$ the vector formed by the Pauli matrices.

In a previous work \cite{Yu2020}, it has been shown that the supremum of $\Ud(\rho)$ is a maximum, and it is given by
\begin{equation}
	\label{eq:maximum qubit}
	\vc(\rho) =\left(\frac{1+ r}{2},  \frac{1 - r }{2} \right),  
\end{equation}
where $r= \sqrt{1- r^2_x-r^2_y}$.
An optimal pure state decomposition of $\rho$ is given by $\{q,\ket{\psi^+}; 1-q,\ket{\psi^-} \}$ where
$\ket{\psi^\pm}\bra{\psi^\pm} = \frac{1+ \vec{s}^{\pm} \cdot\vec{\sigma} }{2},$ with $\vec{s}^{\pm} =(r_x,r_y,\pm r)$ and $ q = (r_z+r)/2r \in [0,1]$.  
As a consequence of Prop.~\ref{prop:relation_coherence_measures_op_sup}, we have that $C^{\cv}_f= C^\op_f$. 
	
Furthermore, it has been shown that for any function $f( \vc ( \rho))  = \tilde{f}(r)$, such that $\tilde{f}$ is a convex function on $r$, $C^\op_f$ is a convex monotone of coherence and $C^\op_f = C^\croof_f$ \cite{Yu2020}.
For the qubit case, most of the well-known coherence measures, like $\ell_1$-norm, relative entropy, geometric coherence admit a formulation in terms of a convex function of $r$. 
This means that we have the triple equivalence among the families $C^{\cv}_f$, $C^\op_f$ and $C^{\croof}_f$ in this case.

Due to Prop.~\ref{prop:relation_coherence_measures_op_sup}, to observe a difference between  $C^{\cv}_f$ and $C^\op_f$, we need an example where $\vc(\rho)$ is not a maximum.  
This could be possible, in principle, in higher dimensions ($d\geq 3$).
In what follows, we provide an example for $d=3$.

\subsection{Qutrit case}

Let us consider a qutrit system in the  maximally coherent state $\ket{\psi^{\mcs}} = (\ket{0} + \ket{1}+\ket{2})/\sqrt{3}$,
going through a depolarizing channel with depolarization probability $p$. The final state after the depolarizing channel is given by
\begin{equation}
\label{ej:depol_3}
\rho_p  = \Lambda_p\left(\ket{\psi^{\mcs}}\bra{\psi^{\mcs}}\right) = p \, \frac{I}{3} + (1-p) \, \ket{\psi^{\mcs}}\bra{\psi^{\mcs}}.
\end{equation}

Also, we consider the function $f(u)=1-u_1^\downarrow+u_d^\downarrow$.
Clearly, $f$ satisfies the conditions of Prop. \ref{prop:linear f}.
Therefore, we have that for this function both measures $C_f^{\croof}$ and $C_f^\op$ are equal.

In Fig.\ref{fig:ej1}.(a), we plot $C_f^{\croof}(\rho_p)$ (or, equivalently $C_f^\op(\rho_p)$) and $C_f^{\cv}(\rho_p)$ as functions of $p \in [0,1]$.
Both functions are monotonically decreasing in terms of $p$, and in the open interval $(0,1)$, we have $C_f^{\croof}(\rho_p) = C_f^\op(\rho_p) > C_f^{\cv}(\rho_p)$. 
Equivalently,  this means that the supremum is not a maximum (see Prop.~\ref{prop:relation_coherence_measures_op_sup}). 
In Fig.\ref{fig:ej1}.(b), we plot $\vc(\rho_p)$ and $u^{\op}_f(\rho_p) = \argmin_{u \in \Uo(\rho)} f(u)$.
It is shown that $u^{\op}_f(\rho_p) \preceq \vc(\rho_p)$ and $u^{\op}_f(\rho_p) \neq \vc(\rho_p)$ for several values of $p$ in the open interval $(0,1)$.
Finally, in Fig.\ref{fig:ej1}.(c), we consider $\rho_p$ for $p=0.3$ and we depict $\vc(\rho_p)$ and the region $\{u \in  \Delta^\downarrow_3: u \preceq \vc(\rho_p) \}$. 
In addition, we generate $10^5$ random unitary  from $3$ up to $9$ \footnote{
The upper bound $9$ is not arbitrary. According to Lemma 1 in
\cite{Uhlmann1998}, the optimal $C_f^{\croof}(\rho_p)$ 
requires at most nine terms.
It is conjectured in \cite[Conjecture and Lemma 7]{Uhlmann1996}
that three terms are enough.}
.
For each unitary matrix, we use the Schr\"ordinger theorem (see Eq.\eqref{eq:SchTh} or \cite{Nielsen2000}) to generate an ensemble compatible with $\rho_p$  and we plot its coherence vector. 
This plot depicts that the optimal pure state decomposition of this state does not exist (which can be inferred from the left figure and Prop.~\ref{prop:relation_coherence_measures_op_sup}).

\begin{figure*}[!htb]
\includegraphics[width=\textwidth]{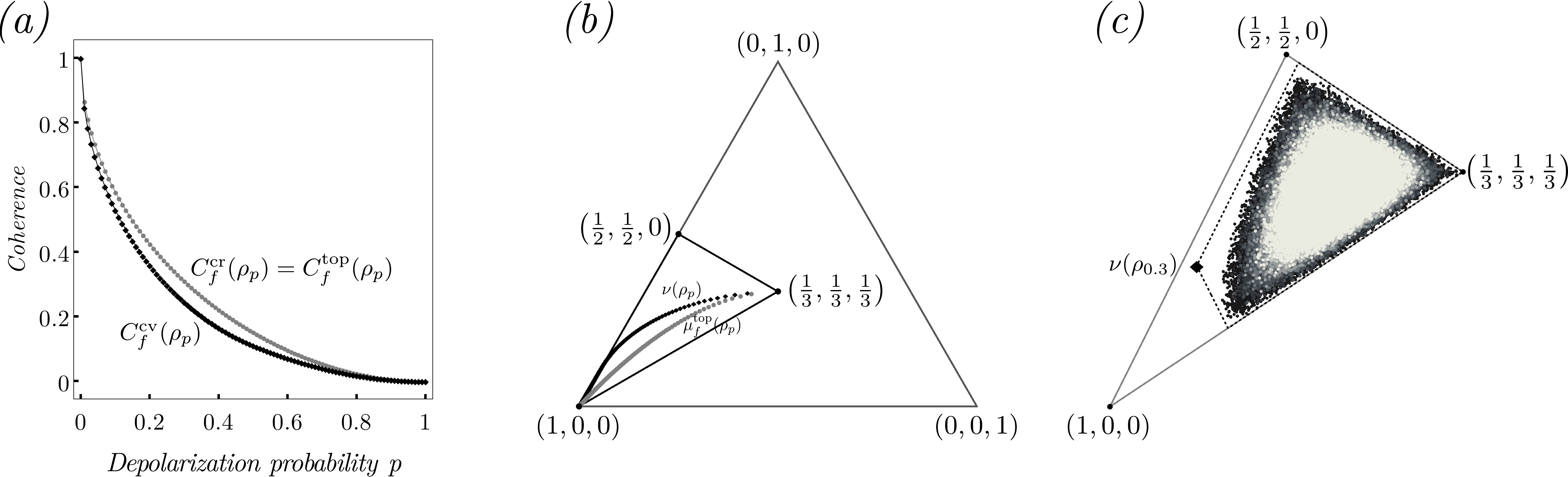}
\caption{Let $\rho_p=\Lambda_p\left(\ket{\psi^{\mcs}_3}\bra{\psi^{\mcs}_3}\right)$ and $f(u)=1-u_1^\downarrow+u_d^\downarrow$. 
	\textit{(a)} $C_f^\croof(\rho_p)$ (or $C_f^\op(\rho_p)$) (gray circle) and $C_f^{\cv}(\rho_p)$ (black diamond) in function of depolarization probability $p$.  
	\textit{(b)} The big triangle represents the  $\Delta_3$, whereas that the small triangle depicts the set $\Delta^\downarrow_{3}$. 
	We plot  $\vc(\rho_p)$ (black diamond) and $u^{\op}_f(\rho_p) = \argmin_{u \in \Uo(\rho)} f(u)$ (gray circle) for several values of $p \in [0,1]$. 
	\textit{(c)} For $p=0.3$, we plot $\vc(\rho_p) \approx (0.777,0.2,0.0223)$ (black diamond) and the region $\{u \in  \Delta^\downarrow_3: u \preceq \vc(\rho_p) \}$ (enclosed by the black dashed lines). In addition, we generate $10^5$ random unitary matrices of dimensions from $3$ to $9$. For each unitary matrix $U$, we use the Schr\"ordinger theorem (see \eg \cite{Hughston1993,Nielsen2000}) to generate an ensemble compatible with $\rho_p$ ($p=0.3$), i.e., $\sqrt{p_k} \ket{\phi_k} = \sum_{j=1}^{3} \sqrt{\lambda_j} U_{k,j} \ket{e_j}$, where $\lambda_j$ and $\ket{e_j}$ are the eigenvalues and eigenstates of $\rho_p$. 
	For each ensemble, we plot its coherence vector in grayscale representing the dimension of the unitary matrix (from 3 to 9) used to generate the ensemble. 
	The darkest gray corresponds to dimension 3, whereas the lightest gray corresponds to dimension 9.
	This plot evidences that the optimal set of this state does not exist (which can be inferred from the left figure and Prop.~\ref{prop:relation_coherence_measures_op_sup}).}
\label{fig:ej1}
\end{figure*}

\section{Concluding Remarks}
\label{sec:conclusions}

In this work, we have advanced on  the characterization of the quantum coherence resource theory by defining the generalized coherence vector of an arbitrary quantum  state.
This probability vector can be interpreted as a concave roof extension of the coherence vector defined for pure states.
We showed that it is a good definition, since it allows to characterize the notions of being incoherent, as well as being maximally coherent. 
Using this definition and the majorization relation, we obtain a necessary condition for the conversion of general quantum states by means of incoherent operations.
This generalizes the result for pure states given in the literature, and shows that the tools of the majorization lattice are useful also in the general case.

Moreover, based on the generalized coherence vector, we introduced a family of monotones, called coherence vector monotones. 
In order to do this, we considered concave and symmetric functions applied to the generalized coherence vector of a quantum state. 
In this way, our approach is an alternative method to construct extend coherence measures from pure to mixed states. 
This family of monotone was compared with the families of the convex roof measure and the top monotone.
We obtain that the coherence vector monotone is lower than or equal to the top monotone, and the equality is only satisfied when  the generalized coherence vector of the state is a maximum.  
In addition, we have obtained that there is not a definite order between the convex roof measure and the coherence vector monotone.
We provided several examples showing that  our quantifier can be strictly greater than, equal to or strictly lower than the convex roof measure.
We have also applied the coherence vector monotone to quantify the coherence of a qubit system and a maximally coherent qutrit going through a depolarizing channel. 

Finally, we stress that our framework, which is mainly based on the majorization lattice theory, could also be used in other majorization-based resource theories.
Moreover, it would be interesting to study whether this approach can be extend to more general resource theories of coherence, such as the ones  based on non-orthonormal basis or on positive-operator-valued measures.

\acknowledgments

This work is partially supported by the projects ``Indagine sulla struttura logica e geometrica soggiacente alla teoria dell’informazione
quantistica'' funded by Ministero dell'Università e della Ricerca (Italy), ``Strategies and Technologies for Scientific Education and Dissemination'' funded by the Fondazione di Sardegna (Italy), 
``Per un’estensione semantica della Logica Computazionale Quantistica - Impatto teorico e ricadute implementative'' (RAS: SR40341, L.R. 7/2007) funded by Regione Autonoma della Sardegna (Italy), PIP-0519 funded by Consejo Nacional de Investigaciones Científicas y Técnicas  (Argentina) and PICT-01774 funded by Agencia Nacional de Promoción Científica y Tecnológica (Argentina).

\appendix

\section{Auxiliaries lemmas}
\label{sec: App1}

The following result is necessary for Prop.~\ref{prop:monotonocity_vectorcoherence}. It  states that any convex combination of ordered probability vectors preserves the majorization relation. 

\begin{lemma}
	\label{lemma:resultado_auxiliar}
	Let $u_0, \ldots, u_m \in \Delta_{d}^\downarrow$ and $v_0, \ldots, v_m \in \Delta_{d}^\downarrow$ be two sequence of ordered probability vectors, such that  $u_\ell \preceq v_\ell$, for all $0 \leq \ell  \leq m$. 
	For any probability vector $q = (q_0, \ldots, q_{m}) \in \Delta_{m+1}$, the vectors $u = \sum_{\ell=0}^{m} q_\ell u_\ell$ and $v = \sum_{\ell=0}^{m} q_l v_l$ belong to $\Delta_{d}^\downarrow$, and $ u  \preceq v$.
	
\end{lemma}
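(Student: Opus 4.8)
The plan is to split the statement into two independent claims: first, that $u$ and $v$ indeed lie in $\Delta^\downarrow_{d}$, and second, that $u \preceq v$. The first claim is routine and I would dispatch it directly; the second is the substantive part, and the key realization is that the ordering hypothesis on all the vectors is exactly what makes the relevant partial-sum functionals behave linearly.

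For the membership claim, I would note that $u = \sum_{\ell=0}^{m} q_\ell u_\ell$ is a convex combination of probability vectors, hence has nonnegative entries summing to $1$, so $u \in \Delta_{d}$; the same holds for $v$. To see that the entries are nonincreasing, I would argue componentwise: since $(u_\ell)_i \geq (u_\ell)_{i+1}$ for every $\ell$ (because $u_\ell \in \Delta^\downarrow_{d}$) and $q_\ell \geq 0$, summing over $\ell$ yields $u_i \geq u_{i+1}$. Hence $u \in \Delta^\downarrow_{d}$, and the same reasoning applies to $v$.

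For the majorization claim, I would use that $u$ and $v$ are now known to be ordered, so that $u \preceq v$ reduces to the partial-sum inequalities $s_{k}(u) \leq s_{k}(v)$ for all $k$, where $s_k(w) = \sum_{i=0}^{k-1} w_i$ (the decreasing rearrangement of an ordered vector being the vector itself, no reordering is needed). The crucial point is that, because each $u_\ell$ is ordered, $s_k(u_\ell) = \sum_{i=0}^{k-1}(u_\ell)_i$ is a \emph{linear} functional of the entries, and likewise for $v_\ell$ and for $u,v$. Exploiting this linearity I would write
\begin{equation}
s_k(u) = \sum_{i=0}^{k-1} u_i = \sum_{\ell=0}^{m} q_\ell \sum_{i=0}^{k-1}(u_\ell)_i = \sum_{\ell=0}^{m} q_\ell\, s_k(u_\ell),
\end{equation}
and then invoke the hypothesis $u_\ell \preceq v_\ell$, which (again because the vectors are ordered) reads $s_k(u_\ell) \leq s_k(v_\ell)$. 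Since $q_\ell \geq 0$, this gives $s_k(u) \leq \sum_{\ell} q_\ell\, s_k(v_\ell) = s_k(v)$ for every $k$, which is precisely $u \preceq v$.

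The only real subtlety I anticipate is the reduction to linear partial sums. For general (unordered) vectors the functional $s_k$ is merely convex, so that $s_k(\sum_{\ell} q_\ell v_\ell) \leq \sum_{\ell} q_\ell\, s_k(v_\ell)$ points in the wrong direction to conclude anything about $v$. It is the ordering of \emph{all} vectors involved --- both the hypotheses $u_\ell, v_\ell \in \Delta^\downarrow_{d}$ and the membership $u,v \in \Delta^\downarrow_{d}$ established in the first part --- that upgrades these convex inequalities to equalities and lets the chain of inequalities close. Thus the first claim is not merely a sanity check but is logically needed to justify the second.
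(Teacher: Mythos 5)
Your proof is correct and follows essentially the same route as the paper's: establish membership in $\Delta_d^\downarrow$ by a componentwise argument, then exchange the order of summation in the partial sums and apply the hypothesis $s_k(u_\ell)\le s_k(v_\ell)$ termwise. Your closing remark about why the ordering of all vectors is what makes the partial-sum functionals linear (rather than merely convex) is a nice explicit justification of a point the paper leaves implicit, but the argument itself is the same.
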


\begin{proof}
	Let $q = (q_0, \ldots, q_m)$ be an arbitrary probability vector in $\Delta_{m+1}$. 
	Firstly, we note that $(u)_i =  \sum_{\ell=0}^{m} q_\ell \left(u_\ell\right)_i \geq 0$ for all $0\leq i \leq d-1$, and $\sum_{i=0}^{d-1} (u)_i = \sum_{i=0}^{d-1}\sum_{\ell=0}^{m} q_\ell (u_\ell)_{i} =\sum_{\ell=0}^{m} q_\ell \sum_{i=0}^{d-1} (u_\ell)_{i} =1$, i.e., $u \in \Delta_{d}$. Moreover,  since $(u_\ell)_{i+1} \leq (u_\ell)_{i}$ for all $0\leq i \le d-2$ and for all $0\le \ell \le m$, 
	we have	$(u)_{i+1} = \sum_{\ell=0}^{m} q_\ell (u_\ell)_{i+1} \leq \sum_{\ell=0}^{m} q_\ell (u_\ell)_{i} = (u)_{i}$. 
	Hence, $u \in \Delta_{d}^\downarrow$. 
	Analogously, $v \in \Delta_{d}^\downarrow$.
	
	Secondly, since $u_\ell \preceq v_\ell$,  for all $0\leq \ell \leq m$, then we have  
	$\sum_{i=0}^{k}  (u_\ell )_{i}\leq \sum_{i=0}^{k}  (v_\ell )_{i}$, for all $0 \leq  k \leq d-1$.
	Therefore, for all $0 \leq  k \leq d-1$, we have $\sum_{i=0}^{k} (u)_i =\sum_{i=0}^{k}  \sum_{\ell=0}^{m} q_\ell  (u_\ell )_{i} =   \sum_{\ell=0}^{m} q_\ell \sum_{i=0}^{k} (u_\ell )_{i}  \leq \sum_{\ell=0}^{m} q_\ell \sum_{i=0}^{k} (v_\ell )_{i} = \sum_{i=0}^{k} (v)_i$. Hence,  $ u  \preceq v$.
	
\end{proof}

The following result is necessary for  Prop~\ref{prop:relation_coherence_measures_op_sup}.

\begin{lemma}
	\label{lemma:strict_schur}
	Let $f: \Delta_d \to \mathbb{R}$ be a strictly Schur-concave function, and $u,v \in \Delta_d$. 
	If $f(u) = f(v)$, then either (i) $u \npreceq v$ and $v \npreceq u$ (incomparable) or (ii) $u = \Pi v$, with $\Pi$ a permutation matrix. 
\end{lemma}

\begin{proof}
	Given $u, v \in \Delta_d$, we suppose that $f(u) = f(v)$. 
	Then, there are two options: (i) $u$ and $v$ are incomparable or (ii) $u$ and $v$ are comparable.
	If (i) is the case, there is nothing to prove. 
	If (ii) is the case, without loss of generality, we can assume $u \preceq v$.
	Since $f(u) = f(v)$, and $f$ is strictly Schur-concave, we conclude $u = \Pi v$, with $\Pi$ a permutation matrix.
\end{proof}

The following two lemmas will be necessary to prove that the sets $\Ud(\rho)$ and $\Uo(\rho)$ have the same supremum (see Prop.\ref{prop:supremum_equality}).

\begin{lemma}
	\label{lemma:inclusion}
	$\Ud(\rho)\subseteq\Uo(\rho)$. 
\end{lemma}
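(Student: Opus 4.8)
The plan is to prove the inclusion pointwise: I would take an arbitrary element of $\Ud(\rho)$ and exhibit a single pure state connected to $\rho$ whose ordered coherence vector equals that element. Concretely, any $w \in \Ud(\rho)$ has the form $w = \sum_{k = 1}^{M} q_k \mudown(\ket{\psi_k}\bra{\psi_k})$ for some decomposition $\{q_k, \ket{\psi_k}\}_{k= 1}^M \in \D(\rho)$. Since each summand $\mudown(\ket{\psi_k}\bra{\psi_k})$ belongs to $\Delta_d^\downarrow$, Lemma~\ref{lemma:resultado_auxiliar} guarantees that $w \in \Delta_d^\downarrow$ as well; that is, its entries $w_0 \geq \cdots \geq w_{d-1} \geq 0$ sum to one.

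The key construction is to build the pure state $\ket{\psi} = \sum_{i=0}^{d-1} \sqrt{w_i}\, \ket{i}$. By Def.~\ref{def:coherence_vector_pure} one has $\mu(\ket{\psi}\bra{\psi}) = (w_0, \ldots, w_{d-1}) = w$, and because $w$ is already non-increasingly ordered, $\mudown(\ket{\psi}\bra{\psi}) = w$. It then suffices to show $\ket{\psi}\bra{\psi} \io \rho$, for this gives $\ket{\psi}\bra{\psi} \in \O(\rho)$ and hence $w = \mudown(\ket{\psi}\bra{\psi}) \in \Uo(\rho)$, as desired.

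To establish $\ket{\psi}\bra{\psi} \io \rho$ I would invoke Prop.~\ref{lemma:theorem4du2019} with $\sigma = \rho$, taking the very decomposition $\{q_k, \ket{\psi_k}\}_{k= 1}^M$ as the witnessing ensemble. Condition (1) holds since $\rho = \sum_{k} q_k \ket{\psi_k}\bra{\psi_k}$, and condition (2) requires $\mu(\ket{\psi}\bra{\psi}) \preceq \sum_{k} q_k \mudown(\ket{\psi_k}\bra{\psi_k}) = w$. As $\mu(\ket{\psi}\bra{\psi}) = w$ and majorization is reflexive, condition (2) holds trivially, so Prop.~\ref{lemma:theorem4du2019} yields $\ket{\psi}\bra{\psi} \io \rho$, completing the argument.

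There is essentially no serious obstacle beyond the bookkeeping between ordered and unordered coherence vectors: the whole proof collapses onto the observation that the majorization relation depends only on the ordered entries, so that the reflexive relation $w \preceq w$ already satisfies the necessary-and-sufficient condition of Prop.~\ref{lemma:theorem4du2019}. The one point I would state carefully is that $w$ genuinely lies in $\Delta_d^\downarrow$, which is what makes the square-root construction produce a pure state with ordered coherence vector exactly equal to $w$; this is immediate from the definition of $\Ud(\rho)$ together with Lemma~\ref{lemma:resultado_auxiliar}.
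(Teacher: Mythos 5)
Your proposal is correct and follows essentially the same route as the paper's proof: construct a pure state whose ordered coherence vector equals the given element of $\Ud(\rho)$, then use reflexivity of majorization together with Prop.~\ref{lemma:theorem4du2019} to conclude that this pure state can be converted into $\rho$. The only difference is that you make explicit the construction $\ket{\psi}=\sum_i\sqrt{w_i}\ket{i}$ and the appeal to Lemma~\ref{lemma:resultado_auxiliar} to justify $w\in\Delta_d^\downarrow$, details the paper leaves implicit.
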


\begin{proof}
	Given an arbitrary $u \in \Ud(\rho)$, there exists a pure state decomposition $\{q_k, \ket{\psi_k}\}_{k = 1}^M$ of $\rho$ such that $\sum_{k= 1}^{M} q_k \mu^\downarrow(\ket{\psi_k}\bra{\psi_k}) = u$.
	Moreover, always exists a pure state $\ket{\psi}\bra{\psi}$ such that $\mu^\downarrow(\ket{\psi}\bra{\psi}) = u$.
	
	Since the majorization relation is reflexive, $\mu^\downarrow(\ket{\psi}\bra{\psi}) \preceq  \sum_{k} q_{k} \mu^\downarrow(\ket{\psi_{k}}\bra{\psi_{k}})$. Finally, from Proposition~\ref{lemma:theorem4du2019}, we have that $\ket{\psi}\bra{\psi} \io \rho $, and  $\mu \in \Uo(\rho)$. Therefore,  	$\Ud(\rho)\subseteq\Uo(\rho)$. 
	
\end{proof}

\begin{lemma}
	\label{lemma:cota}
	For each $ u \in \Uo(\rho)$, there exists an element $u' \in \Ud(\rho)$, such that $u\preceq u'$.
\end{lemma}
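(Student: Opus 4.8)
The plan is to read the claim off almost directly from Proposition~\ref{lemma:theorem4du2019}. First I would unpack the hypothesis: given $u \in \Uo(\rho)$, the definition of $\Uo(\rho)$ supplies a pure state $\ket{\psi}\bra{\psi} \in \O(\rho)$ with $\mudown(\ket{\psi}\bra{\psi}) = u$; in particular $\ket{\psi}\bra{\psi} \io \rho$, so the hypothesis of Proposition~\ref{lemma:theorem4du2019} with $\sigma = \rho$ is met.

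Next I would invoke the forward implication of Proposition~\ref{lemma:theorem4du2019}. It produces an ensemble $\{p_n, \ket{\phi_n}\}_{n=1}^N \in \D(\rho)$ such that $\mu(\ket{\psi}\bra{\psi}) \preceq \sum_{n=1}^N p_n \mudown(\ket{\phi_n}\bra{\phi_n})$. Setting $u' = \sum_{n=1}^N p_n \mudown(\ket{\phi_n}\bra{\phi_n})$, this vector belongs to $\Ud(\rho)$ by the very definition of that set, and it lies in $\Delta_d^\downarrow$ by Lemma~\ref{lemma:resultado_auxiliar} (a convex combination of ordered probability vectors is again ordered).

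Finally I would close the gap between $\mu$ and $\mudown$: since the majorization relation is invariant under permutations of the entries of its left argument, the relation $\mu(\ket{\psi}\bra{\psi}) \preceq u'$ is equivalent to $\mudown(\ket{\psi}\bra{\psi}) = u \preceq u'$, which is exactly the desired conclusion. The proof has no real obstacle; its entire content is already packaged in Proposition~\ref{lemma:theorem4du2019}, and the only points needing a word of care are this permutation-invariance step and the membership $u' \in \Delta_d^\downarrow$ handed to us by Lemma~\ref{lemma:resultado_auxiliar}.
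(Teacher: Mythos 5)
Your proposal is correct and follows essentially the same route as the paper: extract a pure state $\ket{\psi}\bra{\psi} \io \rho$ realizing $u$, apply the forward direction of Proposition~\ref{lemma:theorem4du2019} with $\sigma=\rho$, and take $u'$ to be the resulting convex combination of ordered coherence vectors, which lies in $\Ud(\rho)$ by definition. The extra remarks on permutation invariance of the left argument of $\preceq$ and on $u'\in\Delta_d^\downarrow$ are correct but routine; the paper leaves them implicit.
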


\begin{proof}
	Given an arbitrary $u \in \Uo(\rho)$,
	there exists a pure state $\ket{\psi}\bra{\psi}$ such that $\mu^\downarrow(\ket{\psi}\bra{\psi}) = u$ and $\ket{\psi}\bra{\psi} \io \rho $. 
	From Proposition~\ref{lemma:theorem4du2019}, there exists a pure state decomposition $\{q_k, \ket{\psi_k}\}_{k = 1}^{M}$ of $\rho$ such that $u \preceq \sum_{k= 1}^{M} q_k \mu^\downarrow(\ket{\psi_k}\bra{\psi_k})$. 
	Since $\rho = \sum_{k= 1}^{M} q_k \ket{\psi_k}\bra{\psi_k}$, then $u' =  \sum_k q_k \mu^\downarrow(\ket{\psi_k}\bra{\psi_k})$ belongs to $\Ud(\rho)$. 
	Therefore, $u \preceq u'$.  
	
\end{proof}

The next result shows the compactness of the set $\O(\rho)$.
This result will be used for the proof of Prop.~\ref{prop:incoheren_state}. 
In addition, Lemma~\ref{lemma:compact} together with the continuity of $f$ allows us to replace the infimum in \eqref{eq:C_f_operational_mixed} by a minimum. In addition, this allows us to fill the gaps of some proofs given in \cite{Yu2020}, where the existence of an optimal state in~\eqref{eq:C_f_operational_mixed} is assumed, but not proved.

\begin{lemma}
	\label{lemma:compact}
	The set $\O(\rho) = \{\ket{\psi}\bra{\psi}: \ket{\psi}\bra{\psi} \io \rho \}$ is a compact set. 
\end{lemma}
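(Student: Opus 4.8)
The plan is to use the equivalence, valid in finite dimensions, between compactness and being closed and bounded. Boundedness of $\O(\rho)$ is immediate: $\O(\rho)$ is contained in the set $\P(\H)$ of rank-one projectors, which is bounded (indeed compact, being the continuous image of the unit sphere of $\H$). Hence the real task is to show $\O(\rho)$ is closed, and the first step is to restate membership in $\O(\rho)$ in a form suited to a limiting argument. Applying Prop.~\ref{lemma:theorem4du2019} with initial pure state $\ket{\psi}\bra{\psi}$ and final state $\rho$, and using that the majorization relation depends only on the decreasingly ordered vector, one gets $\ket{\psi}\bra{\psi}\in\O(\rho)$ if and only if there exists $u\in\Ud(\rho)$ with $\mudown(\ket{\psi}\bra{\psi})\preceq u$. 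Thus $\O(\rho)$ is exactly the set of pure states whose ordered coherence vector lies below $\Ud(\rho)$ in the majorization order.

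The key step, which I expect to be the main obstacle, is to prove that $\Ud(\rho)$ is a compact subset of $\Delta_d^\downarrow$. The difficulty is that pure-state decompositions of $\rho$ may involve an unbounded number of terms, so $\Ud(\rho)$ is a priori an increasing union of images of the (compact) unitary families supplied by the Schr\"odinger mixture theorem, and such a union need not be closed. To circumvent this I would introduce the continuous map $\Phi:\P(\H)\to\S(\H)\times\Delta_d^\downarrow$, $\Phi(\ket{\psi}\bra{\psi})=(\ket{\psi}\bra{\psi},\mudown(\ket{\psi}\bra{\psi}))$, whose image $\Phi(\P(\H))$ is compact; by Carath\'eodory's theorem the convex hull $\mathrm{conv}(\Phi(\P(\H)))$ of a compact set in a finite-dimensional space is again compact. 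One then checks directly from the definitions that $\Ud(\rho)=\{u:(\rho,u)\in\mathrm{conv}(\Phi(\P(\H)))\}$, i.e. $\Ud(\rho)$ is the slice of this compact convex set over the fixed first coordinate $\rho$. Being the projection of the intersection of a compact set with the closed affine slice $\{\rho\}\times\Delta_d^\downarrow$, it is compact. As a byproduct, Carath\'eodory also bounds the number of terms needed in a decomposition attaining any given point of $\Ud(\rho)$.

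Finally I would assemble these ingredients into a sequential closedness argument. Let $\ket{\psi_m}\bra{\psi_m}\in\O(\rho)$ converge to some operator; since $\P(\H)$ is closed the limit is again a pure state $\ket{\psi}\bra{\psi}$, and continuity of $\mudown$ on $\P(\H)$ gives $\mudown(\ket{\psi_m}\bra{\psi_m})\to\mudown(\ket{\psi}\bra{\psi})$. By the restatement above, for each $m$ there is $u_m\in\Ud(\rho)$ with $\mudown(\ket{\psi_m}\bra{\psi_m})\preceq u_m$; compactness of $\Ud(\rho)$ yields a subsequence with $u_{m_j}\to u_\ast\in\Ud(\rho)$. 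Since the majorization relation on $\Delta_d^\downarrow$ is closed (it is cut out by finitely many inequalities between continuous partial sums), passing to the limit gives $\mudown(\ket{\psi}\bra{\psi})\preceq u_\ast$ with $u_\ast\in\Ud(\rho)$, and the converse direction of Prop.~\ref{lemma:theorem4du2019} then gives $\ket{\psi}\bra{\psi}\io\rho$, that is $\ket{\psi}\bra{\psi}\in\O(\rho)$. This establishes closedness, and together with boundedness, compactness. The only genuinely delicate point is the compactness of $\Ud(\rho)$; the remainder is a routine continuity-and-limit argument.
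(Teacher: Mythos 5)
Your proof is correct, but it takes a genuinely different route from the paper's. The paper works directly at the level of the operations: it invokes the fact (cited from the literature) that every $\ket{\psi}\bra{\psi}\in\O(\rho)$ reaches $\rho$ via a Kraus representation with a \emph{fixed} number $N$ of incoherent Kraus operators, assembles the set $V(\rho)$ of tuples $\left(\ket{\psi}\bra{\psi},K_1,\ldots,K_N\right)$ cut out by the normalization and incoherence equations together with $\sum_{n}K_n\ket{\psi}\bra{\psi}K_n^\dag=\rho$, checks that this set is closed and bounded, and obtains $\O(\rho)$ as its continuous projection onto the first factor. You instead eliminate the Kraus operators at the outset by invoking the full strength of Prop.~\ref{lemma:theorem4du2019}, reducing membership in $\O(\rho)$ to the condition $\mudown(\ket{\psi}\bra{\psi})\preceq u$ for some $u\in\Ud(\rho)$, and you then shift the burden to the compactness of $\Ud(\rho)$, which you establish cleanly by identifying it with the slice of $\conv\left(\Phi(\P(\H))\right)$ over the fixed first coordinate $\rho$ and appealing to Carath\'eodory; the closedness of the majorization relation on $\Delta_d^\downarrow$ and the continuity of $\mudown$ then finish the limiting argument. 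Both proofs are sound. The paper's version is more elementary in its ingredients (only the definition of IO plus the uniform Kraus bound), whereas yours leans on the nontrivial pure-to-mixed conversion theorem; in exchange, your argument yields the compactness of $\Ud(\rho)$ and a Carath\'eodory bound on the number of terms in a decomposition as standalone byproducts of independent interest (note, though, that compactness of $\Ud(\rho)$ does not by itself settle whether $\bigvee\Ud(\rho)\in\Ud(\rho)$, since the lattice supremum need not be a limit point of the set), and it avoids any analysis of the Kraus-operator variety.
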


\begin{proof}
According to \cite{Streltsov2017b}, there exists a fixed $N$ such that any pure state $\ket{\psi}\bra{\psi} \in \O(\rho)$ satisfies $\sum^{N}_{n=1} K_n  \ket{\psi}\bra{\psi} K^\dag_n = \rho$, where $K_n$ are incoherent Kraus operators. 
By Def.~\ref{def:IO}, the Kraus operators satisfy the two following conditions:
\begin{align}
&\sum_{n} K_n^{\dagger} K_n = I \label{c1:normalization}. \\
&K_n \ket{i} \propto  \ket{f_n(i)}, \text{with} ~ f_n ~ \text{a relabeling of} ~ \{0,\ldots, d-1\}. \label{c2:io_condition} 
\end{align}
On the one hand, from Eq. \eqref{c1:normalization}, it follows that each incoherent Kraus operator is bounded, i.e, $\| K_n \|_{HS} = \Tr( K^\dag_n K_n) \leq d$. 
On the other hand, condition \eqref{c2:io_condition} is equivalent to 
\begin{equation}\label{eq: reescritura_kraus}
(K_n)_{j,i} = (K_n)_{j,i} \, \delta_{j, f_n(i)},  ~~~ \forall i,j \in \{0,\ldots, d-1\}.
\end{equation}
Notice that condition \eqref{eq: reescritura_kraus} are $d^2$ equations for the entries of $K_n$. 

We denote  the set of all relabeling functions as 
$\mathcal{R} = \{f:\{0, \ldots,d-1\} \to \{0, \ldots ,d-1\} \}$ and the $N$-Cartesian product as $\mathcal{R}^N$.  
Given $\vec{f}=(f_1, \ldots, f_N) \in \mathcal{R}^N$, we define the set 
\begin{multline}
K_{\vec{f}} =  \{ \left( K_1, \ldots , K_{N}  \right) \in   \C^{d \times d} \times \ldots \times \C^{d \times d}  :  
\sum_{n=1}^{N} K_n^{\dagger} K_n = I, \    (K_n)_{j,i} = (K_n)_{j,i} \, \delta_{j, f_n(i)}  
\forall i,j \in \{0,\ldots, d-1\}  \},
\end{multline}
and the set
\begin{multline}
V_{\vec{f}}(\rho) = \{ \left( \ket{\psi}\bra{\psi}, K_1, \ldots , K_{N}  \right) \in  \P(\H) \times \C^{d \times d} \times \ldots \times \C^{d \times d}  : 
\left( K_1, \ldots , K_{N}  \right) \in K_{\vec{f}} \, ,  \ \sum^{N}_{n=1} K_n  \ket{\psi}\bra{\psi} K^\dag_n = \rho \}.
\end{multline}
Finally, we consider the set $V(\rho)= \bigcup_{\vec{f} \in \mathcal{R}^N} V_{\vec{f}}(\rho)$.
Since $\mathcal{R}$ is a finite set, $V(\rho)$ is a finite union of sets. Notice that 
\begin{multline}\label{eq:equivalencia}
\ket{\psi}\bra{\psi} \in \O(\rho) 
\iff \exists \left( K_1, \ldots , K_{N}  \right)  \in \C^{d \times d} \times \ldots \times \C^{d \times d} :  
\left( \ket{\psi}\bra{\psi}, K_1, \ldots , K_{N}  \right) \in V(\rho) 
\end{multline}

Since $\P(\H)$ is closed and the set $V_{\vec{f}}(\rho)$ is given by a finite number of equations\footnote{For any continuous function $h$ the set $\{x: h(x)=0\}$ is closed.}, then we have that $V_{\vec{f}}(\rho)$ is a closed set.
Moreover, $V_{\vec{f}}(\rho)$ is bounded, since $\P(\H)$ is bounded and each incoherent Kraus operator has $\| K_n \|_{HS} \leq d$.
Therefore, $V(\rho)$ is a compact set, since it is a finite union of compact sets. 

Let us denote the projection of the set $\P(\H) \times \C^{d \times d} \times \ldots \times \C^{d \times d}$ onto the first coordinate as $\Pi: \P(\H) \times \C^{d \times d} \times \ldots \times \C^{d \times d} \to \P(\H)$.
Since $\Pi$ is a continuous function and $V(\rho)$ is compact, then $\Pi \left( V(\rho)\right) $ is compact. 

We are going to show that $\O(\rho) =\Pi (V(\rho))$. On the one hand, 
let $\ket{\psi}\bra{\psi} \in \Pi \left( V(\rho)\right) $. 
Then, there is an element $(\ket{\psi}\bra{\psi}, K_1, \ldots, K_N) \in V(\rho)$.
Therefore, using equivalence \eqref{eq:equivalencia}, we have 
$\ket{\psi}\bra{\psi} \in \O(\rho)$.
On the other hand, if $\ket{\psi}\bra{\psi} \in \O(\rho)$, there exists $\left( K_1, \ldots , K_{N}  \right)  \in \C^{d \times d} \times \ldots \times \C^{d \times d}$ such that 
$\left( \ket{\psi}\bra{\psi}, K_1, \ldots , K_{N}  \right) \in V(\rho)$. Then,   $\ket{\psi}\bra{\psi} \in \Pi(V(\rho))$.
Therefore, we conclude that $\O(\rho) = \Pi (V(\rho))$ and it is a compact set.
		
\end{proof}

\section{Proofs of propositions given in the main text}
\label{sec: App2}

For the sake of readability, we repeat the statements of the propositions given in the main text and we provide their corresponding proofs.

\begin{repproposition}{lemma:convex_roof+convexidad}
	Let $C : \S(\H)  \to \R$ be a coherence measure. Then,
	\begin{equation}
	C \leq C^{\croof}_{f_C},  
	\end{equation}	
	where $f_{C}$ is a function associated to $C$.
\end{repproposition}

\begin{proof}
	Given an arbitrary quantum state $\rho$, we consider a pure state decomposition $\{q_k, \ket{\psi_k}\}_{k = 1}^M$ of the state, i.e., $\rho = \sum_{k= 1}^{M} q_k   \ket{\psi_k}\bra{\psi_k}$. 
	Since $C : \S(\H)  \to \R$ satisfies  conditions \ref{c1:nonneg}--\ref{c5:normalization}, from Prop. \ref{lemma:cohrence_measures_pure}, there exists a function $f_{C} \in \F$, such that 
	\begin{equation}
	C(\ket{\psi}\bra{\psi}) = f_{C}(\mu(\ket{\psi}\bra{\psi}),  \  \ \ \forall \ket{\psi}\bra{\psi} \in \P(\H).
	\end{equation}
	In addition, $C$ satisfies condition \ref{c4:convexity}, hence 
	\begin{equation}\label{eq:inequality1 lemma4}
	C(\rho)  \leq \sum_{k= 1}^{M} q_k C\left( \ket{\psi_k}\bra{\psi_k}\right) =  \sum_{k= 1}^{M} q_k f_C \left( \mu(\ket{\psi_k}\bra{\psi_k}\right) .
	\end{equation}
	The inequality \eqref{eq:inequality1 lemma4} is valid for any pure state decomposition of $\rho$, then
	\begin{equation}
	\label{eq:inequality2 lemma4}
	C(\rho) \leq  \inf_{\left\lbrace  q_k, \ket{ \psi_k} \right\rbrace_{k = 1}^M  \in \D(\rho) } \sum_{k= 1}^M q_k 	f_C \left( \mu(\ket{\psi_k}\bra{\psi_k}\right).
	\end{equation}
	By definition, the r.h.s of \eqref{eq:inequality2 lemma4} is the convex roof measure for the function $f_C$. Therefore, we obtain $	C(\rho) \leq C^{\croof}_{f_C}(\rho)$, for all $\rho$.
	
\end{proof}

\begin{repproposition}{prop:convexity of the sets}		
	The sets $\Ud(\rho)$ and $\Uo(\rho)$ are convex.
\end{repproposition}

\begin{proof}
We start with the set $\Ud(\rho)$. 
Let $u, u' \in \Ud(\rho)$. Given $t \in \left(0, 1 \right)$, we consider the ordered probability vector $u_t = t u  + (1-t) u'$. 

By definition of $\Ud(\rho)$, we have ${\left\lbrace  q_k, \ket{ \psi_k} \right\rbrace_{k = 1}^M}$ 
and ${\left\lbrace  q'_k, \ket{ \psi'_k} \right\rbrace_{k = 1}^{M'}}$,  two pure state decompositions  of $\rho$, such that $u = \sum_{k= 1}^M q_k \mudown \left( \ket{\psi_{k}}\bra{\psi_{k}}\right) $ and  $u'= \sum_{k=1}^{M'} q'_k   \mudown \left( \ket{\psi'_{k}}\bra{\psi'_{k}}\right) $.
Since $\rho = \sum_{k= 1}^M q_k \ket{\psi_{k}}\bra{\psi_{k}} =  \sum_{k=1}^{M'} q_k' \ket{\psi_{k}'}\bra{\psi_{k}'}$, we have
\begin{equation}
t\sum_{k= 1}^M q_k \ket{\psi_{k}}\bra{\psi_{k}} + (1-t)
\sum_{k=1}^{M'} q_k' \ket{\psi_{k}'}\bra{\psi_{k}'} = \rho.
\end{equation}
Therefore, the join ${\left\lbrace t q_k, \ket{ \psi_k} \right\rbrace_{k = 1}^M} \cup {\left\lbrace  (1-t)q'_k, \ket{ \psi'_k} \right\rbrace_{k = 1}^{M'}}$ is also a pure state decomposition of $\rho$, and $u_t =   \sum_{k= 1}^M  t q_k \mudown \left( \ket{\psi_{k}}\bra{\psi_{k}}\right)  +  \sum_{k=1}^{M'} (1-t) q'_k   \mudown \left( \ket{\psi'_{k}}\bra{\psi'_{k}}\right) \in \Ud(\rho)$. Hence,  $\Ud(\rho)$ is a convex set.

Now, we consider the set $\Uo(\rho)$. 
Let $u, u' \in \Uo(\rho)$. Again, given $t \in \left(0, 1 \right)$, we consider the ordered probability vector $u_t = t u  + (1-t) u'$. Also, we consider a pure state $\ket{u_t}\bra{u_t}$, such that  
$\mudown\left( \ket{u_t}\bra{u_t}\right) = u_t $.

From Lemma \ref{lemma:cota}, we know that there are two probability vectors $v, v' \in\Ud(\rho)$, such that $u\preceq v$ and $u'\preceq v'$. If we define the ordered probability vector $v_t =t v  + (1-t) v'$, then, from Lemma~\ref{lemma:resultado_auxiliar}, we have $u_t = t u  + (1-t) u' \preceq  t v  + (1-t) v'= v_t$.
Since $\Ud(\rho)$ is convex, $v_t \in \Ud(\rho)$. By definition of the set $\Ud(\rho)$, there is a pure state decomposition $\{q_k, \ket{\phi_k}\}_{k = 1}^M$
of $\rho$, such that $v_t = \sum_{k = 1}^{M}  q_k \mu^\downarrow(\ket{\phi_k}\bra{\phi_k})$. 

Summing up, given the pure state $\ket{u_t}\bra{u_t}$, we have that
$u_t = \mudown\left( \ket{u_t}\bra{u_t}\right) \preceq v_t = \sum_{k = 1}^{M}  q_k \mu^\downarrow(\ket{\phi_k}\bra{\phi_k})$, with $\rho = \sum_{k = 1}^{M}  q_k \ket{\phi_k}\bra{\phi_k}$.
Finally, from Prop. \ref{lemma:theorem4du2019}, we conclude $\ket{u_t}\bra{u_t} \io \rho$, which implies $u_t = t u  + (1-t) u' \in \Uo(\rho)$. Hence,  $\Uo(\rho)$ is a convex set.

\end{proof}

\begin{repproposition}{prop:supremum_equality}
	$\bigvee\Ud(\rho) = \bigvee\Uo(\rho)$.
\end{repproposition}

\begin{proof}
	From Lemma~\ref{lemma:inclusion}, we have $\Ud(\rho)\subseteq\Uo(\rho)$. Then, $ \bigvee\Ud(\rho) \preceq \bigvee\Uo(\rho)$.
	In addition, from Lemma~\ref{lemma:cota}, we have  $ \bigvee\Uo(\rho) \preceq \bigvee\Ud(\rho)$. 
	Therefore, since the majorization relation is antisymmetric, we obtain $\bigvee\Ud(\rho) = \bigvee\Uo(\rho)$.
\end{proof}

\begin{repproposition}{lemma:maximum_sets}
	$\vc(\rho) \in \Ud(\rho) \iff \vc(\rho) \in \Uo(\rho)$.
\end{repproposition}

\begin{proof}
	~ 

	\begin{itemize}
		\item[$\left( \Longrightarrow\right) $] Suppose $\vc(\rho) \in \Ud(\rho)$. Then, from Lemma~\ref{lemma:inclusion}, it follows that  $\vc(\rho) \in \Uo(\rho)$.
		
		\item[$\left( \Longleftarrow\right) $] Suppose	$\vc(\rho) \in \Uo(\rho)$. From Lemma~\ref{lemma:cota}, exists $u' \in \Ud(\rho)$ such that $\vc(\rho)  \preceq u'$. From Prop.\ref{prop:supremum_equality}, we also have that $u' \preceq \vc(\rho)$. Then, $u ' = \vc(\rho)   \in  \Ud(\rho)$.
		
	\end{itemize}
	
\end{proof}

\begin{repproposition}{prop:incoheren_state}
	$\rho$  is incoherent $\iff \vc(\rho)=(1,0,\ldots,0)$.
\end{repproposition}

\begin{proof}

	~ 

	\begin{itemize}
		\item[$\left( \Longrightarrow\right) $] 	
		Let $\rho \in \I$ be an incoherent state. By definition, $\rho$ is diagonal in the incoherent basis, that is, $\rho= \sum_{i=0}^{d-1} p_i \ket{i}\bra{i}$.
		Since, 	$\{p_i, \ket{i}\} \in \D(\rho)$ and  $\sum_i p_i \mu^\downarrow(\ket{i}\bra{i}) = (1,0,\ldots,0)\in \Ud(\rho)$, then 	$\vc(\rho)=(1,0,\ldots,0)$.

		\item[$\left( \Longleftarrow\right) $] 
		
		Let $\rho \in \S(\H)$ be such that $\vc(\rho)=(1,0,\ldots,0)$. 
		To prove the converse statement, we appeal to \textit{reductio ad absurdum} by assuming that $\rho$ is a coherent state.	
		From Prop. \ref{prop:supremum_equality}, we have that $\vc(\rho) = \bigvee \Uo(\rho)$.
		
		According to the formula of the supremum~\cite{Bosyk2019}, the first entry of $\vc(\rho)$ is given by the supremum of the first entries of the vectors of $\Uo(\rho)$, i.e., 
		\begin{equation}
		\label{eq:supremum_1_def}
		\left( \vc(\rho)\right) _1=  \bigvee \left\{ \left(  \mu^\downarrow(\ket{\psi}\bra{\psi})\right)_1  : \ket{\psi}\bra{\psi} \in \O(\rho)  \right\},
		\end{equation}
		where
		\begin{equation}
		\label{eq:coherence_vector_ensamble}
		\left(  \mu^\downarrow(\ket{\psi}\bra{\psi}) \right) _1=   \max_{0 \leq i \leq d-1} \left|\braket{i| \psi}\right|^2.
		\end{equation}
		Then, 
		\begin{equation}
		\left( \vc(\rho)\right) _1=  \max_{0 \leq i \leq d-1} \bigvee \left\{  \left|\braket{i| \psi}\right|^2   : \ket{\psi}\bra{\psi} \in \O(\rho)  \right\} .
		\end{equation}
		For each $0 \leq i \leq d-1$, we consider the function  $f_i: \O(\rho) \to \R$, given by $f_i( \ket{\psi}\bra{\psi})= \left|\braket{i| \psi}\right|^2 $. Since $\O(\rho)$ is compact (see Lemma~\ref{lemma:compact}) and $f_i$ is continuous, 
		there exists a pure state $\ket{\psi_i}\bra{\psi_i} \in \O(\rho)$ which is the maximum of $f_i$ in $O(\rho)$, i.e.,
		\begin{equation}
		f_i( \ket{\psi_i}\bra{\psi_i})= \max \left\{  \left|\braket{i| \psi}\right|^2   : \ket{\psi}\bra{\psi} \in \O(\rho)  \right\} .
		\end{equation}
		Therefore, if we define  $f_{i^*}( \ket{\psi_{i^*}}\bra{\psi_{i^*}})  = \max_{0 \leq i \leq d-1} f_i( \ket{\psi_i}\bra{\psi_i})$, we have
		\begin{equation}
		\label{eq:supremum_1_max}
		\left( \vc(\rho)\right) _1 = f_{i^*}( \ket{\psi_{i^*}}\bra{\psi_{i^*}}),
		\end{equation}
		with $\ket{\psi_{i^*}}\bra{\psi_{i^*}} \in  \O(\rho)$. By hypothesis, $\vc(\rho)=(1,0,\ldots,0)$, then 
		$f_{i^*}(\ket{\psi_{i^*}}\bra{\psi_{i^*}}) = \left|\braket{i^*| \psi_{i^*}}\right|^2  = 1$.
		This implies that $\ket{{\psi_{i^*}}}\bra{{\psi_{i^*}}} \in \I$. 
		
		Summing up, $\ket{\psi_{i^*}}\bra{\psi_{i^*}}$ is an incoherent pure state that can be transformed into the coherent state $\rho$ by means of an  incoherent operation, but this is absurd. Therefore, $\rho$ has to be incoherent. 
				
	\end{itemize}
		
\end{proof}

\begin{repproposition}{prop:coherence_vector_maximum_coherence}
	$\rho$ is maximally coherent $\iff \vc(\rho) = \left(\frac{1}{d}, \ldots, \frac{1}{d}\right)$.
\end{repproposition}

\begin{proof}

	~ 

	\begin{itemize}
		\item[$\left( \Longrightarrow\right) $] 	
			
		Let $\rho \in \S(\H)$ be an arbitrary maximally coherent state, that is,  $\rho = U_{\mathrm{IO}} \ket{\Psi^{\mcs}}\bra{\Psi^{\mcs}} U^\dag_{\mathrm{IO}}$, with $\ket{\Psi^{\mcs}} = \sum_{i=0}^{d-1} \frac{1}{\sqrt{d}} \ket{i}$ and $U_{\mathrm{IO}} = \sum_{i= 0}^{d-1} e^{\imath \theta_{i}} \ket{\pi(i)}\bra{i}$, where $\theta_{i} \in \R$ and $\pi$ is a permutation acting on the set $\{0,\ldots,d-1\}$.
		Since $|\bra{i} U_{\mathrm{IO}} \ket{\Psi^{\mcs}}|^2 = 1/d$ for all $i \in \{0,\ldots,d-1\}$, we have $\vc(\rho) = \mu\left(U_{\mathrm{IO}} \ket{\Psi^{\mcs}}\bra{\Psi^{\mcs}} U^\dag_{\mathrm{IO}}\right) = (1/d, \ldots, 1/d)$.

		\item[$\left( \Longleftarrow\right) $] 
		
		Let $\rho \in \S(\H)$ be such that $\vc(\rho) = \left(\frac{1}{d}, \ldots, \frac{1}{d}\right)$.

		Firstly, we consider the pure state case, i.e., $\rho = \ket{\psi}\bra{\psi}$. 
		The coherence vector of $\rho$ is given by $\vc(\rho) = \mudown(\ket{\psi}\bra{\psi}) =\left(\frac{1}{d}, \ldots, \frac{1}{d}\right)$. From Def.~\ref{def:coherence_vector_pure}, it follows $|\braket{i|\psi}|^2 = 1/d$ for all $i \in \{0,\ldots,d-1\}$.
		Therefore, $\ket{\psi} = U_{\mathrm{IO}}\ket{\Psi^{\mcs}}$, with $U_{\mathrm{IO}} = \sum_{i= 0}^{d-1} e^{\imath \theta_{i}} \ket{i}\bra{i}$  and $\theta_{i} \in \R$. This implies that $\rho$ is a maximally coherent state.
		
		Secondly, we are going to show that $\rho$ has to be a pure state. We appeal to \textit{reductio ad absurdum} by assuming that $\rho$ is a mixed state.	
		Let $\{q_k, \ket{\psi_k}\}_{k = 1}^M$ be an arbitrary pure state decomposition of $\rho$, i.e., $\rho = \sum_{k= 1}^M q_k \ket{\psi_k}\bra{\psi_k}$. 	On the one hand, by definition of $\vc(\rho)$, we have $\sum_{k = 1}^M q_k \mudown(\ket{\psi_k}\bra{\psi_k}) \preceq (1/d, \ldots, 1/d)$.  
		On the other hand, since $(1/d, \ldots, 1/d)$ is the bottom of the majorization lattice, we have $(1/d, \ldots, 1/d) \preceq \sum_{k = 1}^M q_k \mudown(\ket{\psi_k}\bra{\psi_k})$.
		Then, $\sum_{k = 1}^M q_k \mudown(\ket{\psi_k}\bra{\psi_k}) = (1/d, \ldots, 1/d)$.
		Moreover, the probability vector $(1/d, \ldots, 1/d)$ is an extreme point of the $d-1$-simplex, which implies that  $\mudown(\ket{\psi_k}\bra{\psi_k}) = (1/d, \ldots, 1/d)$ for all $k \in \{1, \ldots,  M\}$. 
		Then, states $\ket{\psi_k}\bra{\psi_k}$ have to be maximally coherent states.
		Therefore, we conclude that any pure state decomposition of $\rho$ has to be formed by maximally coherent pure states.  
	
		In particular, we consider the spectral decomposition of $\rho$,
		\begin{equation}
		\rho = \sum_{j=1}^{d} \lambda_j \ket{e_j}\bra{e_j}.
		\end{equation}
		The eigenvectors have to be maximally coherent pure states. 
		Since, by hypothesis $\rho$ is a mixed state, there are at least two eigenvalues different from zero. Without loss of generality, we consider $\lambda_1, \lambda_2 >0$. In terms of the incoherent basis, we have $\ket{e_1} = \sum_{i=0}^{d-1} \frac{e^{\imath \alpha_i}}{\sqrt{d}} \ket{i}$ and $\ket{e_2} = \sum_{i=0}^{d-1} \frac{e^{\imath \beta_i}}{\sqrt{d}} \ket{i}$, with $\alpha_i, \beta_i \in \R$, for all $i \in \{0,\ldots,d-1\}$.
		
		According to the Schr\"odinger mixture theorem (see \eg \cite{Hughston1993,Nielsen2000}), any ensemble $\{p_k, \ket{\phi_k}\}_{k = 1}^M$ is a pure state decomposition of $\rho$ if, and only if, there exists a unitary matrix $U$ such that
		\begin{equation}
		\label{eq:SchTh}
		\ket{\phi_k} = \frac{1}{\sqrt{p_k}} \sum_{j=1}^{d}  \sqrt{\lambda_j} U_{k,j} \ket{e_j}.
		\end{equation}

		We consider a $d \times d$ unitary matrix of the form
		\begin{equation}
		U = \begin{pmatrix}
		\begin{matrix}
		U_{11} & U_{1,2} \\
		-U^*_{12} & U_{1,1}
		\end{matrix}
		& \rvline & \bigzero \\
		\hline
		\bigzero & \rvline &
		I_{d-2}
		\end{pmatrix}
		\end{equation}
		with $U_{1,1} = \sqrt{\frac{\lambda_2}{\lambda_1 + \lambda_2}}$ and $ U_{1,2} =  - e^{\imath (\alpha_0-\beta_0)} \sqrt{\frac{\lambda_1}{\lambda_1+\lambda_2}} $.
		Then, the first state takes the form
		\begin{equation}
		\ket{\phi_1} = \frac{1}{\sqrt{p_1}} \left(  \sqrt{\lambda_1} U_{1,1} \ket{e_1} +\sqrt{\lambda_2} U_{1,2} \ket{e_2}\right), 
		\end{equation}
		and, taking into account the expression of $\ket{e_1}$ and $\ket{e_2}$ in the incoherent basis, we obtain 
		\begin{equation}
		\braket{0|\phi_1} = \frac{1}{\sqrt{d p_1}}  \left(e^{\imath \alpha_0}  \sqrt{\lambda_1} U_{1,1}  +e^{\imath \beta_0}\sqrt{\lambda_2} U_{1,2} \right) = 0, 
		\end{equation}
		which is in contradiction with $\ket{\phi_1}$ being a maximally coherent state. Therefore, $\rho$ cannot be a mixed state, it has to be a pure state.

	\end{itemize}

\end{proof}

\begin{repproposition}{prop:convexity_mu}
	Let $\rho = \sum_{k = 1}^{M}  p_k \ket{\psi_k} \bra{\psi_k}$. Then,
	\begin{equation}
	\sum_{k = 1}^{M}  p_k \vc\left(\ket{\psi_k} \bra{\psi_k}\right) \preceq \vc(\rho). 
	\end{equation}
\end{repproposition}

\begin{proof}
	
Let $\rho = \sum_{k = 1}^{M}  p_k \ket{\psi_k} \bra{\psi_k}$.
We have $\sum_{k = 1}^{M}  p_k \mudown\left(\ket{\psi_k} \bra{\psi_k}\right) = \sum_{k = 1}^{M}  p_k \vc\left(\ket{\psi_k} \bra{\psi_k}\right) \in \Ud(\rho)$.
Then, by definition of the supremum, $\sum_{k = 1}^{M}  p_k \vc\left(\ket{\psi_k} \bra{\psi_k}\right) \prec \vc(\rho)$.

\end{proof}

\begin{repproposition}{prop:monotonocity_vectorcoherence}
	Let $\rho$ and $\sigma$ be two arbitrary quantum states. Then,
	\begin{multline}
	\rho \io \sigma  \!\implies \! \forall \{q_k,\ket{\psi_k}\}_{k = 1}^M \!\in\! \D(\rho), \exists \{r_l,\ket{\phi_l}\}_{l \in L} \!\in\! \D(\sigma): \\
	\sum_{k = 1}^{M}  q_k \mu^\downarrow(\ket{\psi_k}\bra{\psi_k}) \preceq  \sum_{l \in L}  r_{l} \mu^\downarrow(\ket{\phi_{l}}\bra{\phi_{l}}). 
	\end{multline}
\end{repproposition}

\begin{proof}
	
Let $\Lambda$ be an incoherent operation, with incoherent Kraus operators $\{K_n\}_{n = 1}^N$, such that $\sigma= \Lambda(\rho) = \sum_{n = 1}^{N}  K_n \rho K^\dag_n$.
Let $\{q_k, \ket{\psi_k} \}_{k = 1}^M$ be an arbitrary pure state decomposition of $\rho$, that is, $\rho = \sum_{k = 1}^{M}  q_k \ket{\psi_k}\bra{\psi_k}$.
Then, we have $\sigma =\Lambda(\rho) =  \sum_{n = 1}^{N} \sum_{k= 1}^M q_k p_{n,k} \ket{\phi_{n,k}}\bra{\phi_{n,k}}$, with $p_{n,k} = \Tr(K_n \ket{\psi_k}\bra{\psi_k} K_n^\dag)$ and $\ket{\phi_{n,k}} = K_n \ket{\psi_k}/\sqrt{p_{n,k}}$.
	
In particular, for each $\ket{\psi_k}\bra{\psi_k}$, we have $\ket{\psi_k}\bra{\psi_k} \io \sum_{n= 1}^{N}  p_{n,k} \ket{\phi_{n,k}}\bra{\phi_{n,k}}$,
Then, according to Eq.~\eqref{eq:lemma4zhu} (\cite[Lemma 4]{Zhu2017}),  
\begin{equation}
\mu^\downarrow(\ket{\psi_k}\bra{\psi_k}) \preceq  \sum_{n = 1}^{N}  p_{n,k} \mu^\downarrow(\ket{\phi_{n,k}}\bra{\phi_{n,k}}). 
\end{equation}
Applying Lemma~\ref{lemma:resultado_auxiliar} for the sequences of ordered probability vectors $\{ \mu^\downarrow(\ket{\psi_k}\bra{\psi_k}) \}_{k = 1}^M$ and $ \left\lbrace \sum_{n} p_{n,k} \mu^\downarrow(\ket{\phi_{n,k}}\bra{\phi_{n,k}})\right\rbrace_{k = 1}^M$,
we obtain
\begin{equation}
\label{eq:lemma4zhu_mixed}
\sum_{k = 1}^{M}  q_k \mu^\downarrow(\ket{\psi_k}\bra{\psi_k}) \preceq  \sum_{n = 1}^{N} \sum_{k= 1}^M q_k p_{n,k} \mu^\downarrow(\ket{\phi_{n,k}}\bra{\phi_{n,k}}), 
\end{equation}
where $q_k \geq 0$ and $\sum_{k = 1}^{M}  q_k =1$.
Defining
$r_l= q_k p_{n,k}$, $\ket{\phi_{l}} = \ket{\phi_{n,k}}$ and $L = \{(n,k) : 1 \leq n \leq N,  1 \leq k \leq M \}$, we can rewrite expression \eqref{eq:lemma4zhu_mixed} as 
\begin{equation}
\label{eq:lemmazhu_mixed_bis}
\sum_{k = 1}^{M}  q_k \mu^\downarrow(\ket{\psi_k}\bra{\psi_k}) \preceq  \sum_{l \in L} r_l \mu^\downarrow(\ket{\phi_{l}}\bra{\phi_{l}}), 
\end{equation}
with $\{r_l,\ket{\phi_l}\}_{l \in L} \in \D(\sigma)$.
Since the  majorization relation \eqref{eq:lemmazhu_mixed_bis} is valid for any pure sate decomposition  of $\rho$, we conclude that for each  
$\{q_k,\ket{\psi_k}\}_{k = 1}^M \in \D(\rho)$,
there exists a pure state decomposition  $\{r_l,\ket{\phi_l}\}_{l \in L} \in \D(\sigma)$, such that relation  \eqref{eq:lemmazhu_mixed_bis} is satisfied.
	
\end{proof}

\begin{repcorollary}{cor:monotonocity_vectorcoherence 2}
	Let $\rho$ and $\sigma$ be two arbitrary quantum states. Then, 	
	\begin{equation}
		\rho \io \sigma  \implies \vc(\rho) \preceq \sum_{n = 1}^{N}  p_n \vc(\sigma_n),
	\end{equation}
	with $p_n = \Tr \left( K_n \rho K^\dag_n \right) $ and $\sigma_n = K_n \rho K^\dag_n/p_n$, where $\{K_n\}_{n = 1}^N$ are incoherent Kraus operators such that $\sigma= \sum_{n = 1}^{N}  K_n \rho K^\dag_n$.
\end{repcorollary}

	\begin{proof}	
	Let $\Lambda$ be an incoherent operation, with incoherent Kraus operators $\{K_n\}_{n=1}^N$, such that $\sigma= \Lambda(\rho) = \sum_{n = 1}^{N}  K_n \rho K^\dag_n$, and define $p_n = \Tr\left(K_n \rho K^\dag_n\right)$ and $\sigma_n = K_n \rho K^\dag_n/p_n$.
	
	For any arbitrary pure state decomposition  $\{q_k, \ket{\psi_k} \}_{k = 1}^M$ of $\rho$, we can write $\sigma_n = \sum_{k = 1}^{M}  q_k p_{n,k} \ket{\phi_{n,k}}\bra{\phi_{n,k}} / p_n$, with $p_n = \sum_k q_k p_{n,k}$.
	Since $\sum_{k = 1}^{M}  q_kp_{n,k} \mudown(\ket{\phi_{n,k}}\bra{\phi_{n,k}})/p_n \in \Ud(\sigma_n)$, then	%
	\begin{equation}
	\sum_{k = 1}^{M}  \frac{q_kp_{n,k}}{p_n} \mudown(\ket{\phi_{n,k}}\bra{\phi_{n,k}}) \preceq \vc(\sigma_n).
	\end{equation} 
	Multiplying by $p_n$,  summing over $n$, and using ~\eqref{eq:lemma4zhu_mixed}, we obtain 
	\begin{align}
	\sum_{k = 1}^{M}  q_k \mudown(\ket{\psi_{k}}\bra{\psi_{k}}) &\preceq \sum_{n = 1}^{N} \sum_{k= 1}^M q_kp_{n,k} \mudown(\ket{\phi_{n,k}}\bra{\phi_{n,k}}) \\
	&\preceq \sum_{n = 1}^{N}  p_n \vc(\sigma_n).
	\end{align} 
	The last majorization relation does not depend on the pure state decomposition of $\rho$, then $\sum_{n = 1}^{N}  p_n \vc(\sigma_n)$ is also an upper bound of $\Ud(\rho)$. 
	Therefore, by definition of supremum, we conclude that $\vc(\rho) \preceq \sum_n p_n \vc(\sigma_n)$.

\end{proof}

\begin{repcorollary}{cor:monotonocity_vectorcoherence_supremum}
	Let $\rho$ and $\sigma$ be two arbitrary quantum states. Then,
	\begin{equation}
	\rho \io \sigma  \implies \vc(\rho) \preceq \vc(\sigma).
	\end{equation}
\end{repcorollary}

\begin{proof}
	Since $\rho \io \sigma$, from
	Prop.~\ref{prop:monotonocity_vectorcoherence}, we have that, for all $ \{q_k,\ket{\psi_k}\}_{k = 1}^M \in \D(\rho)$, there is a $\{r_l,\ket{\phi_l}\}_{l \in L} \in \D(\sigma)$, such that
	\begin{equation}
	\sum_{k = 1}^{M}  q_k \mu^\downarrow(\ket{\psi_k}\bra{\psi_k}) \preceq  \sum_{l \in L}  r_{l} \mu^\downarrow(\ket{\phi_{l}}\bra{\phi_{l}}).
	\end{equation}
	Then, from the definition of the supremum we have 
	\begin{equation}
	\sum_{k = 1}^{M}  q_k \mu^\downarrow(\ket{\psi_k}\bra{\psi_k}) \preceq \vc(\sigma).
	\end{equation}
	This implies that $\vc(\sigma)$ is an upper bound of the set $\Ud(\rho)$. 
Therefore, by definition of  $\vc(\rho)$, we have $\vc(\rho) \preceq \vc(\sigma)$.

\end{proof}

\begin{repproposition}{prop:cohrence_measures_pure_conv}	
	For any function $f \in \F$, the coherence vector measure $C^{\cv}_{f}$ satisfies conditions \ref{c1:nonneg}--\ref{c5:normalization}.
\end{repproposition}

\begin{proof}
	\hfill
	\begin{enumerate}[label=(\subscript{\rm{C}}{{\arabic*}})]
		\item By Prop.~\ref{prop:incoheren_state}, if $\rho \in \I$, then $\vc(\rho) = (1,0,\ldots,0)$. Therefore,  $C^{\cv}_{f}(\rho)=f(1,0,\ldots,0) =0$. 

		\item 
		Since $\rho \io	\Lambda(\rho)$, from Cor.
		\ref{cor:monotonocity_vectorcoherence_supremum}, we obtain 	 $\vc(\rho ) \preceq \vc(\Lambda(\rho))$.
		Moreover, $f$ is symmetric and concave, then $f$ is also Schur-concave, which implies that 
		$f(\vc(\rho )) \geq f(  \vc(\Lambda(\rho)))$.  Finally, we conclude that
		$C^{\cv}_f(\rho)  \geq C^{\cv}_f(\Lambda(\rho))$. 

	\item 
	Let $\rho \in \S(\H)$ be an arbitrary quantum state and $\Lambda$ an incoherent operation, with incoherent Kraus operators $\{K_n\}_{n= 1}^N$, $p_n = \Tr\left(K_n \rho K^\dag_n\right)$ and $\sigma_n = K_n \rho K^\dag_n/ p_n$.
	If we define $\sigma = \Lambda(\rho)$, from Cor.
	\ref{cor:monotonocity_vectorcoherence 2}, Eq.  \eqref{eq:strong_monotonocity_vectorcoherence_supremum}, we obtain $\vc(\rho) \preceq \sum_{n= 1}^N p_n \vc(\sigma_n)$. 
	Then, we have
	\begin{equation*}
	 \sum_{n= 1}^N p_n f \left( \vc(\sigma_{n} )\right)  \leq f\left( \sum_{n= 1}^N p_n \vc(\sigma_n)\right)  \leq f(\vc(\rho)),
	\end{equation*} 
	where in the first inequality we have used the concavity of $f$ and in the second one the
	Schur-concavity.
	Finally, taking into account the coherence vector definition, we conclude
	\begin{equation*}
	 \sum_{n= 1}^N p_n C^{\cv}_f(\sigma_n)  \leq  C^{\cv}_f(\rho).
	\end{equation*}

	\item 
	
	Let $\rho$ a maximally coherent state and $\sigma$ an arbitrary state. Due to Prop.~\ref{prop:coherence_vector_maximum_coherence}, $\vc(\rho) = (1/d,\ldots,1/d)$. 
	Moreover, since $ \arg\max_{u \in \mathbb{R}^d} f(u)= (1/d,  \ldots, 1/d) $, we have  $f(\vc(\rho)) =  f(1/d,\ldots,1/d) \geq  f(\nu(\sigma))$. This implies that $C_f(\rho) \geq C_f(\sigma)$. 
	Therefore, we conclude that $\arg\max_{\rho \in S(\H)} C_f(\rho)$ is reached at maximally coherent states.

\end{enumerate}	
\end{proof}

\begin{repproposition}{prop:relation_coherence_measures_op_sup}
	The following statements are equivalent: 
	\begin{enumerate}
		\item  There exists an optimal pure state decomposition of $\rho$, i.e., $\vc(\rho) \in \Ud(\rho)$. \label{st1}
		\item $C^{\cv}_f(\rho) =  C^{\op}_f(\rho)$ for all $f \in \F$. \label{st2} 
		\item $C^{\cv}_f(\rho) =  C^{\op}_f(\rho)$ for some $f \in \F$ strictly Schur-concave. \label{st3}
	\end{enumerate}
\end{repproposition}

\begin{proof}

~

\begin{itemize}[itemindent=2em]
\item[$\left(\ref{st1} \implies \ref{st2}\right)$] 	
\hfill

Let $f \in \F$. 
On the one hand, by Prop.~\ref{prop:relation_top}, we have $C^{\op}_f(\rho) \geq C^{\cv}_f(\rho)$.
On the other hand, if there exists an optimal pure state decomposition of $\rho$, then $\vc(\rho) \in \Ud(\rho)$. From Lemma \ref{lemma:inclusion}, we have  $\vc(\rho) \in \Uo(\rho)$.
By definition of the top measure, $C^{\op}_f(\rho) \leq f(u)$ for all $u \in \Uo(\rho)$.
In particular, $C^{\op}_f(\rho) \leq f(\nu(\rho)) = C^{\cv}_f(\rho)$. 
Finally, we conclude $C^{\cv}_f(\rho) = C^{\op}_f(\rho)$, which is valid for all $f \in \F$.

\item[$\left(\ref{st2} \implies \ref{st3}\right)$]
\hfill

Trivial.

\item[$\left(\ref{st3} \implies \ref{st1}\right)$]	
\hfill

Let $f \in \F$ be a strictly Schur-concave function such that  
$C^{\cv}_f(\rho) =  C^{\op}_f(\rho)$.

Notice that $C^{\op}_{f}$ can be written as 
\begin{equation}
C^{\op}_{f}(\rho) = \min_{ u \in \Uo(\rho) } f(u).
\end{equation}
We denote the probability vector where the minimum is reached as $\tilde{u}$. Then, $f(\vc(\rho))= C^{\cv}_f(\rho) =  C^{\op}_f(\rho) = f(\tilde{u})$. Since $f$
is strictly Schur-concave, then by Lemma \ref{lemma:strict_schur}, we have $\vc(\rho) = \tilde{u}\in \Uo(\rho)$.
Finally, by Lemma \ref{lemma:maximum_sets}, $\vc(\rho) \in \Ud(\rho)$, i.e., there exists an optimal pure state decomposition of $\rho$.

\end{itemize}
\end{proof}

\begin{repproposition}{prop:relation_coherence_measures_sup_croof}
	If there exists an optimal pure state decomposition of $\rho$, then $C^{\cv}_f(\rho) \geq C^{\croof}_f(\rho)$.
\end{repproposition}	
	
\begin{proof}
Let $\{\tilde{q}_k,\ket{\tilde{\psi}_k}\}_{k = 1}^M \in \D(\rho)$ be an optimal pure sate decomposition of $\rho$. Thus, $\vc(\rho)= \sum_{k= 1}^M \tilde{q}_k \mu^\downarrow(\ket{\tilde{\psi}_k}\bra{\tilde{\psi}_k})$.
Let $f \in \F$, then 
\begin{align}
	 C_f^{\cv}(\rho)&= f(\nu(\rho)) = f\left(\sum_{k= 1}^M \tilde{q}_k \mu^\downarrow(\ket{\tilde{\psi}_k}\bra{\tilde{\psi}_k})\right) \\
	 &\geq \sum_{k= 1}^M \tilde{q}_k f\left(  \mu(\ket{\tilde{\psi}_k}\bra{\tilde{\psi}_k})\right) \\
	 &\geq   \inf_{\left\lbrace  q_k, \ket{ \psi_k} \right\rbrace_{k = 1}^M  \in \D(\rho) } \sum_{k= 1}^M q_k 	 f(\mu(\ket{\psi_k}\bra{\psi_k})) \\
	 &= C^{\croof}_{f}(\rho),
\end{align}
where the first inequality comes from the concavity and symmetric properties of $f$, and the second one comes from the definition of the convex roof measure. 
\end{proof}

\begin{repproposition}{prop:linear f}
	Let $f\in\F$ be such that $f|_{\Delta_d^\downarrow}=c+\ell$, where $c\in\mathbb{R}$ and $\ell : \Delta_d^\downarrow \to \mathbb{R}$
	is a linear function. Then, $C^{\cv}_f \leq C^{\croof}_f=C^{\op}_f$.
\end{repproposition}

\begin{proof}
Let $\rho \in \S(\H)$.
On the one hand, by definition of the convex roof measure, we have
\begin{align}
C^{\croof}_{f}(\rho) &= \inf_{\left\lbrace  q_k, \ket{ \psi_k} \right\rbrace_{k = 1}^M  \in \D(\rho) } \sum_{k= 1}^M q_k 	 f(\mu(\ket{\psi_k}\bra{\psi_k})) \\
&= \sum_{k= 1}^M \tilde{q}_k f\left(  \mudown(\ket{\tilde{\psi}_{k}}\bra{\tilde{\psi}_{k}}) \right), 
\end{align}
with $\left\lbrace \tilde{q}_k, \ket{\tilde{ \psi}_k} \right\rbrace_{k = 1}^M  \in \D(\rho)$ the pure state decomposition of $\rho$ where the minimum is reached.
Taking into account the form of $f$, we get 
\begin{align}
\label{eq:cr_top_igualdad1}
C^{\croof}_{f}(\rho) &=\sum_{k= 1}^M \tilde{q}_k \left( c +  \ell \left( \mudown(\ket{\tilde{\psi}_{k}}\bra{\tilde{\psi}_{k}}) \right)  \right) \\
&=  c + \ell \left( \sum_{k= 1}^M \tilde{q}_k \mudown(\ket{\tilde{\psi}_{k}}\bra{\tilde{\psi}_{k}}) \right) \\
&= f\left( \sum_{k= 1}^M \tilde{q}_k \mudown(\ket{\tilde{\psi}_{k}}\bra{\tilde{\psi}_{k}}) \right),
\end{align}
where we have used the linearity of $\ell$ and the condition $\sum_{k = 1}^M q_k= 1$.
	
On the other hand, by definition of $\nu(\rho)$ and Schur-concavity of $f$, we have 
\begin{equation}
C^{\cv}_f(\rho)= f(\nu(\rho))\leq f\left( \sum_{k= 1}^M q_k \mudown(\ket{\psi_{k}}\bra{\psi_{k}}) \right), 
 \forall \{q_k,\ket{\psi}_k\}_{k = 1}^M \in \D(\rho).
\end{equation}
In particular, 
\begin{equation}
C^{\cv}_f(\rho) \leq f\left( \sum_{k= 1}^M \tilde{q}_k \mudown(\ket{\tilde{\psi}_{k}}\bra{\tilde{\psi}_{k}}) \right).
\end{equation}
Therefore,  we conclude $C^{\cv}_f(\rho)\le C^{\croof}_f(\rho)$.

In order to prove the equality part of the proposition, first we note that $C^{\croof}_f(\rho) \leq C^{\op}_f(\rho)$, see Ineq. \eqref{eq:relation_top}.
Moreover, by definition of the top measure, we have 
\begin{equation}
C^{\op}_{f}(\rho)  \leq f(\mudown(\ket{\psi}\bra{\psi}), ~~ \forall \ket{\psi}\bra{\psi} \in \O(\rho) .
\end{equation}
Since $\Ud(\rho)\subseteq\Uo(\rho)$ (see Lemma \ref{lemma:inclusion}), 
we have that  
$\sum_{k= 1}^M \tilde{q}_k \mudown(\ket{\tilde{\psi}_{k}}\bra{\tilde{\psi}_{k}}) \in  \Uo(\rho)$. 
Then, there is $\ket{\tilde{\psi}}\bra{\tilde{\psi}} \in \O(\rho)$, such that $\mudown(\ket{\tilde{\psi}}\bra{\tilde{\psi}})=  \sum_{k= 1}^M \tilde{q}_k \mudown(\ket{\tilde{\psi}_{k}}\bra{\tilde{\psi}_{k}})$. Therefore, 
\begin{equation}
\label{eq:cr_top_igualdad2}
C^{\op}_{f}(\rho) \leq f(\mudown(\ket{\tilde{\psi}}\bra{\tilde{\psi}}) =  f\left( \sum_{k= 1}^M \tilde{q}_k \mudown(\ket{\tilde{\psi}_{k}}\bra{\tilde{\psi}_{k}}) \right) .
\end{equation}
Then, from \eqref{eq:cr_top_igualdad1} and \eqref{eq:cr_top_igualdad2}, we have  $C^{\op}_f(\rho) \leq C^{\croof}_f(\rho)$.  Finally, we conclude that $C^{\op}_f(\rho) = C^{\croof}_f(\rho)$.
\end{proof}

	\begin{repproposition}{prop:exsitence_optimalpsd_roof}
		If there exists an optimal pure state decomposition of $\rho$, i.e., $\vc(\rho) \in \Ud(\rho)$, then
		\begin{equation}
			C^{\cv}_{f_k}(\rho) =  C^{\croof}_{f_k}(\rho),  
		\end{equation}
		with $f_k(u) = 1-\sum^{k-1}_{i=0} u^\downarrow_i \in \F$, for all $k \in \{1, \ldots, d-1\}$.	
	\end{repproposition}
	
	\begin{proof}
		
	On the one hand, let us recall that if $\vc(\rho) \in \Ud(\rho)$, then $S_k(\rho) =s_k(\vc(\rho))$ for all $k \in \{1, \ldots, d-1\}$, where  $S_k(\rho) =\sup_{u\in\Ud(\rho)} s_k(u)$ and $s_j(u) = \sum^{j-1}_{i=0} u_i$, with $u = \left(u_0, \ldots, u_{d-1}\right)$.
	
	On the other hand, we note that $f_k(u) =  1- s_k(u^\downarrow)$. 
	Therefore, 
	\begin{align}
		\label{eq:roof_cv_fk}
		C^{\croof}_{f_k}(\rho) &=1-S_k(\rho) \ \text{and} \\ 
		C_{f_k}^{\cv}(\rho) & =1-s_k(\nu(\rho)).
	\end{align}
	
	Summarizing, $\vc(\rho) \in \Ud(\rho)$ implies $C^{\croof}_{f_k}(\rho) = C_{f_k}^{\cv}(\rho)$ for all $k \in \{1, \ldots, d-1\}$.
	
	\end{proof}


\end{document}